\documentclass[11pt,letterpaper]{article}
\usepackage[utf8]{inputenc}
\usepackage[margin=1.0in]{geometry}
\usepackage{palatino}
\usepackage{caption}
\usepackage{macros}
\usepackage{algorithm}
\usepackage{algorithmic}
\usepackage{booktabs}
\usepackage{bm}
\usepackage{csquotes,fullpage}
\title{Efficient Mass Matrix Estimation with Gaussian Cooling}
\author{Jake Hofgard\thanks{UC Berkeley, \url{hofgard@berkeley.edu}}\and Michael Lindsey\thanks{UC Berkeley and Lawrence Berkeley National Laboratory, \url{lindsey@math.berkeley.edu}}}

\begin{document}
\maketitle 
\begin{abstract}
We introduce an algorithm for efficiently preconditioning log-concave and log-smooth distributions that scales logarithmically with the condition number of the underlying distribution. Based on Gaussian cooling, our multistage method approximately samples from a sequence of well-conditioned distributions to construct a preconditioner for each subsequent stage of the cooling schedule. This method is motivated by existing practical approaches for mass matrix estimation in Markov chain Monte Carlo (MCMC), in which an effective preconditioner for the target distribution is estimated from samples produced during a warm-up period. However, our approach is conceptually distinct from existing approaches and moreover permits theoretical analysis. Our algorithmic framework is flexible, allowing essentially any log-concave sampling algorithm to act as the subroutine within each cooling stage. For instance, using first-order rejection sampling (FORS) as the sampler, the first-order query complexity of our method is $\tilde{O}(d^{4/3} \log \kappa)$ for a large class of log-concave and log-smooth distributions with condition number $\kappa$ and $\tilde{O}(d^{1/3} \log \kappa)$ for translation-invariant distributions. In the translation-invariant setting, we provide numerical experiments for the lattice $\phi^4$ model, a simple lattice field theory.
\end{abstract}


\section{Introduction}
\label{sec:intro}
Given a potential $U : \R^d \to \R$, sampling from the probability distribution specified by the density $\pi(x) \propto \exp(-U(x))$ is a ubiquitous task in scientific computing, arising in generative modeling, molecular dynamics, and a variety of other applications across physics, chemistry, computer science, and computational mathematics. Most modern approaches to this sampling problem involve some form of \textit{Markov chain Monte Carlo} (MCMC), in which a Markov chain that admits (an approximation of) $\pi$ as an invariant distribution is constructed. Given query access to the potential $U$ and, if necessary, its gradient, MCMC practitioners then simulate a prescribed chain, ultimately obtaining a set of approximate samples from $\pi$. Popular examples of such methods include the Metropolis-adjusted Langevin algorithm (MALA), Hamiltonian Monte Carlo (HMC) and its adjusted variant, and ensemble MCMC samplers involving populations of interacting particles. From an algorithmic perspective, the mixing times of MCMC methods are typically bottlenecked by three factors: (i) dimension dependence, (ii) multimodality, and (iii) condition number dependence.

Significant recent progress has been made towards understanding the optimal dimension dependence of MALA \cite{chewi2021mala}, underdamped Langevin Monte Carlo (ULMC) \cite{altschuler2026shifted}, and HMC \cite{zhang2026algorithmic} for sampling from strongly log-concave distributions. Multimodal distributions, which do not fall under the umbrella of (strong) log-concavity, pose many challenges to MCMC methods. While many adaptive methods attempt to address the problem of sampling from multimodal distributions, we do not focus on this setting in our work.

Closely related to the question of optimal condition number dependence is the notion of \textit{affine invariance}, popularized by \cite{Goodman2010}. Loosely speaking, the mixing time of an affine invariant sampler does not depend on affine changes of coordinates, and such samplers should therefore be unfazed by problems with extremely large condition numbers. While the affine invariant ensemble samplers proposed in \cite{Goodman2010}, and more recently, in \cite{chen2025affine}, are distinct from the discretized diffusion processes (e.g., ULMC) discussed above, several algorithms exist that attempt to unify the two approaches. For example, Affine Invariant Langevin Dynamics (ALDI) is an ensemble sampler that encodes affine invariance into a coupled system of SDEs, each of which governs the evolution of a particle according to a preconditioned version of overdamped Langevin dynamics \cite{GarbunoInigo2020}. The entire system of SDEs is coupled by the ensemble's sample covariance, which appears as a preconditioner of the drift term in each individual particle's dynamics. 

The performance of all affine invariant ensemble samplers still depends on the initialization of the ensemble, and a poorly initialized ensemble can lead to degraded performance for ill-conditioned distributions. In particular, the affine invariance guarantees of ALDI and other affine invariant ensemble samplers do not truly apply until the ensemble has mixed, in the sense that the performance of such methods is still impacted by the relative conditioning of the target distribution and the ensemble's initialization. In the worst case, an initial ensemble whose empirical covariance is low-rank or poorly conditioned can severely degrade the performance of an affine invariant sampling algorithm \cite{GarbunoInigo2020}.

In the HMC literature, a linear preconditioner is typically called the \textit{mass matrix} \cite{betancourt2011geometry, betancourt2018conceptual}, and selecting a good mass matrix is of paramount importance in modern implementations of HMC such as Stan \cite{Carpenter2017}. In such software packages, one often estimates the mass matrix from samples during a warmup period. Although it may take a long time to sample along \textit{all} eigendirections specified by the covariance of the target distribution, the mass matrix estimation approach implemented by Stan and other MCMC packages can still work quite well in practice. At least heuristically, the success of this method appears to be explained by its ability to mix along different eigendirections at different timescales. The directions corresponding to the covariance's smallest eigenvalues are explored first (with a very small step size), and during the warmup period, the mass matrix estimate is then updated to accurately reflect the true covariance along larger and larger length scales. In practice, this estimation takes place over a handful of warm-up stages in which the mass matrix is updated several times \cite{Carpenter2017}. Although we are inspired by this multistage approach and view it as a strong practical benchmark, to the best of our knowledge there is no theoretical analysis available that captures its advantages.

In this work, we aim to provide a strong theoretical foundation for mass matrix estimation and a route towards logarithmic condition number dependence for log-concave sampling, at least for distributions that are amenable to linear preconditioning. With an eye towards sampling efficiently from distributions from Bayesian posteriors to lattice field theories, we develop: (i) a natural condition (Equation~\ref{eq:true-conditioning}) that encodes the idea that, in some (a priori unknown) coordinate system, the target distribution is well-conditioned, (ii) an implementable algorithm (Algorithm~\ref{alg:gaussian-cooling}) with logarithmic condition number dependence for mass matrix estimation and sampling in the log-concave setting, and (iii) a detailed theoretical framework (Theorem~\ref{th:gc-query-complexity-general}) for analyzing our algorithm.

\subsection{Contributions}
\label{subsec:contributions}
Suppose that $\pi$ is a distribution on $\R^d$ that admits a density $\pi(x) \propto \exp(-U(x))$ with respect to the Lebesgue measure, where $U : \R^d \to \R$ is a $C^2$ potential. We begin with the standard assumption that the target distribution is $\mu$-strongly log concave and $L$-smooth, in the sense that
\begin{align}
\label{eq:log-concave-smooth}
    \mu I \preceq \nabla^2 U(x) \preceq L I, \quad \forall x \in \R^d
\end{align}
for some $L > \mu > 0$. Even in this well-studied setting, the condition number $\kappa = L / \mu$ may be extremely large, resulting in a computational bottleneck. However, it may be the case that
\begin{align}
\label{eq:true-conditioning}
    \alpha B \preceq \nabla^2 U(x) \preceq B, \quad \forall x \in \R^d
\end{align}
for some positive definite matrix $B \in \R^{d \times d}$ and a parameter $\alpha \in (0, 1)$. Note that any distribution that satisfies Equation~\ref{eq:log-concave-smooth} also satisfies Equation~\ref{eq:true-conditioning} with $B = L I$ and $\alpha = 1 / \kappa$. However, as we shall see in several examples in Section~\ref{sec:num-exps}, it is possible for $\kappa$ to be extremely large but for Equation~\ref{eq:true-conditioning} to hold with $\alpha = \Theta(1)$ for some matrix $B \succ 0$. This discrepancy can arise naturally when a well-conditioned distribution is transformed by an ill-conditioned affine mapping, a setting that affine invariant samplers are designed to address.

After transforming to the coordinate system specified by the matrix $B$ in Equation~\ref{eq:true-conditioning}, to which we do \textit{not} assume access in our algorithms, the issue of a large condition number may be partially or fully alleviated. In particular, we provide an algorithm that, without assuming the structure of the matrix $B$ or the value of the parameter $\alpha$, allows for efficient sampling from distributions whose potential satisfy Equation~\ref{eq:true-conditioning}. On top of assuming that Equation~\ref{eq:log-concave-smooth} and Equation~\ref{eq:true-conditioning} hold for the target distribution $\pi$, we also assume access to $x^* := \argmin_{x \in \R^d} U(x),$ the global minimizer of the target potential. Note that this minimizer is unique due to our strong convexity assumption.

We then define a sequence of perturbed potentials by
\begin{align}
\label{eq:U-k}
    U_k(x) := U(x) + \frac{\gamma^k L}{2} \|x - x^*\|_2^2, \quad k \in \N,
\end{align}
where $\gamma \in (0, 1)$ is a parameter chosen by the user. By construction, $x^*$ is also the unique global minimizer of $U_k$ for each $k \in \N$. At each stage of Gaussian cooling, we aim to sample from the distribution $\pi_k(x) \propto \exp(-U_k(x))$, whose potential approaches that of the target distribution as cooling progresses. Although $U_k$ may still specify a distribution whose condition number is unacceptably large, our scheme uses the sample covariance from each stage of cooling to precondition the next stage. We ensure that, at each stage, the resulting preconditioned distribution $\widetilde{\pi}_k$ is well-conditioned. Once $k$ is sufficiently large, the sample covariance from stage $k$ of Gaussian cooling provides an effective preconditioner for the target potential itself.

Our method does \textit{not} depend on the choice of underlying sampling algorithm. In particular, at each stage, the sampler that produces the samples used to estimate the sample covariance can be drawn from any one of the options described in Appendix~\ref{app:log-concave-complexity}. In Section~\ref{sec:query-complexity}, we first present our theoretical guarantees in a sampler-agnostic manner. Later, we use two explicit choices of sampling algorithm: first-order rejection sampling (FORS) \cite{chen2026exact} yields the best possible query complexity guarantees at the time of writing, while underdamped Langevin Monte Carlo (ULMC) \cite{altschuler-chewi-ulmc} is straightforward to implement and hence our choice of sampler in the experiments in Section~\ref{sec:num-exps}. With this in mind, we may present an informal version of our query complexity result based on the first-order query complexity afforded by FORS.
\begin{theorem}[Informal version of Corollary~\ref{cor:fors-query-complexity}]
\label{th:informal-query-complexity}
Suppose that $U : \R^d \to \R$ satisfies Equation~\ref{eq:log-concave-smooth} and Equation~\ref{eq:true-conditioning}. Then, with high probability, Algorithm~\ref{alg:gaussian-cooling} yields a preconditioner $\widetilde{\Sigma}$ for the Hessian $\nabla^2 U(x)$ of the target potential that satisfies
\begin{align}
\label{eq:informal-precond-guarantee}
    \frac{\alpha}{4} I \preceq \widetilde{\Sigma}^{1/2} \nabla^2 U(x) \widetilde{\Sigma}^{1/2} \preceq \frac{2}{\alpha}I, \quad \forall x \in \R^d
\end{align}
with $\tilde{\Theta}(d^{4/3} \log(\kappa))$ queries to $U$ and its gradient. In particular, the distribution specified by the potential $\widetilde{U}(x) = U(\widetilde{\Sigma}^{1/2}x)$ has condition number $O(1 / \alpha^2)$.
\end{theorem}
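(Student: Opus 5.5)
The plan is an induction over the cooling stages. The invariant at stage $k$ is that the current preconditioner $\widetilde{\Sigma}_{k}$ makes $\widetilde{\pi}_k$ (the pushforward of $\pi_k$ under $x \mapsto \widetilde{\Sigma}_k^{-1/2}x$) well-conditioned with an $O(1)$ condition number. The induction then closes by a concentration argument for the sample covariance.

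\textbf{Base case.} Take $k=0$ and $\widetilde{\Sigma}_0 = L^{-1} I$. Since $\nabla^2 U_0 = \nabla^2 U + L I$ lies between $L I$ and $2L I$, the preconditioned Hessian lies in $[I, 2I]$.

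\textbf{Step 1: the covariance of a well-conditioned distribution is a good preconditioner.} Write $H_k(x) := \nabla^2 U_k(x)$ and $M_k := B + \gamma^k L I$. From Equation~\ref{eq:true-conditioning}, $H_k$ satisfies $\alpha M_k \preceq H_k(x) \preceq M_k$ for all $x$. Brascamp--Lieb gives $\mathrm{Cov}(\pi_k) \preceq (\alpha M_k)^{-1}$. The reverse Cramér--Rao-type bound for log-concave measures, $\mathrm{Cov}(\pi_k) \succeq (\mathbb{E}_{\pi_k} H_k)^{-1}$, gives $\mathrm{Cov}(\pi_k) \succeq M_k^{-1}$. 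Hence $\Sigma_k := \mathrm{Cov}(\pi_k)$ satisfies $M_k^{-1} \preceq \Sigma_k \preceq \alpha^{-1} M_k^{-1}$.

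\textbf{Step 2: the covariance at stage $k$ preconditions stage $k+1$.} Because $M_{k+1} \preceq M_k \preceq \gamma^{-1} M_{k+1}$, the preconditioned Hessian satisfies
\[
\alpha\gamma\, I \preceq \Sigma_k^{1/2} H_{k+1}(x)\, \Sigma_k^{1/2} \preceq \alpha^{-1} I .
\]
This is where the condition number of the next stage stays bounded in terms of $\alpha$ and $\gamma$ only, independent of $\kappa$. I would choose $\gamma = 1/2$.

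\textbf{Step 3: termination.} The perturbation becomes negligible after $K = O(\log \kappa)$ stages. Once $\gamma^K L \le \mu$, we have $\gamma^K L\, I \preceq \nabla^2 U$. Since $\nabla^2 U \preceq B$, this gives $\gamma^K L\, I \preceq B$, so $M_K \preceq 2B$. Therefore $\Sigma_K^{1/2} \nabla^2 U\, \Sigma_K^{1/2}$ lies between $\alpha/2$ and $1/\alpha$ times the identity, and the constants $\alpha/4$ and $2/\alpha$ in Equation~\ref{eq:informal-precond-guarantee} leave a factor of $2$ of slack for estimation error.

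\textbf{Step 4 (main obstacle): estimating $\Sigma_k$ robustly.} We only have approximate samples from $\widetilde{\pi}_k$, and these are $O(1/\alpha)$-conditioned in whitened coordinates. I need the empirical covariance $\widehat{\Sigma}_k$ to satisfy $\tfrac12 \Sigma_k \preceq \widehat{\Sigma}_k \preceq 2\Sigma_k$ with failure probability $\delta/K$. In whitened coordinates, log-concave vectors satisfy covariance estimation in operator norm with $n = \tilde{O}(d)$ samples (Adamczak et al., or the sub-exponential matrix Bernstein inequality). I would also have to handle the sampler's bias: with $W_2$ or TV accuracy $\varepsilon$ relative to the $O(1)$ whitened scale, the covariance is perturbed by $O(\varepsilon)$ in operator norm. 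The delicate point is that the preconditioner used in stage $k+1$ is the estimate $\widehat{\Sigma}_k$ rather than $\Sigma_k$, so the errors must not compound across stages. They do not: each stage's Step 1 bound is with respect to the exact $\Sigma_k$, and the estimate only enters as a factor-$2$ loss in the conditioning of the next stage. A union bound over the $K$ stages then gives the high-probability statement.

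\textbf{Complexity count.} Each stage draws $\tilde{O}(d)$ samples from an $O(1)$-conditioned distribution. FORS costs $\tilde{O}(d^{1/3})$ queries per sample after a warm start. The warm start comes from the previous stage's samples and mode, since $x^*$ is shared by all $U_k$. This gives $\tilde{O}(d^{4/3})$ queries per stage and $\tilde{\Theta}(d^{4/3}\log\kappa)$ over the $K = O(\log \kappa)$ stages. The final claim follows because the bound from Step 3 means $\widetilde{U}(x) = U(\widetilde{\Sigma}^{1/2}x)$ has condition number at most $(2/\alpha)/(\alpha/4) = 8/\alpha^2 = O(1/\alpha^2)$.
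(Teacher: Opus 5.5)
Your Steps 1--3 are correct, including your observation that errors do not compound because each stage's estimate is a relative approximation of the exact $\Sigma_k$. They are essentially the paper's Proposition~\ref{prop:gc-induction}. The paper gets slightly better constants, $\min\{\alpha,\gamma\}$ and $\alpha/(1+\alpha)$ where you have $\alpha\gamma$ and $\alpha/2$, but this is immaterial.

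The gap is Step 4, which is exactly where the paper's technical work lies. Your samples are i.i.d.\ from the sampler's output law $\widehat{\pi}_k$, which is not log-concave. So Adamczak et al.\ or a log-concave matrix Bernstein bound does not apply to them; you need tail control for $\widehat{\pi}_k$ itself, and the bridge you propose does not supply it. TV accuracy $\varepsilon$ does not perturb the covariance by $O(\varepsilon)$: moving mass $\varepsilon$ to distance $R$ changes it by roughly $\varepsilon R^2$, without bound. $W_2$ accuracy does control the population covariance, but not the tails. Shifting an independent random fraction $\eta=\varepsilon^2/(cn)$ of the mass by distance $\sqrt{cn}$ gives a law that is $\varepsilon$-close in $W_2$. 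Yet with probability about $\varepsilon^2/c$, independently of $n$, one shifted sample moves the empirical covariance by about $c$ in that direction. So at constant accuracy the per-stage failure probability does not decay with $n$, and the union bound over the $K$ stages breaks.

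The paper closes this with the R\'enyi-$2$ (equivalently $\chi^2$) guarantee that FORS actually provides. First, Cauchy--Schwarz against $d\widehat{\mu}/d\mu$ transfers the subgaussian tail of the strongly log-concave stage target $\mu$ to the output law $\widehat{\mu}$ (Lemma~\ref{lem:chi-2-to-sg}). After whitening, the samples are subgaussian with a constant depending only on $\alpha,\gamma$, and Theorem~\ref{th:sg-matrices} gives failure probability exponentially small in $n$. Second, the variational formula for $\chi^2$ with test functions $\langle v,x-m\rangle^2$, combined with the log-concave moment bound of Corollary~\ref{cor:log-concave-l4-l2}, gives the relative bound $|v^\top(\mathrm{Cov}(\widehat{\mu})-\mathrm{Cov}(\mu))v|\le(\sqrt{8\varepsilon}+\varepsilon)\,v^\top\mathrm{Cov}(\mu)v$ at constant $\chi^2$ accuracy $\varepsilon$ (Lemma~\ref{lem:chi2-to-precond-1}).

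Alternatively, because FORS is a high-accuracy sampler, you could demand TV error at most $p/(2nK)$ for an extra $O(\log(nK/p))$ factor and couple each output with an exact draw. Then with probability at least $1-p/2$ all $nK$ samples are exact, and standard subgaussian covariance estimation finishes. This is simpler, but unlike the paper's argument it does not extend to low-accuracy samplers such as ULMC. Either way it must be stated, together with a union bound over the random FORS query counts.

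Finally, drop the warm start from the previous stage's samples. FORS depends on its initialization essentially only through $(d+\mathcal{R}_{2q}(\nu\|\pi))^{1/3}$, and the Gaussian start at the preconditioned mode already has divergence $O(d)$ (Lemma~\ref{lem:gaussian-initialization-renyi}). Reusing the stage-$k$ samples from which $\widehat{\Sigma}_k$ was computed would also destroy the conditional independence the concentration step needs.
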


We often anticipate $\alpha = \Theta(1)$ for applications of interest, meaning that Algorithm~\ref{alg:gaussian-cooling} provides an efficient pathway to sampling from a well-conditioned distribution. Moreover, when the target distribution has translation-invariant covariance, the dimension dependence of the query complexity presented in Theorem~\ref{th:informal-query-complexity} can be improved even further. In Section~\ref{sec:num-exps} we consider sampling from a lattice $\phi^4$ field theory, a natural example where translation invariance holds. This application is motivated by recent numerical efforts in mass matrix estimation for more complicated lattice field theories \cite{Lunts2023, Patel2024, Patel2025}. In the translation-invariant setting, the query complexity in Theorem~\ref{th:informal-query-complexity} can be further reduced to $\widetilde{\Theta}(d^{1/3} \log(\kappa))$.

\subsection{Related Work}
\paragraph{Gaussian Cooling, Volume Estimation, and Log-Concave Sampling.} Our algorithm is partially inspired by Gaussian cooling \cite{gaussian-cooling}, an algorithm for estimating the volume of convex bodies. The algorithm samples from a series of distributions, beginning with a sharply concentrated Gaussian that ``cools'' according to a schedule that gradually increases the variance of the Gaussian until the resulting samples are approximately uniform from the target convex body. Although our algorithm is conceptually similar, we borrow little from the analysis and specific algorithm from \cite{gaussian-cooling}. Other works that use the notion of Gaussian cooling to analyze the complexity of volume estimation and log-concave sampling include \cite{Lee2018} and \cite{kook24gaussian}. In a similar vein, our method builds upon a large body of recent work aimed at characterizing the query complexity of (strongly) log-concave sampling, both with respect to the underlying dimension $d$ and the condition number $\kappa$ of the target distribution. We provide a brief, non-exhaustive summary of some of these results in Appendix~\ref{app:log-concave-complexity}, split into two categories: low and high accuracy samplers. For a more in-depth survey on log-concave sampling algorithms and their query complexity, see \cite{chewi-book}.

\paragraph{Affine Invariant Sampling.}
In contrast with the setting of log-concave sampling, the literature on affine invariant sampling places less emphasis on first-order query complexity as the guiding standard for algorithmic performance. In fact, for the ``stretch move'' and other affine invariant ensemble samplers introduced in \cite{Goodman2010}, the question of ergodicity remains open \cite{chen2025affine}. This is not the case for ALDI, for which both non-degeneracy of the empirical covariance used to evolve the ensemble and ergodicity have been established in \cite{GarbunoInigo2020} and for various adjusted versions of ALDI as in \cite{Sprungk2025}. A number of techniques also attempt to circumvent condition number dependence by instead discretizing a diffusion process on an appropriate Riemannian manifold \cite{Girolami2011}. Riemannian HMC, for example, yields a state-of-the-art algorithm for polytope volume computation \cite{Lee2018} and other constrained sampling problems \cite{kook23rhmc}. Similarly, the mirror Langevin algorithm presents an alternative route towards condition number-free sampling \cite{ahn2021efficient, gatmiry2022convergence}. Although Riemannian HMC and mirror Langevin come equipped with compelling theoretical guarantees in certain settings, we focus on samplers whose state space is equipped with the Euclidean metric.

\paragraph{Learning Linear Preconditioners.}
Several recent results address the question of efficiently learning a preconditioner \textit{prior} to running a high-accuracy sampler. For example, \cite{hird2025quantifying} establishes that there exist distributions with arbitrarily large condition number $\kappa$ such that no linear preconditioner can improve the conditioning of the sampling problem, akin to the example we provide in Appendix~\ref{app:cautionary-tale}. \cite{hird2025quantifying} then characterizes distributions for which preconditioning with a diagonal preconditioner, the Fisher information, or the sample covariance can reduce the condition number of the underlying problem. 
On the other hand, the authors show that preconditioning with a diagonal covariance can sometimes \textit{increase} the condition number of the problem, even for Gaussian targets. 
Similarly, \cite{hird2026nonasymptotic} quantify the complexity of estimating the sample covariance and Fisher information directly from MCMC samples.

This work is perhaps most conceptually similar to our method, and at a technical level, some of the tools used in \cite{hird2026nonasymptotic} are quite similar to the tools that we use in Appendix~\ref{app:tech-proofs}. However, our algorithmic approach is fundamentally different from that of \cite{hird2026nonasymptotic}, which focuses on learning preconditioners directly from samples during warm-up. Moreover, the results in \cite{hird2026nonasymptotic} require polynomial dependence on the condition number of the underlying distribution in order to estimate a good preconditioner. For the class of distributions that satisfy Equation~\ref{eq:true-conditioning} with $\alpha = \Theta(1)$, our method reduces this dependence to logarithmic rather than polynomial. Other recent work aimed at rigorously describing the complexity of preconditioner estimation includes \cite{kook2026covariance}, where the authors establish an $\tilde{O}\left(\frac{d(\Tr(\Sigma) + C_{\mathrm{PI}}}{\varepsilon^2 \Tr(\Sigma)}\right)$ sample complexity estimate for estimating the covariance $\Sigma$ from a sampler initialized at its stationary distribution, assuming that the Markov kernel of the sampler has Poincaré constant $C_{\mathrm{PI}} > 0$.

\paragraph{Sampling from Lattice Field Theories.}
Our work is partially motivated by the goal of efficiently sampling from lattice field theories, a difficult problem with consequences in condensed matter and high-energy physics. This direction has been explored in some detail in \cite{Lunts2023}, which uses HMC to sample from the $O(3)$ spin-fermion model. Although the action for this model and the exact application of HMC are quite complicated, the authors crucially leverage translation invariance to accelerate HMC, estimating the mass matrix in Fourier space. We draw upon this idea in our application to lattice $\phi^4$ field theory in Section~\ref{subsec:lattice-phi4}. A subset of the same group of authors continued this line of work in \cite{Patel2024} and \cite{Patel2025}, utilizing HMC in a similar manner to investigate other spin-boson and spin-fermion models. We emphasize that while our theoretical guarantees are limited to the setting of strong log-concavity, we hope that our algorithm itself may see practical use in more complicated settings such as the above.

\section{Gaussian Cooling Algorithm}
\label{sec:algorithm}
In this section, we describe our primary algorithmic contribution in detail. Although the structure of the Gaussian cooling algorithm is independent of the choice of sampler used at each stage, its query complexity is dependent on the choice of sampler for each stage of Gaussian cooling. Before presenting two concrete choices for the sampler, we present a general analysis that holds for a broad class of log-concave sampling algorithms, including both low-accuracy samplers (Table~\ref{tab:low}) and high-accuracy samplers (Table~\ref{tab:high}).

Given an evaluation and gradient oracle for the target potential $U : \R^d \to \R$, one can form an evaluation and gradient oracle for any potential of the form in Equation~\ref{eq:U-k}. In fact, if $M \succ 0$ is any positive definite matrix, then we can also form an evaluation and gradient oracle for the augmented potential $x \mapsto U_k(M^{1/2} x)$.
At each stage of Gaussian cooling, we will query such an oracle, where $M$ is the preconditioner formed from samples at the previous stage of cooling. Below, $\widehat{S}_n(x_1, \ldots, x_n)$ is the (biased) sample covariance formed from $n$ independent samples; see Appendix~\ref{app:tech-assumps}. Finally, we write $\mathcal{A}(V, x^*, N, n)$ to denote a call to a generic log-concave sampling algorithm $\mathcal{A}$ that takes as input a potential $V$, a global minimizer $x^*$, the number of steps $N$ to take before returning a sample, and the total number of samples $n$ to produce in parallel.

\begin{algorithm}
\caption{Preconditioning via Gaussian Cooling}
\label{alg:gaussian-cooling}
\begin{algorithmic}[1]
\REQUIRE Sampling algorithm $\mathcal{A}$, smoothness parameter $L > 0$, global minimizer $x^* \in \R^d$, potential $U : \R^d \to \R$ with evaluation and gradient oracle, number of Gaussian cooling stages $K$, number of samples per stage $n$, number of sampler steps per sample $N$.
\ENSURE A preconditioner $\widetilde{\Sigma}_n^{(K)}$ satisfying Equation~\ref{eq:final-precond-guarantee} with high probability.
\STATE $\widetilde{\Sigma}_{n}^{(-1)} \gets L^{-1} I$
\FOR{$k = 0, \ldots, K$}
    \STATE Form $\widetilde{U}_k(x) = U_k((\widetilde{\Sigma}_n^{(k-1)})^{1/2} x)$
    \STATE $x_1^{(k)}, \ldots, x_n^{(k)} \gets \mathcal{A}\left(\widetilde{U}_k, (\widetilde{\Sigma}_n^{(k-1)})^{-1/2} x^*, N, n\right)$
    \STATE $\widehat{\Sigma}_n^{(k)} \gets \widehat{S}_n(x_1^{(k)}, \ldots, x_n^{(k)})$
    \STATE $\widetilde{\Sigma}_n^{(k)} \gets (\widetilde{\Sigma}_n^{(k-1)})^{1/2} \widehat{\Sigma}_n^{(k)}(\widetilde{\Sigma}_n^{(k-1)})^{1/2}$
\ENDFOR
\RETURN $\widetilde{\Sigma}_n^{(K)}$
\end{algorithmic}
\end{algorithm}

As written above, Algorithm~\ref{alg:gaussian-cooling} returns a preconditioner $\widetilde{\Sigma}_n^{(K)}$ for the target distribution in a sense made precise below. If the desired use case is mass matrix tuning in HMC, for example, this output is all that is needed. Of course, if the user's end goal is a high accuracy sample from the target distribution, one additional sampling step is needed.

Recall that we successively sample from distributions $\pi(x) \propto \exp(-U_k(x))$, with potentials $U_k$ defined in Equation~\ref{eq:U-k}. Throughout, we let $\pi_k(x) \propto \exp(-U_k(x))$ be the corresponding distribution and take $\Sigma^{(k)} := \covar(\pi_k)$. We first reduce the analysis of Algorithm~\ref{alg:gaussian-cooling} to establishing an appropriate concentration guarantee for the preconditioner produced at each stage of cooling.

\begin{proposition}
\label{prop:gc-induction}
Suppose that, at each stage of Algorithm~\ref{alg:gaussian-cooling}, the sampling algorithm $\mathcal{A}$ outputs a preconditioner $\widetilde{\Sigma}_n^{(k-1)}$ satisfying
\begin{align}
\label{eq:sample-covar-requirement}
    (1 + \delta)^{-1} \Sigma^{(k-1)} \preceq \widetilde{\Sigma}_n^{(k-1)} \preceq (1 + \delta) \Sigma^{(k-1)},
\end{align}
for some $\delta > 0$. Then, after $K = O\left(\frac{\log(\kappa)}{\log(1 / \gamma)}\right)$ stages of Gaussian cooling, Algorithm~\ref{alg:gaussian-cooling} will yield a preconditioner that satisfies
\begin{align}
\label{eq:final-precond-guarantee}
\frac{\alpha}{(1 + \alpha)(1 + \delta)} I \preceq (\widetilde{\Sigma}_{n}^{(K)})^{1/2} \nabla^2 U(x) (\widetilde{\Sigma}_{n}^{(K)})^{1/2} \preceq \frac{1 + \delta}{\alpha} I.
\end{align}
\end{proposition}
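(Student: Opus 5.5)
The plan is to use only the hypothesis at the final stage, i.e.\ Equation~\ref{eq:sample-covar-requirement} with $k-1=K$. The idea is to sandwich $\Sigma^{(K)}=\covar(\pi_K)$ between two explicit matrices built from $B$ and the shift $c:=\gamma^K L$. The earlier stages matter only in that they are assumed to supply this guarantee. From Equation~\ref{eq:true-conditioning} and Equation~\ref{eq:U-k},
\begin{align*}
\alpha B + cI \preceq \nabla^2 U_K(x) \preceq B + cI, \quad \forall x\in\R^d .
\end{align*}

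For the upper bound on $\Sigma^{(K)}$ I would invoke the Brascamp--Lieb inequality, $\covar(\pi_K)\preceq \E_{\pi_K}[(\nabla^2U_K)^{-1}]$. Since matrix inversion is operator monotone decreasing, this gives $\Sigma^{(K)}\preceq(\alpha B+cI)^{-1}$. For the lower bound I would use the Cram\'er--Rao type inequality for log-concave measures, $\covar(\pi_K)\succeq(\E_{\pi_K}[\nabla^2U_K])^{-1}$, which gives $\Sigma^{(K)}\succeq(B+cI)^{-1}$. Combining with the hypothesis yields
\begin{align*}
(1+\delta)^{-1}(B+cI)^{-1}\preceq \widetilde{\Sigma}_n^{(K)}\preceq (1+\delta)(\alpha B+cI)^{-1}.
\end{align*}

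Next, fix $x$ and write $H=\nabla^2U(x)$. The matrices $\widetilde{\Sigma}^{1/2}H\widetilde{\Sigma}^{1/2}$ and $H^{1/2}\widetilde{\Sigma}H^{1/2}$ have the same eigenvalues, and congruence preserves the Loewner order. So it suffices to bound $H^{1/2}\widetilde{\Sigma}H^{1/2}$.

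For the upper bound, use $\widetilde{\Sigma}\preceq(1+\delta)(\alpha B+cI)^{-1}$. The matrix $H^{1/2}(\alpha B+cI)^{-1}H^{1/2}$ has the same spectrum as $(\alpha B+cI)^{-1/2}H(\alpha B+cI)^{-1/2}$. This is at most $(\alpha B+cI)^{-1/2}B(\alpha B+cI)^{-1/2}$, because $H\preceq B$. Since $B$ and $cI$ commute, the last matrix is $\preceq \alpha^{-1}I$, giving the factor $(1+\delta)/\alpha$.

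For the lower bound, the same reduction shows it suffices to prove $H\succeq \frac{\alpha}{1+\alpha}(B+cI)$. Write $H=\frac{1}{1+\alpha}H+\frac{\alpha}{1+\alpha}H$ and use $H\succeq\alpha B$ on the first piece and $H\succeq\mu I$ on the second. This gives $H\succeq\frac{\alpha}{1+\alpha}(B+\mu I)$, which dominates $\frac{\alpha}{1+\alpha}(B+cI)$ as soon as $c=\gamma^KL\le\mu$, i.e.\ as soon as $K\ge \log\kappa/\log(1/\gamma)$. This fixes $K=O(\log(\kappa)/\log(1/\gamma))$ and yields the factor $\frac{\alpha}{(1+\alpha)(1+\delta)}$.

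The main obstacle is the covariance sandwich in the second paragraph, specifically citing the right versions of Brascamp--Lieb and of the reverse (Cram\'er--Rao) bound for the non-Gaussian $\pi_K$. The remaining steps are Loewner-order manipulations, in which the only care needed is to avoid multiplying non-commuting matrices.
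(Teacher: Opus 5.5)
Your proposal is correct and is essentially the paper's argument. The paper likewise uses Corollary~\ref{cor:sandwich} to get $\alpha B + \gamma^K L I \preceq (\Sigma^{(K)})^{-1} \preceq B + \gamma^K L I$; with $\gamma^K L \le \mu$ it then derives $\alpha \nabla^2 U(x) \preceq (\Sigma^{(K)})^{-1} \preceq \frac{1+\alpha}{\alpha}\nabla^2 U(x)$ (the same bound as your splitting of $H$) and combines this with the stage-$K$ hypothesis. The only differences are cosmetic: the paper writes $\frac{\alpha}{(1+\alpha)(1+\delta)}(\widetilde{\Sigma}_n^{(K)})^{-1} \preceq \nabla^2 U(x) \preceq \frac{1+\delta}{\alpha}(\widetilde{\Sigma}_n^{(K)})^{-1}$ and conjugates directly, so your detour through similar matrices is unnecessary, and the paper's extra inductive check that each $\widetilde{U}_k$ is well-conditioned is not needed for the statement as posed.
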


Due to Proposition~\ref{prop:gc-induction}, whose proof is in Appendix~\ref{app:query-comp-proofs}, our task reduces to showing that $\mathcal{A}$ can, with high probability, ensure that Equation~\ref{eq:sample-covar-requirement} holds at every stage of Gaussian cooling.

\section{Query Complexity of Gaussian Cooling}
\label{sec:query-complexity}

We place the following assumption on the sampling algorithm $\mathcal{A}$. For specific choices of $\mathcal{A}$ such as ULMC or FORS, the parameters in the following assumption are made explicit.

\begin{assumption}
\label{asmp:sampling-algorithm}
We say $\mathcal{A}$ satisfies a low accuracy guarantee if, given a target distribution $\pi(x) \propto \exp(-U(x))$ satisfying Equation~\ref{eq:log-concave-smooth} and with global minimizer $x^*$, there exist constants $c_1, c_2, c_3 > 0$ such that for any $\varepsilon > 0$, the sampling algorithm $\mathcal{A} = \mathcal{A}(U, x^*, N, 1)$ can make at most
\begin{align}
\label{eq:low-accuracy-cond}
    N = \widetilde{O}\left(\kappa^{c_1} d^{c_2} \varepsilon^{-c_3}\right)
\end{align}
queries to $U$ and its gradient and output a sample with distribution $\widehat{\pi}_N$ that satisfies $ \sR_2(\widehat{\pi}_N \| \pi) \leq \varepsilon^2$,
where $\sR_2(\cdot \| \cdot)$ is the Rényi $2$-divergence defined in Appendix~\ref{app:tech-assumps}.
If we can instead take
\begin{align}
\label{eq:high-accuracy-cond}
    N = \widetilde{O}\left(\kappa^{c_1} d^{c_2} (\log(\varepsilon^{-1}))^{c_3} + \log(1 / \eta)\right),
\end{align}
to obtain the same guarantee in Rényi-$2$ divergence with probability at least $1 - \eta$, we say that $\mathcal{A}$ satisfies a high accuracy guarantee.
\end{assumption}

At each stage of Gaussian cooling, a sampling algorithm that satisfies Assumption~\ref{asmp:sampling-algorithm} maintains Equation~\ref{eq:sample-covar-requirement} across all stages of cooling, with high probability.

\begin{theorem}
\label{th:gc-query-complexity-general}
Fix $p \in (0, 1)$ and suppose that $\mathcal{A}$ satisfies Assumption~\ref{asmp:sampling-algorithm}. Take $K$ as in Proposition~\ref{prop:gc-induction} and
\begin{align*}
    n = \Theta\left(\frac{d + \log(K / p)}{\widetilde{\alpha}^4 \delta^2}\right)
\end{align*}
samples per stage, where $\widetilde{\alpha} := \min\{\alpha, \gamma\}$. Then, with probability at least $1 - p$, Algorithm~\ref{alg:gaussian-cooling} returns a preconditioner $\widetilde{\Sigma}_n^{(K)}$ satisfying Equation~\ref{eq:final-precond-guarantee}, and each of the $nK$ calls to $\mathcal{A}$ uses at most
\begin{align*}
    N = \begin{cases}
    \widetilde{\Theta}\left(\left(\widetilde{\alpha}\right)^{-2 c_1} d^{c_2} \delta^{-c_3}\right),
    & \mathcal{A} \text{ is a low accuracy sampler}, \\
    \widetilde{\Theta}\left(\left(\widetilde{\alpha}\right)^{-2 c_1}d^{c_2} (\log(\delta^{-1}))^{c_3}\right) + O\left(\log(nK/p)\right),
    & \mathcal{A} \text{ is a high accuracy sampler}
    \end{cases}
\end{align*}
queries to $U$ and its gradient oracle.
\end{theorem}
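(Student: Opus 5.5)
We argue by induction over the stages, using Proposition~\ref{prop:gc-induction} to turn the per-stage guarantee (Equation~\ref{eq:sample-covar-requirement}) into the final bound (Equation~\ref{eq:final-precond-guarantee}). All that remains is to show that, with probability at least $1 - p/(K+1)$, stage $k$ produces $\widetilde{\Sigma}_n^{(k)}$ satisfying Equation~\ref{eq:sample-covar-requirement}, given that $\widetilde{\Sigma}_n^{(k-1)}$ satisfies it. A union bound over the $K+1$ stages then yields probability at least $1-p$.

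\textbf{Step 1: each preconditioned target is well-conditioned.} Assume $\widetilde{\Sigma}_n^{(k-1)}$ approximates $\Sigma^{(k-1)}$ within a factor $1+\delta$ with $\delta = O(1)$. The Hessian of $\widetilde{U}_k$ is $(\widetilde{\Sigma}_n^{(k-1)})^{1/2}\nabla^2 U_k (\widetilde{\Sigma}_n^{(k-1)})^{1/2}$, and we need its condition number to be $O(\widetilde{\alpha}^{-2})$. The key tool is a Brascamp--Lieb / Cramér--Rao sandwich for strongly log-concave measures. If $A \preceq \nabla^2 U_{k-1} \preceq C$, then $C^{-1} \preceq \Sigma^{(k-1)} \preceq A^{-1}$.

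Writing $H_j := B + \gamma^j L I$, Equation~\ref{eq:true-conditioning} gives $\alpha B + \gamma^{k-1}LI \preceq \nabla^2 U_{k-1} \preceq H_{k-1}$. Hence $\alpha H_{k-1} \preceq \nabla^2 U_{k-1} \preceq H_{k-1}$, and so $H_{k-1}^{-1} \preceq \Sigma^{(k-1)} \preceq \alpha^{-1}H_{k-1}^{-1}$.

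Stage $k$ similarly satisfies $\alpha H_k \preceq \nabla^2 U_k \preceq H_k$. Moreover $\gamma H_{k-1} \preceq H_k \preceq H_{k-1}$, because $B \succeq \gamma B$ and $\gamma^k L = \gamma \cdot \gamma^{k-1}L$. These matrices are all functions of $B$, so they commute. Chaining the inequalities gives that the preconditioned Hessian lies between $\frac{\alpha\gamma}{1+\delta} I$ and $\frac{1+\delta}{\alpha} I$, a condition number of $O(\widetilde{\alpha}^{-2})$.

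The base case $k=0$ works the same way: $\widetilde{\Sigma}^{(-1)} = L^{-1}I$, and $L I \preceq \nabla^2 U_0 \preceq 2LI$. Plugging $\kappa \to \widetilde{\alpha}^{-2}$ into Assumption~\ref{asmp:sampling-algorithm} gives the stated per-sample cost $N$. For high-accuracy samplers we set the per-call failure probability to $\eta = p/(2nK)$, which accounts for the additive $\log(nK/p)$ term.

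\textbf{Step 2: concentration of the sample covariance.} Work in the whitened coordinates of stage $k$. The exact target $\widetilde{\pi}_k$ has covariance $\widetilde{\Sigma}^{(k)}_\star := (\widetilde{\Sigma}_n^{(k-1)})^{-1/2}\Sigma^{(k)}(\widetilde{\Sigma}_n^{(k-1)})^{-1/2}$. By Step 1, its eigenvalues lie in $[\Omega(\alpha), O(1/(\widetilde{\alpha}))]$, so it has condition number $O(\widetilde{\alpha}^{-2})$. Equation~\ref{eq:sample-covar-requirement} for stage $k$ is equivalent to
\[
\|(\widetilde{\Sigma}^{(k)}_\star)^{-1/2}\widehat{\Sigma}_n^{(k)}(\widetilde{\Sigma}^{(k)}_\star)^{-1/2} - I\|_{\mathrm{op}} \lesssim \delta .
\]
This relative error bound involves two pieces.
\begin{itemize}
\item[(a)] \emph{Sampling bias.} The samples come from $\widehat{\pi}_N$ rather than $\widetilde{\pi}_k$. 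By the Rényi-2 bound, for any unit $v$ we have $|\E_{\widehat{\pi}_N} f - \E_{\widetilde{\pi}_k} f| \le \sqrt{e^{\varepsilon^2}-1}\,\sqrt{\mathrm{Var}_{\widetilde{\pi}_k} f}$. Taking $f = \langle v, x - m\rangle^2$ and using log-concave moment bounds ($\mathrm{Var}\, f \lesssim (v^\top\Sigma v)^2$), the covariance is perturbed by a relative error of $O(\varepsilon)$ in every direction. The mean shift is controlled in the same way. Choosing $\varepsilon \asymp \delta$ suffices.
\item[(b)] \emph{Statistical fluctuation.} The law $\widehat{\pi}_N$ is not log-concave, so we transfer sub-exponential tail bounds through a change of measure. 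For an event $E$, Cauchy--Schwarz gives $\widehat{\pi}_N(E) \le \sqrt{(1+\chi^2)\,\widetilde{\pi}_k(E)}$, which preserves sub-exponential tails of $\langle v, x\rangle$ up to constants.
\end{itemize}

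For (b) we then apply a covariance estimation bound for sub-exponential vectors, via a net argument plus Bernstein's inequality, after normalizing by $\widetilde{\Sigma}^{(k)}_\star$. Because the normalizing covariance is only known up to the conditioning factor $\widetilde{\alpha}^{-2}$, the relative-error requirement produces the $\widetilde{\alpha}^{-4}$ loss. This gives the sample size $n \asymp (d + \log(K/p))/(\widetilde{\alpha}^4\delta^2)$.

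\textbf{Main obstacle.} The hardest step is (b) combined with (a): obtaining a relative (multiplicative) operator-norm bound when the samples are only approximately log-concave. I would first try to prove a direction-wise Bernstein bound under $\widehat{\pi}_N$ via the change of measure above, then use an $\varepsilon$-net over the sphere. The concern is that a plain net argument may lose a factor in $d$; in that case I would fall back on a matrix Bernstein inequality with truncation at radius $O(\sqrt{d\log(n/p)})$.
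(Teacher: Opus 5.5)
Your architecture matches the paper's: induction through Proposition~\ref{prop:gc-induction}, the variational $\chi^2$ bound with the log-concave $L^4$--$L^2$ comparison for the population covariance of the sampler's output (your (a) is Lemma~\ref{lem:chi2-to-precond-1}), a Cauchy--Schwarz tail transfer, covariance concentration after whitening, and union bounds with $\eta = p/(2nK)$. However, Step 1 as written does not deliver the conditioning you claim. The chain $\nabla^2 U_k \succeq \alpha H_k \succeq \alpha\gamma H_{k-1}$ gives the window $[\alpha\gamma/(1+\delta), (1+\delta)/\alpha]$. Its ratio is $(1+\delta)^2/(\alpha^2\gamma)$, which for $\gamma = \alpha$ equals $(1+\delta)^2\widetilde\alpha^{-3}$, not $O(\widetilde\alpha^{-2})$. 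Propagating it would give $N \propto \widetilde\alpha^{-3c_1}$ and a worse power of $\widetilde\alpha$ in $n$. The loss comes from relaxing $\alpha B + \gamma^k L I$ to $\alpha H_k$, which needlessly multiplies the identity term by $\alpha$ before you lose a further $\gamma$ on the $B$ term. Bound directly instead: $\nabla^2 U_k \succeq \alpha B + \gamma\cdot\gamma^{k-1} L I \succeq \widetilde\alpha(B + \gamma^{k-1}LI) = \widetilde\alpha H_{k-1}$, since $\alpha, \gamma \geq \widetilde\alpha$. This gives the lower bound $\widetilde\alpha/(1+\delta)$ and condition number $(1+\delta)^2/(\alpha\widetilde\alpha) \leq (1+\delta)^2/\widetilde\alpha^2$. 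That is what the paper uses, and what your Step 2 eigenvalue range $[\Omega(\alpha), O(1/\widetilde\alpha)]$ implicitly assumes. (Commutativity plays no role; Loewner-order chaining suffices, and $\nabla^2 U_k$ and $\widetilde\Sigma_n^{(k-1)}$ do not commute with $B$ anyway.)

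The second gap is in (b), and it is exactly your stated main obstacle. Transferring only sub-exponential tails is too weak for a net-plus-Bernstein argument. If $\langle v, Z\rangle$ is merely sub-exponential, then $\langle v, Z\rangle^2$ has only $\psi_{1/2}$ tails, Bernstein does not apply, and a plain union bound over a $5^d$-net loses roughly a factor of $d$. Your truncated matrix-Bernstein fallback introduces an extra $\log d$, so neither route reproduces $n = \Theta((d + \log(K/p))/(\widetilde\alpha^4\delta^2))$.

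The missing observation is that each stage target $\widetilde\pi_k$ is $\widetilde\alpha/(1+\delta)$-strongly log-concave. Its centered one-dimensional marginals are therefore subgaussian with variance proxy $O(1/\widetilde\alpha)$. Your Cauchy--Schwarz transfer, combined with the mean-shift bound from (a), preserves subgaussianity with the exponent halved (Lemma~\ref{lem:chi-2-to-sg}). Now whiten by the true covariance of the sampler's output, whose smallest eigenvalue is $\Omega(\alpha)$ by (a) and Cram\'er--Rao. The resulting rows are isotropic and $\rho$-subgaussian with $\rho^2 = O(1/(\alpha\widetilde\alpha))$, i.e., of the order of the condition number. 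Theorem~\ref{th:sg-matrices}, itself a net-plus-Bernstein argument, then needs $n \geq C\rho^4(d + \log(K/p))/\delta^2$ with no loss in $d$. This $\rho^4 = O(\widetilde\alpha^{-4})$ is the actual source of the $\widetilde\alpha^{-4}$. Finally, (b) omits the rank-one empirical-mean correction $(\widehat m_n - \widehat m)(\widehat m_n - \widehat m)^\top$ in the biased sample covariance, where $\widehat m_n$ is the sample mean and $\widehat m$ the mean of the sampler's output. It is handled by a subgaussian net bound on the whitened mean error, as in Lemma~\ref{lem:norm-eps-net}.
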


The proof of this theorem leverages the method developed in Appendix~\ref{app:tech-proofs} to establish that Equation~\ref{eq:sample-covar-requirement} holds at each stage of cooling with the given choice of parameters. Once this is established, Proposition~\ref{prop:gc-induction} yields the desired preconditioning guarantee, and a union bound ensures that the failure probability of Gaussian cooling across all $K$ stages remains controlled.

\subsection{Translation-Invariant Case}
\label{subsec:translation-invariant}
In many applications of interest, the target covariance will be \textit{translation-invariant}, in the sense that $\Sigma$ has block circulant structure. This is the case in the lattice $\phi^4$ model that we consider in Section~\ref{sec:num-exps} and for more complicated lattice field theories \cite{Lunts2023}. Leaving the exact details of the lattice $\phi^4$ model to the next section, we simply note that if $\Sigma \in \R^{d \times d}$ is block circulant with $\sqrt{d} \times \sqrt{d}$ circulant blocks, then it is exactly diagonalized by a discrete Fourier transform (DFT), which we denote by $\sF \in \C^{d \times d}$. In particular, there exists a vector $\sigma \in \R^{d}$ such that $\Sigma = \sF \diag(\sigma) \sF^*$, where $\sF^*$ denotes the adjoint of the discrete Fourier transform. Thus, in the translation-invariant case, our task is reduced from estimating the covariance matrix to estimating the vector $\sigma \in \R^{d}$. In this setting, we replace Line 5 of Algorithm~\ref{alg:gaussian-cooling} with
\begin{align}
\label{eq:translation-invariant-alg}
&\widehat{\sigma}^{(k)} = \frac{1}{n} \sum_{i=1}^n \sF(x_i^{(k)} - \widehat{m}_n(x_1^{(k)}, \ldots, x_n^{(k)})) \odot \overline{\sF(x_i^{(k)} - \widehat{m}_n(x_1^{(k)}, \ldots, x_n^{(k)}))}, \\
&\widehat{\Sigma}^{(k)}_n = \sF \diag(\widehat{\sigma}^{(k)}) \sF^*,
\end{align}
where $\widehat{m}_n(x_1, \ldots, x_n)$ is the sample mean.
With this change in place, Corollary~\ref{cor:translation-invariant-union-bound} then yields the following query complexity guarantee.
\begin{theorem}
\label{th:query-complexity-translation-invariant}
Taking $n$ as in Theorem~\ref{th:gc-query-complexity-general}, with $d + \log(K / p)$ replaced by $\log(d K / p)$, and $N$ and $K$ as above, Algorithm~\ref{alg:gaussian-cooling} returns a preconditioner $\widetilde{\Sigma}_n^{(K)}$ for the target distribution satisfying Equation~\ref{eq:final-precond-guarantee} with probability at least $1 - p$.
\end{theorem}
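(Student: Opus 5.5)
The proof will have the same structure as that of Theorem~\ref{th:gc-query-complexity-general}; only the per-stage concentration step changes. By Proposition~\ref{prop:gc-induction}, it suffices to show that, with probability at least $1-p$, Equation~\ref{eq:sample-covar-requirement} holds at every stage $k = 0,\ldots,K$. We argue by induction on $k$.

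\textbf{Step 1: conditioning of the preconditioned target.} Assume Equation~\ref{eq:sample-covar-requirement} holds at stage $k-1$. Exactly as in the proof of Theorem~\ref{th:gc-query-complexity-general}, this implies that the preconditioned target $\widetilde{\pi}_k \propto \exp(-\widetilde{U}_k)$ has condition number $O(\widetilde{\alpha}^{-2})$. The reason is that $\Sigma^{(k-1)}$ is comparable to $(\nabla^2 U + \gamma^{k-1} L I)^{-1}$ up to factors of $\alpha$, and the extra regularization changes by at most a factor $\gamma$. The sampler $\mathcal{A}$ therefore needs the same $N$ as before, and its output laws $\widehat{\pi}$ satisfy $\sR_2(\widehat{\pi}\|\widetilde{\pi}_k)\le \delta^2$ (up to constants), with the union bound over the $nK$ calls handling the high-accuracy failure events. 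The one new ingredient is that, under the translation-invariant hypothesis, $\widetilde{\pi}_k$ has a circulant covariance. This holds because $\widetilde{\Sigma}_n^{(k-1)}$ is Fourier-diagonal by construction (by induction, starting from $L^{-1} I$) and $\Sigma^{(k)}$ is circulant. Consequently $\covar(\widetilde{\pi}_k) = \sF\diag(\tau)\sF^*$ for some $\tau\in\R^d$, with each $\tau_j$ lying within $[\widetilde{\alpha}^{2}, \widetilde{\alpha}^{-2}]$ up to constants.

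\textbf{Step 2: concentration in Fourier space.} We invoke Corollary~\ref{cor:translation-invariant-union-bound}. Each Fourier coordinate $\langle f_j, x - m\rangle$ is a one-dimensional linear statistic of a strongly log-concave law with condition number $O(\widetilde\alpha^{-2})$, so it is subexponential. The Rényi-2 closeness then transfers moment and tail bounds from $\widetilde{\pi}_k$ to $\widehat{\pi}$ via a change of measure, with a bias of size $O(\delta)$ in the second moments. A Bernstein bound for each fixed $j$ gives $|\widehat\sigma_j/\tau_j - 1| \le \delta$ with failure probability $\exp(-c\,n\widetilde\alpha^4\delta^2)$. A union bound over the $d$ frequencies and $K$ stages then requires only $n\gtrsim \log(dK/p)/(\widetilde\alpha^4\delta^2)$, rather than the $d + \log(K/p)$ that the $\varepsilon$-net over the sphere demanded in the general case. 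The sample-mean correction contributes a lower-order term, controlled by the same scalar concentration.

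\textbf{Step 3: closing the induction.} Since all matrices involved are simultaneously diagonalized by $\sF$, coordinatewise relative accuracy gives exactly $(1+\delta)^{-1}\covar(\widetilde\pi_k)\preceq\widehat\Sigma_n^{(k)}\preceq(1+\delta)\covar(\widetilde\pi_k)$. Conjugating by $(\widetilde\Sigma_n^{(k-1)})^{1/2}$ turns this into Equation~\ref{eq:sample-covar-requirement} at stage $k$. Proposition~\ref{prop:gc-induction} then finishes the proof.

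The main obstacle is Step 1's structural claim that the law being sampled is exactly circulant, so that the Fourier-diagonal estimator is unbiased for the right object. This requires translation invariance of $U$ itself (not merely of $\Sigma$) together with the choice of an $x^*$ that is itself a translation-invariant configuration, so that the quadratic perturbation preserves the symmetry. Where the paper's hypothesis is stated only at the level of the covariance, I would verify that this is all Corollary~\ref{cor:translation-invariant-union-bound} consumes.
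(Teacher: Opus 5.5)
Your proposal is correct and is essentially the paper's argument. It runs the same induction through Proposition~\ref{prop:gc-induction}, replaces the general sample-covariance concentration with the Fourier-diagonal reduction and Corollary~\ref{cor:translation-invariant-union-bound}, and union-bounds over the $d$ frequencies and $K$ stages so that $d + \log(K/p)$ becomes $\log(dK/p)$.

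The circulance issue you flag is genuine, and the paper leaves it implicit. The corollary is applied to $\widetilde{\pi}_k$ at every stage, so every $\Sigma^{(k)}$, not just the target covariance, must be circulant. Shift-invariance of $U$ settles this. Uniqueness of the minimizer of a strongly convex shift-invariant potential forces $x^*$ to be a constant configuration, so $x^*$ needs no separate choice. Then each $U_k$ is shift-invariant, so each $\Sigma^{(k)}$ is circulant, as in Lemma~\ref{lem:block-circulance}. Finally, $\covar(\widetilde{\pi}_k)$ is Fourier-diagonal because every $\widetilde{\Sigma}_n^{(k-1)}$ is.
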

The above result follows from essentially the same reasoning as the argument preceding Theorem~\ref{th:gc-query-complexity-general}. In particular, the total number of Gaussian cooling stages $k$ and the number of first-order queries per stage $N$ remain unchanged. In fact, the only change in the analysis is the fact that, as in Corollary~\ref{cor:translation-invariant-union-bound}, we can replace all instances of $d^{c_2}(d + \log(1/p))$ that arise in the general case with $d^{c_2} \log(d / p)$. The factor $d^{c_2}$ comes from the number of steps (per sample) at each stage of cooling required by $\mathcal{A}$, while the logarithmic factor comes from the sample complexity estimate in Corollary~\ref{cor:translation-invariant-union-bound}.

\paragraph{Choosing a Sampler.} To demonstrate the utility of Gaussian cooling, we now present explicit query complexity bounds for two samplers: ULMC and FORS. The necessary background on FORS is in Appendix~\ref{app:fors}, while the corresponding background on ULMC can be found in Appendix~\ref{app:ulmc}. In practice, ULMC is simple to implement and parallelize over many chains, making it an attractive choice for estimating preconditioners for high-dimensional distributions. In the numerical experiments in Section~\ref{sec:num-exps}, we therefore use ULMC. However, very recent progress on the optimal dimension and condition-number dependence of log-concave sampling via FORS enables us to present even better theoretical guarantees. The FORS algorithm is equipped with the \textit{high accuracy} mixing guarantee in Theorem~\ref{th:fors-mixing}, so Theorem~\ref{th:gc-query-complexity-general} yields the following corollary.

\begin{corollary}
\label{cor:fors-query-complexity}
FORS (Algorithm~\ref{alg:fors}) is a high accuracy sampler satisfying
Assumption~\ref{asmp:sampling-algorithm} with constants $c_1 = 2/3$, $c_2 = 1/3$, and $c_3 = 5/3$.
Consequently, Gaussian cooling with FORS taking the place of $\mathcal{A}$ returns a preconditioner
satisfying Equation~\ref{eq:final-precond-guarantee} with probability at least $1 - p$ using
\begin{align*}
    N n K = \begin{cases}
    \widetilde{O}\left(\frac{d^{1/3} \left[d + \log(1/p)\right] \log(\kappa)}{\widetilde{\alpha}^{16/3} \delta^{2} \log(1 / \gamma)}\right) & \text{general target}, \\
    \widetilde{O}\left(\frac{d^{1/3} \log(1/p) \log(\kappa)}{\widetilde{\alpha}^{16/3} \delta^{2} \log(1 / \gamma)}\right) & \text{translation-invariant target}
    \end{cases}
\end{align*}
queries to $U$ and its gradient.
\end{corollary}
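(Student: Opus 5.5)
The plan is to verify the two claims of the corollary in turn. The first is that FORS fits Assumption~\ref{asmp:sampling-algorithm} with the stated constants. The second is the query count, obtained by substituting these constants into Theorem~\ref{th:gc-query-complexity-general} and Theorem~\ref{th:query-complexity-translation-invariant}.

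\textbf{FORS fits the high-accuracy template.} I would start from the mixing guarantee in Theorem~\ref{th:fors-mixing}. The task is to rewrite its query bound for a $\kappa$-conditioned target, started from the minimizer $x^*$, in the form $\widetilde{O}(\kappa^{2/3} d^{1/3} (\log \varepsilon^{-1})^{5/3} + \log(1/\eta))$. The target accuracy is Rényi-$2$ divergence at most $\varepsilon^2$.

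Since FORS is an exact rejection-type sampler, I expect its output law, conditional on termination within the query budget, to be either exactly $\pi$ or within a controlled Rényi distance of it. The $\log(1/\eta)$ term should come from a geometric tail bound on the number of proposal rounds: each round succeeds with constant probability, so $O(\log(1/\eta))$ extra rounds suffice with probability $1-\eta$.

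The step I expect to be the main obstacle is the conversion of the FORS guarantee, which may be stated in total variation or as exactness with a random running time, into a Rényi-$2$ bound with $\log(1/\varepsilon)$ dependence. I would handle it as follows. Truncating the algorithm after the stated number of rounds gives an output law that is a mixture: $\pi$ with weight $1-\eta'$, plus a fallback. The Rényi-$2$ divergence of this mixture is bounded using the density ratio of the fallback, which is controlled because the fallback is a warm start near $x^*$ for a well-conditioned target. Any $\log\varepsilon^{-1}$ power in the theorem is then absorbed into $c_3 = 5/3$. I would also read the warm-start requirement of Theorem~\ref{th:fors-mixing} off the Gaussian centered at $x^*$ with covariance $L^{-1}I$; its warmness is $\kappa^{O(d)}$, which enters only logarithmically.

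\textbf{Query count.} Apply Theorem~\ref{th:gc-query-complexity-general} in the high-accuracy case, with
\[
N=\widetilde{\Theta}\bigl(\widetilde{\alpha}^{-4/3} d^{1/3}(\log\delta^{-1})^{5/3}\bigr)+O(\log(nK/p)).
\]
Multiply by $n=\Theta\bigl((d+\log(K/p))/(\widetilde{\alpha}^4\delta^2)\bigr)$ and by $K=O(\log\kappa/\log(1/\gamma))$ from Proposition~\ref{prop:gc-induction}. This gives
\[
\widetilde{O}\Bigl(\frac{d^{1/3}(d+\log(1/p))\log\kappa}{\widetilde{\alpha}^{16/3}\delta^2\log(1/\gamma)}\Bigr),
\]
where the exponent of $\widetilde{\alpha}$ is $4/3+4=16/3$. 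The factor $(\log\delta^{-1})^{5/3}$ and the factor $\log(nK/p)$ are polylogarithmic and are hidden in $\widetilde{O}$, as is $\log K$.

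For the translation-invariant case, Theorem~\ref{th:query-complexity-translation-invariant} replaces $d+\log(K/p)$ by $\log(dK/p)$ in $n$, while $N$ and $K$ are unchanged. The same multiplication then gives the second line of the corollary, with $\log d$ absorbed into $\widetilde{O}$. The failure probability $p$ is inherited directly from those theorems, which already union-bound over the $nK$ sampler calls and the $K$ concentration events.
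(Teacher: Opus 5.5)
Your query-count half is correct and matches the paper. You substitute $c_1=2/3$, $c_2=1/3$, $c_3=5/3$ into the high-accuracy case of Theorem~\ref{th:gc-query-complexity-general} (resp.\ Theorem~\ref{th:query-complexity-translation-invariant}), get $\widetilde{\alpha}^{-(4/3+4)}=\widetilde{\alpha}^{-16/3}$, and absorb $(\log\delta^{-1})^{5/3}$, $\log K$ and $\log(nK/p)$ into $\widetilde{O}$.

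The gap is in the first half. There you treat FORS as an exact sampler whose guarantee must be converted into R\'enyi-$2$ form by truncating and mixing with a fallback. That step is unnecessary, and as sketched it would fail, for three reasons.

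\begin{enumerate}
\item Theorem~\ref{th:fors-mixing} is already an $\mathcal{R}_q$ bound for every $q\geq 2$. Its probability-$(1-p)$ qualifier attaches only to the query count $O(N+\log(1/p))$. That is exactly the high-accuracy form of Assumption~\ref{asmp:sampling-algorithm}, so nothing needs to be truncated.
\item FORS is not exact. By Theorem~\ref{th:fors-guarantee}, an accepted proposal follows the clipped, frozen-gradient path law, so the output is never ``$\pi$ with weight $1-\eta'$''. The R\'enyi bound only appears after $N$ such transitions from the Gaussian start.
\item Even granting exactness, $\chi^2\bigl((1-\eta')\pi+\eta'\rho\,\|\,\pi\bigr)=\eta'^2\chi^2(\rho\,\|\,\pi)$. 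For a fallback $\rho$ of warmness $\kappa^{O(d)}$, the factor $\chi^2(\rho\,\|\,\pi)$ is exponentially large in $d$ in general, already for Gaussian $\pi$. You would need $\log(1/\eta')=\Omega(d\log\kappa)$, so your geometric-tail argument charges $\Theta(d\log\kappa)$ extra rounds per sample. That is not of the form $\kappa^{2/3}d^{1/3}\,\mathrm{polylog}+\log(1/\eta)$, and it breaks $c_2=1/3$. Inside Gaussian cooling, where the per-stage condition number is $\Theta(\widetilde{\alpha}^{-2})$, it would inflate the total to order $d^2\log\kappa$.
\end{enumerate}

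What is actually needed is a direct substitution, which you gesture at but do not carry out. Take $q=2$ and $\nu_0=\mathcal{N}(x^*,L^{-1}I_d)\otimes\mathcal{N}(0,I_d)$. Lemma~\ref{lem:gaussian-initialization-renyi} gives $\Delta_2=d+\mathcal{R}_4(\nu_0\,\|\,\pi)=O(d\log\kappa)$. So the $\kappa^{O(d)}$ warmness enters only through $\Delta_2^{1/3}$, and this is where $d^{1/3}$ comes from. Then $M_2=O(\log\varepsilon^{-1}+\log(\kappa d))$ and
\[
N=\Theta\bigl(\kappa^{2/3}\Delta_2^{1/3}M_2^{2/3}\log(2\Delta_2/\varepsilon^2)\bigr)=\widetilde{O}\bigl(\kappa^{2/3}d^{1/3}(\log\varepsilon^{-1})^{5/3}\bigr),
\]
using $O(N+\log(1/\eta))$ queries with probability at least $1-\eta$. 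In particular, $c_3=5/3$ is the product of $M_2^{2/3}$ with the extra factor $\log(2\Delta_2/\varepsilon^2)$. It has to be derived this way; you cannot simply declare it and absorb ``any $\log\varepsilon^{-1}$ power'' into it.
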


For ULMC, Theorem~\ref{th:ulmc-mixing} presents a low accuracy guarantee, allowing us to apply Theorem~\ref{th:gc-query-complexity-general} with parameters $c_1 = 3/2$, $c_2 = 1/2$, and $c_3 = 1$. The resulting query complexity is $\widetilde{O}(d^{3/2} \log(\kappa))$ in the general case and $\widetilde{O}(d^{1/2} \log(\kappa))$ in the translation-invariant case; see Corollary~\ref{cor:ulmc-query-complexity} for a full statement of the ULMC analogue of Corollary~\ref{cor:fors-query-complexity}.

\section{Examples and Applications}
\label{sec:num-exps}
In this section, we discuss two examples to which our framework naturally applies. In Section~\ref{subsec:well-cond-transf}, we discuss a generic problem in which a well-conditioned problem is transformed under a potentially ill-conditioned, unknown affine transformation. This setting can arise, for example, in physical problems when the underlying coordinate system changes. In Section~\ref{subsec:lattice-phi4}, we discuss applications to a simple discretized field theory, the lattice $\phi^4$ model. Throughout, we measure the success of various preconditioner estimation strategies in terms of their relative condition number. In particular, if $\Sigma$ is the covariance of the target distribution and $\widehat{\Sigma}$ is an estimate, then we define $\kappa_{\mathrm{rel}}(\widehat{\Sigma}, \Sigma) := \kappa(\widehat{\Sigma}^{-1/2} \Sigma \widehat{\Sigma}^{-1/2})$.
The closer that this quantity is to one, the better the estimate of the true covariance. When it is clear from context, we write $\kappa_{\mathrm{rel}}$ instead of $\kappa_{\mathrm{rel}}(\widehat{\Sigma}, \Sigma)$.

We compare the performance of Algorithm~\ref{alg:gaussian-cooling}, and in the translation-invariant setting, the performance of Algorithm~\ref{alg:gaussian-cooling-translation-invariant}, to two baselines. First, we compare to a ``raw'' ULMC baseline. Given the same number of queries as Algorithm~\ref{alg:gaussian-cooling}, we use Algorithm~\ref{alg:ulmc} to produce approximate samples from the target distribution, without any preconditioning. We then estimate the target covariance from the resulting samples. In the translation-invariant setting, estimating the dense empirical covariance is impractical, so we compute the unpreconditioned baseline in Fourier space using Equation~\ref{eq:translation-invariant-alg} when the dimension of the underlying distribution is greater than $400$. We refer to the translation-invariant version of this baseline as \textit{translation-averaged ULMC}.

We also compare to a stagewise empirical preconditioner estimate akin to the procedure that MCMC packages such as \cite{Carpenter2017} use in practice. Given the same number of chains $n$ and samples per chain $N$, this procedure carries out $K$ stages of preconditioner estimation. At the end of stage $k$, the sample covariance $\widetilde{\Sigma}_n^{(k)}$ is computed and then used as a preconditioner. In all cases, we ensure that any methods that we compare receive the same number of first-order queries. For reference preconditioners (i.e., no adaptive preconditioning or cooling takes place), this simply means that we run $n$ chains for $KN$ steps each.

\subsection{Well-Conditioned Problems Under an Unknown Affine Transformation}
\label{subsec:well-cond-transf}
The most straightforward application of the algorithm developed above is to well-conditioned problems transformed by unknown affine maps. Suppose that $\pi(x) \propto \exp(-U(x))$, where the potential $U : \R^d \to \R$ satisfies Equation~\ref{eq:log-concave-smooth} with condition number $\kappa = L / \mu = O(1)$. Now, given an arbitrary positive definite matrix $B \succ 0$, we consider the transformed distribution $\tilde{\pi}(x) \propto \exp(-U(L^{-1/2} B^{1/2} x))$. The potential $\widetilde{U}(x) := U(L^{-1/2} B^{1/2} x)$ then satisfies $\frac{\mu}{L} B \preceq \nabla^2 \widetilde{U}(x) \preceq B$
for all $x \in \R^d$.
Hence, for $\alpha = \mu / L = \Theta(1)$, the transformed distribution $\tilde{\pi}$ satisfies Equation~\ref{eq:true-conditioning}, with the geometry specified by the transformation $B$.

In the following numerical experiments, we take $\pi$ to be a well-conditioned Gaussian measure on $\R^d$. Then, with only black-box access to the quadratic potential $U$ and its gradient, Algorithm~\ref{alg:gaussian-cooling} provides an efficient method for estimating the covariance of the transformed distribution, with no dependence on the conditioning of the transformation $B$. In this sense, our method allows one to sample efficiently from well-conditioned distributions that have been corrupted by an unknown and potentially ill-conditioned affine transformation. In Figure~\ref{fig:transformed-gaussian-across-kappa-b}, we demonstrate that the quality of the preconditioner formed by Gaussian cooling does not degrade as $\kappa(B)$ increases. Similarly, Figure~\ref{fig:transformed-gaussian-large-kappa-b} shows that for $\kappa(B) = 10^6$, Gaussian cooling is able to produce a preconditioner whose relative condition number is $\Theta(1)$ in approximately seven iterations.

\begin{figure}
    \centering
    \begin{minipage}[t]{0.49\linewidth}
        \centering
        \includegraphics[width=\linewidth]{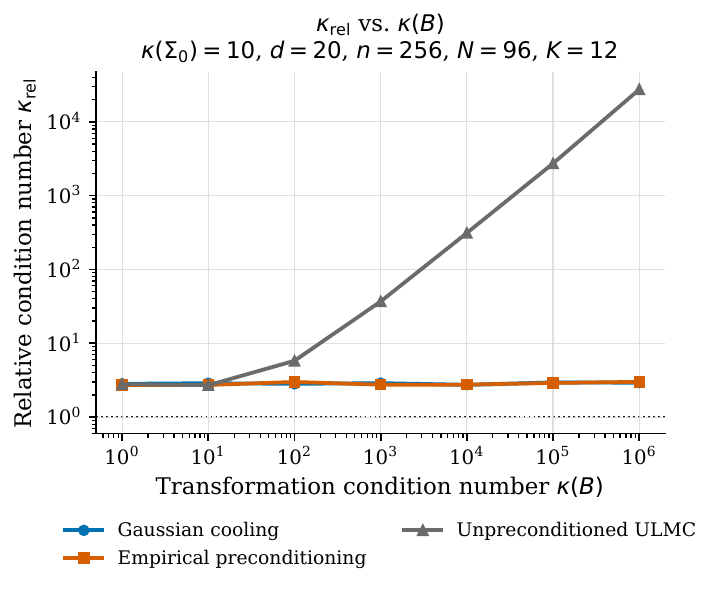}
        \captionof{figure}{As the condition number $\kappa(B)$ of the transformation applied to the well-conditioned distribution increases, empirical covariance from unpreconditioned ULMC degrades. Gaussian cooling as specified in Algorithm~\ref{alg:gaussian-cooling} and stagewise empirical preconditioning perform comparably.}
        \label{fig:transformed-gaussian-across-kappa-b}
    \end{minipage}
    \hfill
    \begin{minipage}[t]{0.49\linewidth}
        \centering
        \includegraphics[width=\linewidth]{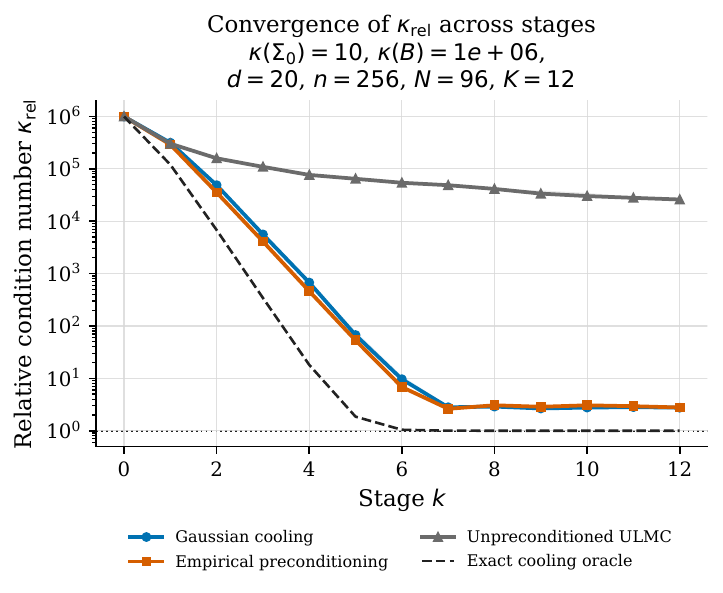}
        \captionof{figure}{When $d = 20$ and $\kappa(B) = 10^6$, both Gaussian cooling and stagewise empirical preconditioning can attain a constant relative condition number within approximately seven stages. The full empirical covariance fails to push the relative condition number below $10^4$.}
        \label{fig:transformed-gaussian-large-kappa-b}
    \end{minipage}
\end{figure}

\subsection{Lattice $\phi^4$ Theory}
\label{subsec:lattice-phi4}
The lattice $\phi^4$ model, a standard discretization of the classical $\phi^4$ field theory \cite[Chapter 1.6]{Bauerschmidt2019}, provides a more specialized testbed for our algorithm. Its quartic term results in a large condition number, even after truncating the quartic term past some radius to ensure smoothness. In practice, one typically estimates the covariance (or a diagonal approximation of it) from samples produced during the warm-up period of an HMC chain, as in \cite{Lunts2023}.
 
Let $\Lambda = \{(x_1, x_2) \in \Z^2 : 0 \leq x_1, x_2 \leq d - 1\}$ be the periodic square lattice of side length $d$, and write $x \sim y$ if $x$ and $y$ are neighbors, with periodic boundary. To each pair of sites we associate a symmetric coupling constant $\beta_{xy} = \beta_{yx} \geq 0$, with $\beta_{xy} = 0$ unless $x \sim y$, and following \cite{Bauerschmidt2019}, we define the weighted Laplacian
\begin{align*}
    (\Delta_\beta \phi)_x := \sum_{y \in \Lambda} \beta_{xy} (\phi_x - \phi_y)
\end{align*}
for each scalar field $\phi : \Lambda \to \R$. The Hamiltonian
\begin{align}
\label{eq:lattice-phi4-potential}
    H(\phi) := \frac{1}{2} \langle \phi, (\Delta_\beta + m I) \phi \rangle + \frac{\lambda}{4} \sum_{x \in \Lambda} \phi_x^4
\end{align}
induces a distribution $\pi(\phi) \propto \exp(-H(\phi))$ over $\R^{d^2}$; see Appendix~\ref{app:additional-phi4-facts} for further background. This model has translation-invariant covariance, so Theorem~\ref{th:query-complexity-translation-invariant} applies.
 
We compare translation-invariant Gaussian cooling (Algorithm~\ref{alg:gaussian-cooling-translation-invariant}) to translation-invariant empirical preconditioning, which estimates the covariance in stages via Equation~\ref{eq:translation-invariant-alg}. Figure~\ref{fig:phi4-large-experiments} shows that translation-invariant Gaussian cooling provides an effective preconditioner on lattices of sidelength up to $d = 512$ and is comparable to empirical preconditioning across all regimes. Note that for a lattice with sidelength $d$, the target distribution is a probability measure over $\R^{d^2}$, so our largest experiments take the target dimension to be $d^2 = 262,144$. In the small mass regime $m = 0.001$, Figure~\ref{fig:phi4-small-mass} shows that both methods begin to outperform the unpreconditioned chain (with the covariance estimated in Fourier space) given the same number of gradient queries. Figure~\ref{fig:phi4-infinite-radius} shows that our method continues to perform well on the true quartic potential, which is not log-smooth.
 
\begin{figure}[!t]
    \centering
    \begin{minipage}[t]{0.49\linewidth}
        \centering
        \includegraphics[width=\linewidth]{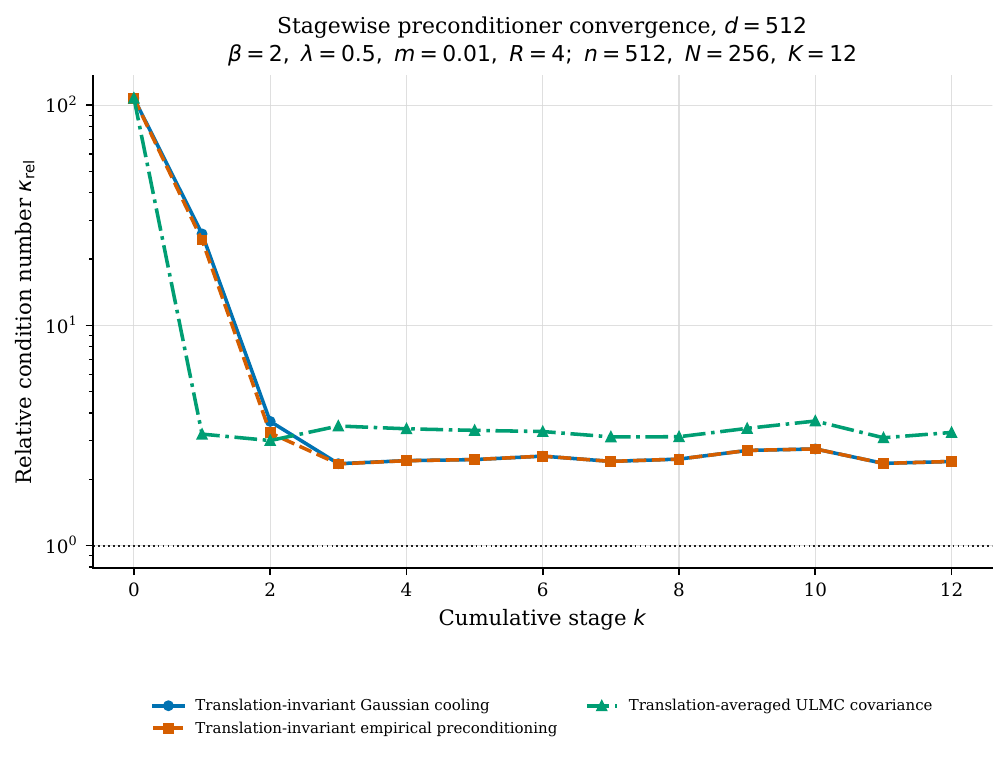}
        \captionof{figure}{On lattices of sidelength up to $d = 512$ with mass $m = 0.01$, both translation-invariant Gaussian cooling and translation-invariant empirical preconditioning obtain a constant relative condition number within several steps.}
        \label{fig:phi4-large-experiments}
    \end{minipage}
    \hfill
    \begin{minipage}[t]{0.49\linewidth}
        \centering
        \includegraphics[width=\linewidth]{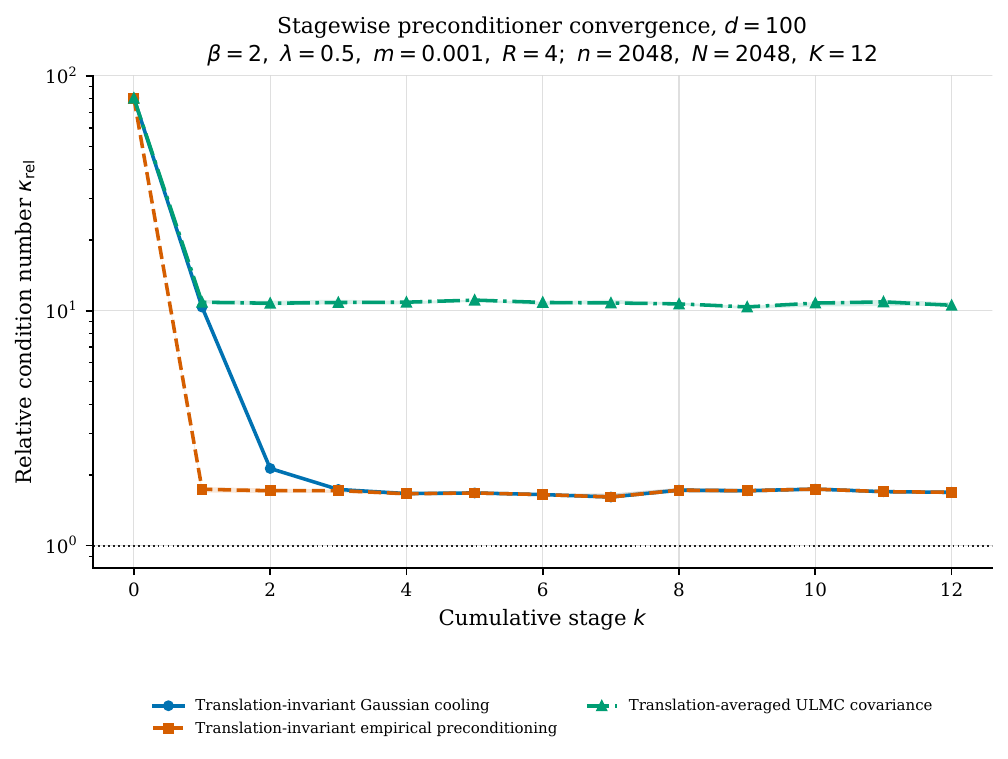}
        \captionof{figure}{For sidelength $d = 100$, decreasing the mass to $m = 0.001$ begins to distinguish Gaussian cooling and empirical preconditioning from translation-averaged ULMC.}
        \label{fig:phi4-small-mass}
    \end{minipage}
\end{figure}
 
In Figure~\ref{fig:phi4-iat}, we use \texttt{emcee} \cite{emcee} to compute the integrated autocorrelation time (IAT) $\tau_{\mathrm{int}}$ of the two-point correlators between the origin and all other sites. The convergence criterion $T \geq 50 \tau_{\mathrm{int}}$ is met for nearly all sites with chain length $T = 10000$ and burn-in $N_{\mathrm{burn}} = 1000$. Appendix~\ref{app:additional-phi4-facts} studies the effect of $N$ and $n$ on the performance of Algorithm~\ref{alg:gaussian-cooling}.
 
\begin{figure}[!t]
    \centering
    \begin{minipage}[t]{0.49\linewidth}
        \centering
        \includegraphics[width=\linewidth]{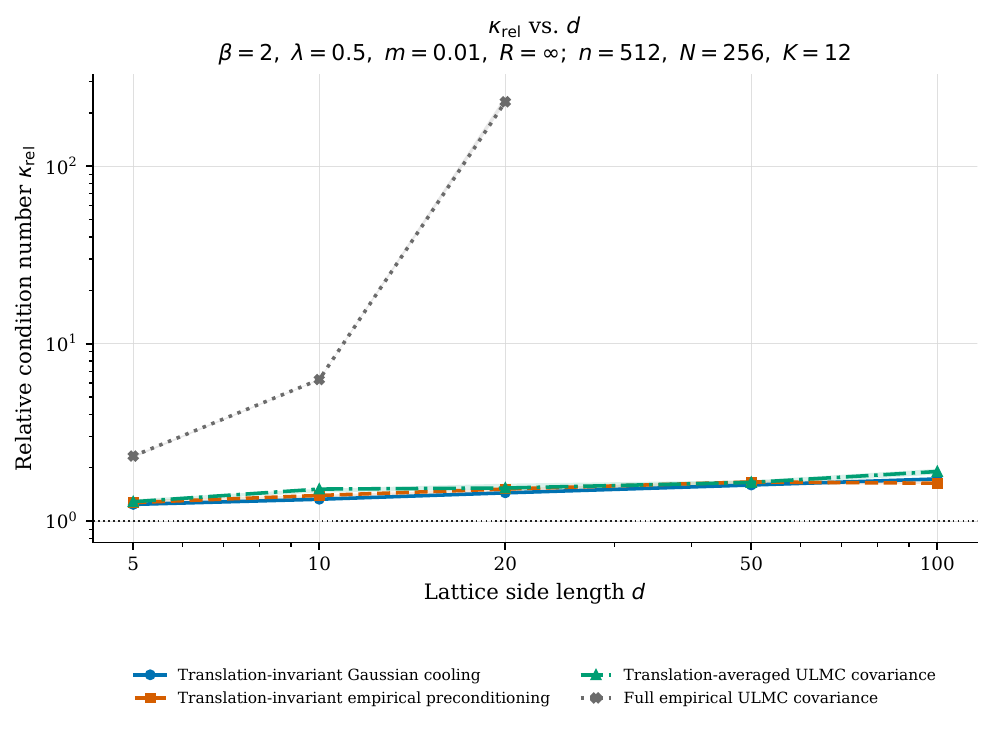}
        \captionof{figure}{For the genuinely quartic (untruncated) potential, all three translation-invariant methods perform comparably up to $d = 100$, while dense ULMC fails for $d = 5, 10, 20$ with $n$ and $N$ held fixed.}
        \label{fig:phi4-infinite-radius}
    \end{minipage}
    \hfill
    \begin{minipage}[t]{0.49\linewidth}
        \centering
        \includegraphics[width=\linewidth]{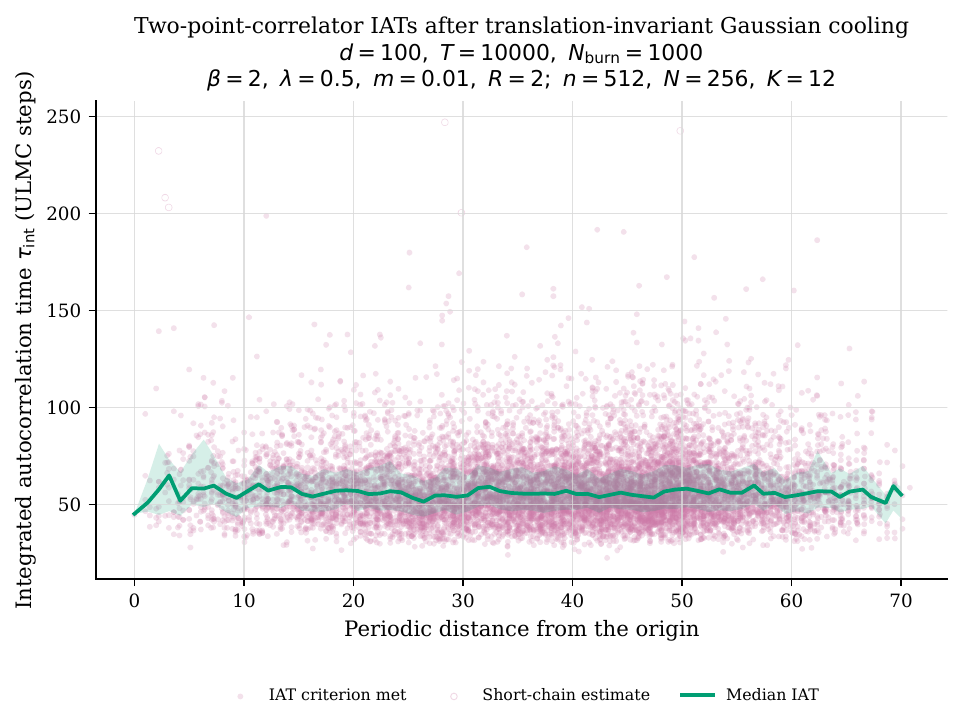}
        \captionof{figure}{With the translation-invariant Gaussian cooling preconditioner at $d = 100$, the two-point correlators between $(0, 0)$ and all other sites have median IAT $\tau_{\mathrm{int}} \approx 50$ for a ULMC chain of length $10,000$.}
        \label{fig:phi4-iat}
    \end{minipage}
\end{figure}

\section{Conclusion}
Gaussian cooling provides a theoretically grounded method for mass matrix estimation across a broad class of sampling problems. The first-order query complexity of our algorithm, as presented in Theorem~\ref{th:gc-query-complexity-general}, will only continue to improve as the theory of log-concave sampling advances. For instance, \cite{chen2026acceleratedhighaccuracysamplingwarm} and \cite{chen2026smoothedpicardhamiltonianmonte} indicate that one may be able to improve upon the first-order query complexity of FORS, obtaining a high accuracy query complexity of $\widetilde{O}(\sqrt{\kappa} d^{1/4} \mathrm{polylog}(1/\varepsilon))$ to obtain a sample that is within $\varepsilon$ distance of a log-concave target in total variation distance. On the practical side, our algorithm provides a compelling, easy-to-implement alternative for preconditioning high-dimensional distributions, matching the performance of widely-used warmup strategies \cite{Carpenter2017} while permitting rigorous guarantees.

\subsection*{AI Use Statement}
In this work, we used generative AI tools for the parts of the implementation of Algorithm~\ref{alg:gaussian-cooling}, with Algorithm~\ref{alg:ulmc} as the underlying sampler. For each implementation, detailed scaffolds of the desired algorithms and functionality were provided in the form of pseudocode. We did not use generative AI tools to help develop theoretical models or conceptual frameworks, formulate mathematical claims, provide critical ingredients for proving mathematical claims, assist in the writing of proofs, propose or refine hypotheses, support qualitative and thematic data analysis, or interpret results.

In addition to the above tasks, we used generative AI tools to find and summarize relevant literature, brainstorm high-level ideas for mathematical proofs, and proofread our final paper for readability and minor errors. All code and figures produced using generative AI tools were checked by the authors. We take responsibility for the final content of this work.

\subsection*{Acknowledgments}
This work was partially supported by
the U.S. Department of Energy, Office of Science,
Accelerated Research in Quantum Computing Centers, Quantum Utility through Advanced Computational Quantum Algorithms, grant no. DE-SC0025572,
the Applied Mathematics Program of the US Department of
Energy (DOE) Office of Advanced Scientific Computing Research under contract number DE-AC02-05CH11231, a Sloan Research Fellowship, the National Science Foundation under Grant No. 2543506, and the National Science Foundation Graduate Research Fellowship Program under Grant No. 2146752.

\printbibliography

\clearpage
\appendix
\section{Outline}
The majority of the mathematical content of our work can be found in the remainder of the appendix, structured as follows. First, Appendix~\ref{app:tech-assumps} outlines the notation and technical assumptions that underlie our theoretical contributions. Then, Appendix~\ref{app:log-concave-complexity} gives a more in-depth overview of the first-order query complexity of log-concave sampling. After this, Appendices~\ref{app:fors} and ~\ref{app:ulmc} describe the two samplers, FORS and ULMC, that we use for our theoretical results and numerical experiments in Section~\ref{sec:query-complexity} and Section~\ref{sec:num-exps}, respectively. 

Once this background information has been provided, Appendix~\ref{app:query-comp-proofs} provides rigorous proofs of Theorem~\ref{th:gc-query-complexity-general} and its translation-invariant counterpart. The bulk of the technical work regarding concentration of the sample covariance and the corresponding preconditioning guarantee that this concentration provides is then given in Appendix~\ref{app:tech-proofs}. Appendix~\ref{app:additional-phi4-facts} gives additional background and experiments on the lattice $\phi^4$ model. 

Appendix~\ref{app:experiment-details} outlines our experimental setup, including all relevant implementation details. All code used to produce the figures in Section~\ref{sec:num-exps} can be found in the associated \href{https://github.com/jakehofgard/gaussian-cooling}{GitHub repository}. We conclude with Appendix~\ref{app:additional-algs}, describing various extensions and variants of Algorithm~\ref{alg:gaussian-cooling} that may be more suitable for various applications, and Appendix~\ref{app:cautionary-tale}, a simple example of a distribution that satisfies Assumptions~\ref{asmp:conditioning}--~\ref{asmp:global-minimizer} but has $\alpha = O(1 / \kappa)$, where $\kappa$ is the condition number of the target distribution.
\section{Notation and Technical Assumptions}
\label{app:tech-assumps}
In this section, we provide notation and well-established results that we will use throughout the analysis of our algorithm. Throughout, we generally use the notation $\|\cdot\|$ for two purposes: the Euclidean norm on $\R^d$ and the spectral norm on matrices in $\R^{d \times d}$. Throughout, we also denote $[n] := \{1, \ldots, n\}$ for concision.

We begin with some preliminaries on the information-theoretic divergences that will aid in our analysis. Throughout, $\sP(\R^d)$ denotes the space of probability measures on $\R^d$ and
\begin{align*}
    \sP_2(\R^d) := \left\{\mu \in \sP(\R^d) : \int \|x\|^2 d\mu(x) < \infty\right\}
\end{align*}
is the space of probability measures with finite second moment.
\begin{definition}
\label{def:chi2-renyi}
If $\mu, \nu \in \sP_2(\R^d)$ are two probability measures such that $\mu \ll \nu$, then the $\chi^2$-divergence of $\mu$ relative to $\nu$ is defined by
\begin{align}
\label{eq:chi2-def}
    \chi^2(\mu \| \nu) := \int_{\R^d} \left(\frac{d \mu}{d\nu} - 1 \right)^2 d\nu = \int_{\R^d} \left(\frac{d \mu}{d\nu}\right)^2 d\nu - 1.
\end{align}
More generally, the R\'enyi divergence of order $q$ between $\mu$ and $\nu$ is defined by
\begin{align}
\label{eq:renyi-q-def}
    \sR_q(\mu \| \nu) := \frac{1}{q - 1} \log\left(\int_{\R^d} \left(\frac{d \mu}{d \nu}\right)^{q} d\nu \right).
\end{align}
Note that
\begin{align*}
    \sR_2(\mu \| \nu) = \log(1 + \chi^2(\mu \| \nu)),
\end{align*}
with both sides infinite if $\mu$ is not absolutely continuous with respect to $\nu$. 
\end{definition}
The $\chi^2$-divergence admits the following concise variational characterization.
\begin{theorem}[Theorem 7.26 and Example 7.4 in \cite{Polyanskiy2024}]
\label{thm:chi2-variational}
Suppose $\mu, \nu \in \sP_2(\R^d)$ are such that $\mu \ll \nu$ and that $\mathcal{G}$ denotes the set of measurable functions $g : \R^d \to \R$ such that $g \in L^2(\nu)$ and $\Var_{x \sim \nu}(g(x)) > 0$. Then, we can express
\begin{align*}
    \chi^2(\mu \| \nu) = \sup_{g \in \mathcal{G}} \frac{(\E_{x \sim \mu}[g(x)] - \E_{x \sim \nu}[g(x)])^2}{\Var_{x \sim \nu}(g(x))}.
\end{align*}
\end{theorem}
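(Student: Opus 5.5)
The plan is a direct Cauchy--Schwarz argument in $L^2(\nu)$ applied to the likelihood ratio $f := \frac{d\mu}{d\nu}$. Then a separate truncation argument handles the case $f \notin L^2(\nu)$, where $\chi^2(\mu\|\nu) = \infty$ by Definition~\ref{def:chi2-renyi}.

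\emph{Step 1 (rewriting the numerator).} Fix $g \in \mathcal{G}$ and assume first that $f \in L^2(\nu)$. Then $\E_{x\sim\mu}|g(x)| = \E_{\nu}[f|g|] \le \|f\|_{L^2(\nu)}\|g\|_{L^2(\nu)} < \infty$, so the numerator is well defined. Since $\E_\nu[f-1] = 0$, we can write
\begin{align*}
\E_\mu[g] - \E_\nu[g] = \E_\nu[(f-1)g] = \E_\nu\bigl[(f-1)(g - \E_\nu g)\bigr].
\end{align*}

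\emph{Step 2 (upper bound).} Applying Cauchy--Schwarz to the last expression gives
\begin{align*}
(\E_\mu g - \E_\nu g)^2 \le \E_\nu[(f-1)^2]\,\Var_\nu(g) = \chi^2(\mu\|\nu)\,\Var_\nu(g).
\end{align*}
Hence the supremum is at most $\chi^2(\mu\|\nu)$.

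\emph{Step 3 (attainment when $f\in L^2(\nu)$).} Suppose $\Var_\nu(f) > 0$. Then $g = f$ lies in $\mathcal{G}$, and Cauchy--Schwarz holds with equality because $g - \E_\nu g = f - 1$. The ratio therefore equals $\chi^2(\mu\|\nu)$. Otherwise $f = 1$ $\nu$-a.s., so $\mu = \nu$ and $\chi^2 = 0$. In that case every numerator vanishes and the supremum is $0$. This assumes $\mathcal{G} \ne \emptyset$, which is implicit in the statement, for example when $\nu$ is not a point mass.

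\emph{Step 4 (the case $f\notin L^2(\nu)$).} This is the only step needing care: $g=f$ is no longer admissible, so we must show the supremum is $+\infty$. I would use the truncations $g_M := \min\{f, M\}$, which are bounded, hence in $L^2(\nu)$ and $\mu$-integrable, and nonconstant for $M$ large. Because $f \ge g_M \ge 0$ pointwise, we have $f g_M \ge g_M^2$. Writing $c_M := \E_\nu g_M \le 1$, this gives
\begin{align*}
\E_\mu g_M - \E_\nu g_M = \E_\nu[f g_M] - c_M \ge \E_\nu[g_M^2] - c_M \ge \E_\nu[g_M^2] - c_M^2 = \Var_\nu(g_M),
\end{align*}
where the middle inequality uses $c_M \le 1$. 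So the ratio is at least $\Var_\nu(g_M) = \E_\nu[g_M^2] - c_M^2 \ge \E_\nu[g_M^2] - 1$. By monotone convergence, $\E_\nu[g_M^2] \uparrow \E_\nu[f^2] = \infty$, so the supremum is infinite and matches $\chi^2(\mu\|\nu) = \infty$.
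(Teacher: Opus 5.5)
\textbf{Comparison with the cited route.} The paper gives no proof of this statement; it quotes Polyanskiy--Wu. There the formula comes from the convex-conjugate (Fenchel) representation of $f$-divergences, specialized to $f(x)=(x-1)^2$, followed by optimizing over shifts and rescalings of $g$. Your direct Cauchy--Schwarz argument in $L^2(\nu)$ is a different and more elementary route. It identifies the maximizer $g=d\mu/d\nu$ explicitly, at the cost of a separate truncation for the infinite case. The duality route treats all $f$-divergences uniformly. Steps 1--3 are correct.

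\textbf{The error in Step 4.} Step 4 contains a reversed inequality. From $0\le c_M\le 1$ you get $c_M^2\le c_M$, hence $\E_\nu[g_M^2]-c_M\le \E_\nu[g_M^2]-c_M^2$. This is the opposite of what you wrote. The inequality is even strict here, since $0<c_M<1$ when $f$ is essentially unbounded. So your displayed chain does not establish $\E_\mu g_M-\E_\nu g_M\ge \Var_\nu(g_M)$.

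\textbf{Two repairs.} The conclusion survives, and either of two short fixes works.

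First, the bound you wanted is actually true. Note that $f-g_M=(f-M)_+$ vanishes off $\{f>M\}$, where $g_M=M$, and that $1-c_M=\E_\nu[f-g_M]=\E_\nu[(f-M)_+]$. Therefore
\begin{align*}
\E_\nu[f g_M]-c_M-\Var_\nu(g_M)=\E_\nu[(f-g_M)g_M]-c_M(1-c_M)=(M-c_M)\,\E_\nu[(f-M)_+]\ge 0,
\end{align*}
where the last step uses $c_M\le M$.

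Second, you can bypass that bound entirely. Your correct first inequality gives $\E_\mu g_M-\E_\nu g_M\ge \E_\nu[g_M^2]-1$, which is positive once $\E_\nu[g_M^2]>1$. Since also $\Var_\nu(g_M)\le \E_\nu[g_M^2]$, the ratio is at least $(\E_\nu[g_M^2]-1)^2/\E_\nu[g_M^2]$, which tends to $\infty$.

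\textbf{Membership in $\mathcal{G}$.} Record explicitly why $g_M\in\mathcal{G}$ for $M>1$. We have $\nu(f>M)>0$ because $f\notin L^2(\nu)$ forces $f$ to be essentially unbounded. We have $\nu(f<1)>0$ because $\E_\nu f=1$. Hence $g_M$ is non-constant, and with either repair the proof is complete.
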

Beyond this point, $\E_{\mu}[g(x)] := \E_{x \sim \mu}[g(x)]$ for measurable $g : \R^d \to \R$, with identical notation for variances and covariances.

We also aim to relate the condition in Equation~\ref{eq:true-conditioning} to a condition on the covariance matrix of the target distribution $\pi$. Using a combination of a Brascamp--Lieb inequality and a Cram\'er--Rao bound, if $U$ satisfies the condition in Equation~\ref{eq:true-conditioning}, then we have that
\begin{align}
\label{eq:inv-cov-condition}
    \alpha B \preceq \Sigma^{-1} \preceq B,
\end{align}
where $\Sigma$ is the covariance of the equilibrium measure $\pi(x) \propto \exp(-U(x))$.
To obtain this inequality, we recall the Brascamp--Lieb and Cram\'er--Rao inequalities, respectively.
\begin{theorem}[Theorem 2.2.9 in \cite{chewi-book}]
If $\pi \propto \exp(-V)$, where $V$ is strictly convex in $\R^d$ and twice continuously differentiable, then for all $f : \R^d \to \R$,
\begin{align*}
    \Var_{\pi} f \leq \E_{\pi} \langle \nabla f, (\nabla^2 V)^{-1} \nabla f \rangle.
\end{align*}
\label{th:brascamp-lieb}
\vspace{-0.6cm}
\end{theorem}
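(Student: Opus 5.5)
Since the Brascamp--Lieb inequality (Theorem~\ref{th:brascamp-lieb}) is a standard result, I would not reprove it from scratch. I would give the classical $L^2$/Hörmander-type argument based on the Langevin generator.

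\textbf{Setup.} Let $\mathcal{L} g := \Delta g - \langle \nabla V, \nabla g\rangle$ be the generator of overdamped Langevin dynamics with stationary measure $\pi$. It satisfies the integration-by-parts identity $\E_\pi[f\,(-\mathcal{L} g)] = \E_\pi\langle \nabla f, \nabla g\rangle$. By density it suffices to treat smooth, compactly supported $f$ with $\E_\pi f = 0$. Strict convexity gives a spectral gap on mean-zero functions, so I solve the Poisson equation $-\mathcal{L} g = f$. Then
\begin{align*}
\Var_\pi f = \E_\pi[f\,(-\mathcal{L}g)] = \E_\pi\langle \nabla f,\nabla g\rangle .
\end{align*}

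\textbf{Bochner identity.} Next I would integrate the Bochner identity against $\pi$:
\begin{align*}
\E_\pi[(\mathcal{L}g)^2] = \E_\pi\|\nabla^2 g\|_F^2 + \E_\pi\langle \nabla g, \nabla^2 V\,\nabla g\rangle \ \ge\ \E_\pi\langle \nabla g, \nabla^2 V\,\nabla g\rangle.
\end{align*}
Combining with $\E_\pi[(\mathcal{L}g)^2]=\E_\pi[f\,(-\mathcal{L}g)]=\Var_\pi f$ gives $\Var_\pi f \ge \E_\pi\langle \nabla g,\nabla^2V\,\nabla g\rangle$.

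\textbf{Cauchy--Schwarz.} Write $H:=\nabla^2 V$. Then
\begin{align*}
\Var_\pi f = \E_\pi\langle H^{-1/2}\nabla f, H^{1/2}\nabla g\rangle \le \left(\E_\pi\langle\nabla f,H^{-1}\nabla f\rangle\right)^{1/2}\left(\E_\pi\langle\nabla g,H\nabla g\rangle\right)^{1/2}.
\end{align*}
Bounding the second factor by $(\Var_\pi f)^{1/2}$ and cancelling yields the claimed inequality.

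\textbf{Main obstacle.} The delicate step is the regularity justification: existence and smoothness of the Poisson solution $g$, and the validity of the integrated Bochner identity with no boundary terms. I would handle this by first proving the result for $V$ with $\nabla^2 V \succeq \mu I$ for some $\mu>0$, for example by adding $\varepsilon\|x\|^2$. In that setting $\pi$ has sub-Gaussian tails and elliptic regularity applies. I would then pass $\varepsilon\to0$ by monotone or dominated convergence on both sides. In the paper's own setting, Equation~\ref{eq:log-concave-smooth} already gives strong convexity, so the regularization is not needed there.
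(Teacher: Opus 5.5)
The paper gives no proof of this statement. It is imported from Chewi's book and used only through Corollary~\ref{cor:sandwich}, so there is no in-paper argument to compare against. Your argument is the standard H\"ormander $L^2$ proof: solve $-\mathcal{L}g=f$, use the integrated Bochner identity to get $\E_\pi\langle\nabla g,\nabla^2V\,\nabla g\rangle\le\Var_\pi f$, then apply a weighted Cauchy--Schwarz. Every identity in it is correct, and the regularity issues you isolate are indeed the only real work.

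A few small remarks.
\begin{itemize}
\item The spectral gap you use to solve the Poisson equation should be credited to your $\varepsilon\|x\|^2$ regularization (Bakry--\'Emery), not to strict convexity. For merely convex $V$ a gap does hold, but that is a genuinely harder fact (e.g.\ Bobkov's Poincar\'e inequality for log-concave measures).
\item Strict convexity does not even make $\nabla^2V$ invertible (e.g.\ $V(x)=x^4$ at the origin). So the statement, and your use of $H^{-1/2}$, should be read with $\nabla^2V\succ0$ pointwise.
\item The limit $\varepsilon\to0$ needs no convergence theorem on the right-hand side. Since $(\nabla^2V+2\varepsilon I)^{-1}\preceq(\nabla^2V)^{-1}$ and $d\pi_\varepsilon/d\pi\le Z/Z_\varepsilon\to1$, with $Z_\varepsilon:=\int e^{-V-\varepsilon\|x\|^2}$, only the variance has to be shown to converge.
\item Your proof delivers more than the paper uses. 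Every application of the upper bound in Corollary~\ref{cor:sandwich} needs only that $\nabla^2V\succeq A$ for a constant $A\succ0$ implies $\covar(\pi)\preceq A^{-1}$. These applications are Equation~\ref{eq:inv-cov-condition}, the proof of Proposition~\ref{prop:gc-induction}, and $\lambda_{\max}(\Sigma_\mu)\le1/\alpha$ in Lemma~\ref{lem:chi-2-to-sg}. That constant-coefficient fact already follows from the Bakry--\'Emery Poincar\'e inequality applied to linear test functions after the change of variables $y=A^{1/2}x$.
\end{itemize}
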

The Cram\'er--Rao inequality then provides the accompanying lower bound, albeit in a bit more generality.
\begin{theorem}[Theorem 3.5.8 in \cite{chewi-book}]
If $\pi \propto \exp(-V)$ is a probability measure on $\R^d$, then for any twice continuously differentiable $f : \R^d \to \R$,
\begin{align*}
    \Var_{\pi} f \geq \langle \E_\pi \nabla f, (\E_\pi \nabla^2 V)^{-1} \E_{\pi} \nabla f \rangle.
\end{align*}
\label{th:cramer-rao}
\vspace{-0.6cm}
\end{theorem}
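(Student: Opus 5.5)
The plan is to prove the inequality by Cauchy--Schwarz against the score function $\nabla V$, using Gaussian-type integration by parts identities for $\pi \propto \exp(-V)$. Write $H := \E_\pi \nabla^2 V$ and $a := \E_\pi \nabla f$. Assume $\Var_\pi f < \infty$, since otherwise there is nothing to prove. Assume also that $\nabla f$ and $\nabla^2 V$ are $\pi$-integrable and that $H$ is invertible, which is implicit in the statement.

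\textbf{Step 1 (score identities).} Since $\nabla \exp(-V) = -\nabla V \exp(-V)$, integrating by parts gives three identities:
\begin{align*}
\E_\pi \nabla f = \E_\pi[f \nabla V], \qquad \E_\pi \nabla V = 0, \qquad \E_\pi[\nabla V \nabla V^\top] = \E_\pi \nabla^2 V = H.
\end{align*}
The second identity is the first with $f \equiv 1$. The third comes from applying the first coordinatewise with $f = \partial_j V$. In particular, the third identity shows that $H$ is positive semidefinite, and it is positive definite under the invertibility assumption.

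\textbf{Step 2 (Cauchy--Schwarz).} Fix $v \in \R^d$. By Step 1,
\begin{align*}
\langle a, v \rangle = \E_\pi[f \langle \nabla V, v\rangle] = \Cov_\pi\bigl(f, \langle \nabla V, v\rangle\bigr),
\end{align*}
where the second equality uses $\E_\pi \nabla V = 0$. Cauchy--Schwarz then yields
\begin{align*}
\langle a, v\rangle^2 \le \Var_\pi(f)\, \E_\pi \langle \nabla V, v\rangle^2 = \Var_\pi(f)\, \langle v, H v\rangle.
\end{align*}

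\textbf{Step 3 (optimize over $v$).} Take $v = H^{-1} a$. Step 2 then reads $(a^\top H^{-1} a)^2 \le \Var_\pi(f)\, a^\top H^{-1} a$. If $a^\top H^{-1} a = 0$, the claim is trivial. Otherwise, divide through by $a^\top H^{-1} a$ to obtain $\Var_\pi f \ge \langle a, H^{-1} a\rangle$, which is the claimed bound.

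\textbf{Main obstacle.} The only delicate point is justifying the integration by parts in Step 1, namely that the boundary terms vanish and that all the integrals involved are finite. My approach is to multiply by smooth cutoffs $\chi_R(x) = \chi(x/R)$ and integrate by parts exactly on compact sets. This produces error terms of the form $\E_\pi[f\, \nabla\chi_R]$, whose size is controlled by $R^{-1}\E_\pi[|f|\,\mathbf{1}_{R \le \|x\| \le 2R}]$. These vanish as $R \to \infty$ by dominated convergence, using $f \in L^2(\pi) \subset L^1(\pi)$. The third identity is handled the same way, using the integrability of $\nabla^2 V$. For this identity I would first establish the inequality $\E_\pi\langle \nabla V, v\rangle^2 \le \langle v, Hv\rangle$ via a truncation and Fatou argument. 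That inequality is in fact the only direction needed in Step 2, so the argument does not require proving equality in the third identity.
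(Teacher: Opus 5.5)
Your proof is correct: writing $\langle \E_\pi \nabla f, v\rangle$ as the covariance of $f$ and $\langle \nabla V, v\rangle$ under $\pi$ by integration by parts, applying Cauchy--Schwarz together with $\E_\pi[\nabla V \nabla V^\top] = \E_\pi \nabla^2 V$, and choosing $v = (\E_\pi \nabla^2 V)^{-1} \E_\pi \nabla f$ is the standard score-function proof of this Cram\'er--Rao bound. The paper gives no argument of its own beyond the citation to the book, and the integrability and invertibility hypotheses you make explicit are exactly the regularity that the quoted statement leaves implicit.
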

Equation~\ref{eq:inv-cov-condition} then follows immediately upon applying Theorem~\ref{th:brascamp-lieb} and Theorem~\ref{th:cramer-rao} with linear test functions of the form $f_v(x) = \langle x, v \rangle$, where $v \in \R^d$ is arbitrary. More specifically,
\begin{corollary}
\label{cor:sandwich}
Suppose that $\pi \propto \exp(-V)$ is a strictly log-concave measure on $\R^d$, with covariance $\Sigma$.
Then,
\begin{align*}
    (\E_{\pi} \nabla^2 V)^{-1} \preceq \Sigma \preceq \E_{\pi}[(\nabla^2 V)^{-1}].
\end{align*}
\end{corollary}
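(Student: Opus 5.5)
The plan is to apply Theorem~\ref{th:brascamp-lieb} and Theorem~\ref{th:cramer-rao} to linear test functions $f_v(x) := \langle v, x\rangle$ for arbitrary $v \in \R^d$. Such $f_v$ are smooth, with $\nabla f_v \equiv v$ and $\Var_\pi f_v = \langle v, \Sigma v\rangle$. Both sides of the claimed sandwich are then quadratic forms in $v$, so the Loewner ordering follows once the scalar inequalities hold for every $v$.

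\emph{Upper bound.} Strict convexity and $C^2$ regularity of $V$ let us invoke Theorem~\ref{th:brascamp-lieb}. It gives
\[
\langle v, \Sigma v\rangle = \Var_\pi f_v \le \E_\pi \langle v, (\nabla^2 V)^{-1} v\rangle = \langle v, \E_\pi[(\nabla^2 V)^{-1}]\, v\rangle,
\]
where the expectation passes inside the quadratic form by linearity. This yields $\Sigma \preceq \E_\pi[(\nabla^2 V)^{-1}]$.

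\emph{Lower bound.} Theorem~\ref{th:cramer-rao} with $\E_\pi \nabla f_v = v$ gives
\[
\langle v, \Sigma v\rangle \ge \langle v, (\E_\pi \nabla^2 V)^{-1} v\rangle,
\]
hence $(\E_\pi \nabla^2 V)^{-1} \preceq \Sigma$.

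The main point to check is well-definedness of the expectations and the inverse. Since $\nabla^2 V(x) \succ 0$ pointwise, the average $\E_\pi \nabla^2 V$ is positive definite whenever it is finite, so its inverse exists. If $\E_\pi \nabla^2 V$ is infinite, the lower bound is trivial under the convention $(\infty)^{-1} = 0$. Likewise, if $\E_\pi[(\nabla^2 V)^{-1}]$ is infinite, the upper bound holds vacuously.

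In the paper's setting, assumption~\eqref{eq:log-concave-smooth} removes these concerns: the Hessian is bounded above and below, so both expectations are finite and $\Sigma$ exists because $\pi$ is strongly log-concave. I would only add that $\Sigma$ is finite for strictly log-concave $\pi$, since such a measure has exponential tails and hence finite second moments. There is no genuine obstacle beyond these integrability remarks.
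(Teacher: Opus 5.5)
Your proposal is correct and is the paper's own argument: apply Brascamp--Lieb (Theorem~\ref{th:brascamp-lieb}) for the upper bound and Cram\'er--Rao (Theorem~\ref{th:cramer-rao}) for the lower bound to the linear test functions $f_v(x) = \langle v, x\rangle$, then read off the Loewner inequalities from the quadratic forms in $v$. One small imprecision, which does not affect the argument: strict convexity alone gives only $\nabla^2 V \succeq 0$ pointwise, not $\succ 0$ (e.g.\ $V(x)=x^4$ at the origin); $\E_\pi \nabla^2 V$ is nevertheless positive definite when finite, and your remark that the upper bound is vacuous when $\E_\pi[(\nabla^2 V)^{-1}]$ diverges covers the remaining cases.
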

Equation~\ref{eq:inv-cov-condition} is then an immediate consequence of Corollary~\ref{cor:sandwich} under the assumption in Equation~\ref{eq:true-conditioning}. 

We impose the following assumptions throughout our analysis, unless otherwise specified.

\begin{assumption}
\label{asmp:conditioning}
Assume that there exist a positive-definite matrix $B \in \R^{d \times d}$ and $\alpha \in (0, 1)$ such that
\begin{align}
\label{eq:gen-cond}
\alpha B \preceq \nabla^2 U(x) \preceq B, \quad \forall x \in \R^d.
\end{align}
\end{assumption}

\begin{assumption}
\label{asmp:strong-log-concave-smooth}
Assume that there exist parameters $0 < \mu \leq L$ such that
\begin{align}
\label{eq:strong-log-concave-smooth}
\mu I \preceq \nabla^2 U(x) \preceq L I, \quad \forall x \in \R^d.
\end{align}
\end{assumption}

In line with the literature on log-concave sampling, if $\pi(x) \propto \exp(-U(x))$ for some potential $U : \R^d \to \R$ satisfying Assumption~\ref{asmp:strong-log-concave-smooth}, we say that $\pi$ is $\mu$-strongly log-concave and $L$-smooth. Of course, Assumption~\ref{asmp:strong-log-concave-smooth} implies that the potential $U$ is $\mu$-strongly convex. In particular, it admits a unique global minimizer $x^* \in \R^d$, leading us to our third and final assumption.

\begin{assumption}
\label{asmp:global-minimizer}
We assume oracle access to the global minimizer $x^*$, e.g., via precomputation using a standard convex optimization algorithm.
\end{assumption}

Note that the global minimizer may be computed once using a second-order algorithm such as the Newton--Raphson method, for example. As above, we define $\kappa := L / \mu$ to be the \textit{global} condition number of $\pi$.

To ease notation below, we define two functions that return the empirical mean and (biased) empirical covariance of a set of samples, respectively. In particular, $\widehat{m}_n : (\R^d)^n \to \R^d$ and $\widehat{S}_n : (\R^d)^n \to \R^{d \times d}$ are given by
\begin{align}
\label{eq:sample-mean-cov-def}
    &\widehat{m}_n(x_1, \ldots, x_n) := \frac{1}{n} \sum_{i=1}^n x_i, \notag \\
    &\widehat{S}_n(x_1, \ldots, x_n) := \frac{1}{n} \sum_{i=1}^n (x_i - \widehat{m}_n(x_1, \ldots, x_n))(x_i - \widehat{m}_n(x_1, \ldots, x_n))^\top.
\end{align}
When the set of samples $\{x_1, \ldots, x_n\}$ in question is clear from context, we abuse notation slightly and write $\widehat{m}_n$ and $\widehat{S}_n$. 
\section{First-Order Query Complexity of Log-Concave Sampling}
\label{app:log-concave-complexity}
In this appendix, we provide a high-level overview of recent results on the first-order query complexity of log-concave sampling. We distinguish between low accuracy samplers, whose first-order query complexities depend polynomially on the target error $1/\varepsilon$, and high accuracy samplers, whose dependence on $1/\varepsilon$ is instead polylogarithmic. For the first category of samplers, see Table~\ref{tab:low}. Importantly, ULMC belongs to the family of low accuracy samplers. On the other hand, Table~\ref{tab:high} contains analogous results for high accuracy samplers, including FORS.
\begin{table}
\centering\small
\renewcommand{\arraystretch}{1.3}
\begin{tabular}{@{}llll@{}}
\toprule
\textbf{Method} & \textbf{Metric} & \textbf{Query complexity} & \textbf{Ref.} \\
\midrule
LMC            & $\sqrt{\mu} W_2$           & $\Ot\!\left(\dfrac{\kappa d}{\varepsilon^{2}}\right)$ & \cite{dmm19} \\
LMC            & $\mathcal{R}_q$ & $\Ot\!\left(\dfrac{\kappa^{2} d q}{\varepsilon^{2}}\right)$ & \cite{chewi22poincare} \\
ULMC           & $\mathcal{R}_q$ & $\Ot\!\left(\dfrac{\kappa^{3/2} d^{1/2} q^{1/2}}{\varepsilon}\right)$ & \cite{altschuler-chewi-ulmc} \\
RM--ULMC       & $\mathsf{KL}$   & $\Ot\!\left(\dfrac{\kappa d^{1/3}}{\varepsilon^{2/3}}\right)$ & \cite{altschuler2026shifted} \\
RM--ULMC, low friction & $\mathsf{KL}$ & $\Ot\!\left(\dfrac{\kappa^{5/6} d^{5/3}}{\varepsilon^{2/3}}\right)$ & \cite{altschuler2026shifted} \\
\bottomrule
\end{tabular}

\caption{First-order query complexities for a handful of low-accuracy samplers. RM-ULMC refers to the randomized midpoint version of ULMC, analyzed in \cite{altschuler2026shifted}. Although the query complexity in order to obtain $\varepsilon$ error in $\sqrt{\mathsf{KL}}$ divergence scales better with $\kappa$ and $d$, we require a stronger $\chi^2$-divergence guarantee (equivalently, a Rényi divergence guarantee) for our theoretical analysis.}
\label{tab:low}
\end{table}

\begin{table}
\centering\small
\renewcommand{\arraystretch}{1.3}
\begin{tabular}{@{}lll@{}}
\toprule
\textbf{Method} & \textbf{Query complexity} & \textbf{Ref.} \\
\midrule
Metropolized HMC / MALA, feasible start
  & $\widetilde{\Omega}(\kappa\, d)$ & \cite{lee2021lower} \\
MALA, warm start
  & $\Ot(\kappa\, d^{1/2})$ & \cite{wu2022mala} \\
Prox. sampler $\to$ rejection sampling
  & $\Ot(\kappa\, d^{1/2})$ & \cite{fan23a} \\
ULMC $\to$ MALA $\to$ prox.\ sampler
  & $\Ot(\kappa\, d^{1/2})$ & \cite{altschuler-chewi-ulmc} \\
Metropolized HMC
  & $\Ot\!\bigl((\kappa^{3/2} + \kappa^{1/2}\kappa_{\rm H_2}^{1/2} + \kappa_{\rm H_2}^{2/3})\, d^{1/4}\bigr)$
  & \cite{zhang2026algorithmic} \\
FORS &  $\Ot\!\bigl(\kappa^{2/3} d^{1/3}\bigr)$
  & \cite{chen2026exact} \\
\bottomrule
\end{tabular}

\caption{First-order query complexities for a collection of high-accuracy samplers whose complexity scales as $\mathrm{polylog}(1/\varepsilon)$ with respect to the target error $\varepsilon$. All complexities refer to mixing in TV distance, although the last two rows also apply in $\sqrt{\chi^2}$, $\sqrt{\mu} W_2$, and $\sqrt{\mathsf{KL}}$. The guarantee for metropolized HMC in the penultimate row only holds under the stronger assumption that the Hessian is Lipschitz with respect to the Frobenius norm, with Lipschitz constant $\beta_{\rm H_2}$ and $\kappa_{\rm H_2} := \beta_{\rm H_2} / \mu^{3/2}$.}
\label{tab:high}
\end{table}
\section{First-Order Rejection Sampling}
\label{app:fors}
In order to minimize the dependence on the underlying dimension $d$, we turn to first-order rejection sampling (FORS), as presented in \cite{chen2026exact}. In particular, our theoretical results utilize a particular form of FORS that comes equipped with a strong mixing time guarantee in R\'enyi divergence; see Theorem~\ref{th:fors-mixing} below. For the time being, we only need to provide the formulation of FORS that our algorithm requires.

First, recall the \textit{underdamped Langevin diffusion} with friction parameter $\tilde{\gamma} > 0$ and $V : \R^d \to \R$ is a sufficiently smooth potential:
\begin{align}
\label{eq:uld}
\begin{cases}
d X_t = P_t dt, \\
d P_t = - \nabla V(X_t) dt - \tilde{\gamma} P_t dt + \sqrt{2 \tilde{\gamma}} d B_t.
\end{cases}
\end{align}
The invariant measure of this diffusion process is $\mathbf{\pi}(x, p) \propto \exp(-V(x) - \frac{1}{2}\|p\|_2^2)$, meaning that the $x$ marginal of the invariant measure is precisely the target distribution. Throughout, we abbreviate $z = (x, p) \in \R^{d} \times \R^d$ to denote the coordinates of the diffusion in phase space. To simulate the dynamics in Equation~\ref{eq:uld}, we first fix a base point $z_0 = (x_0, p_0)$. Then, the gradient term in Equation~\ref{eq:uld} is frozen at $X_0 = x_0$, yielding a linear SDE of the form
\begin{align}
\label{eq:linear-uld}
\begin{cases}
d \widetilde{X}_t = \widetilde{P}_t dt, \\
d \widetilde{P}_t = - \nabla V(X_0) dt - \tilde{\gamma} \widetilde{P}_t dt + \sqrt{2 \tilde{\gamma}} d B_t,
\end{cases}
\end{align}
with initial condition $\widetilde{X}_0 = x$ and $\widetilde{P}_0 = p$. Denote by $\bm{Q} := \bm{Q}_{z_0}$ the path measure corresponding to $\widetilde{\bm{Z}} := (\widetilde{X}_t, \widetilde{P}_t)_{t \in [0, T]}$, where $T > 0$ is fixed. Now, let
\begin{align*}
    \mu(x) := \frac{1}{\sqrt{2 \tilde{\gamma}}} (\nabla V(x) - \nabla V(x_0)).
\end{align*}
Applying Girsanov's theorem, there exists a path measure $\bm{P} = \bm{P}_{z_0}$ defined by
\begin{align}
\label{eq:girsanov-change}
\frac{d \bm{P}}{d \bm{Q}} = \exp\left(-\int_0^T \mu(\widetilde{X}_t) \cdot dB_t - \frac{1}{2} \int_0^T \|\mu(\widetilde{X}_t)\|^2 dt\right)
\end{align}
such that the process $\widetilde{B}_t := B_t + \int_0^t \mu(\widetilde{X}_s) ds$ is a $\bm{P}$-Brownian motion and the process $\widetilde{\bm{Z}}$ follows the dynamics in Equation~\ref{eq:uld}, driven by $\widetilde{B}_t$. To implement FORS, we design an estimator for the logarithm of Radon--Nikodym derivative in Equation~\ref{eq:girsanov-change}. For the second term in the exponent, note that such an estimator can be constructed easily by sampling $t \sim \mathrm{Unif}([0, T])$. On the other hand, to handle the stochastic integral, we first must discretize. Given an integer $N \in \N$, take $h = T / N$ to be the corresponding step size. Then, we can approximate the stochastic integral as
\begin{align*}
    \sum_{j=0}^{N-1} \mu(\widetilde{X}_{jh}) \cdot (B_{(j+1)h} - B_{jh}) &= \sum_{j=1}^N (\mu(\widetilde{X}_{jh}) - \mu(\widetilde{X}_{(j-1)h})) \cdot (B_{T} - B_{jh}) \\
    &= \frac{1}{\sqrt{2 \tilde{\gamma}}} \sum_{j=1}^N (\nabla V(\widetilde{X}_{jh}) - \nabla V(\widetilde{X}_{(j-1)h})) \cdot (B_{T} - B_{jh}).
\end{align*}

In particular, to give an unbiased estimate Equation~\ref{eq:girsanov-change}, discretized with step size $h$, we sample $t \sim \mathrm{Unif}([0, T])$. If $t \in [(j-1)h, jh)$, the estimate in question is then
\begin{align}
\label{eq:disc-reweighting-factor}
    -W(t; \widetilde{\bm{Z}}) = \frac{T}{\sqrt{2 \tilde{\gamma}} h} (\nabla V(\widetilde{X}_{jh}) - \nabla V(\widetilde{X}_{(j-1)h})) \cdot (B_{T} - B_{jh}) + \frac{T}{2} \|\mu(\widetilde{X}_t)\|^2.
\end{align}

Using $O(1)$ unbiased samples of the form Equation~\ref{eq:disc-reweighting-factor}, FORS approximates the change-of-measure in Equation~\ref{eq:girsanov-change}, allowing us to approximately sample from the path measure of Equation~\ref{eq:uld}. In order to accomplish this, FORS proceeds using the following rejection sampling process. Below, $\mathrm{Clip}_{[0,1]}(x) := \min\{1, \max\{0, x\}\}$.

\begin{algorithm}
\caption{First-order rejection sampling (FORS)}
\label{alg:fors}
\begin{algorithmic}[1]
\REQUIRE Parameter $B > 0$, proposal distribution $\bm{Q}$, unbiased estimator $W$. 
\FOR{$i = 1, 2, \ldots$}
\STATE Sample $J \sim \mathrm{Poisson}(2 B)$.
\STATE Sample $t_1, \ldots, t_J \sim \mathrm{Unif}([0, T])$.
\STATE Sample $(W(t_1; \bm{\widetilde{Z}}), \ldots, W(t_J; \bm{\widetilde{Z}}), \widetilde{Z}_T)$ following the proposal $\bm{\widetilde{Z}} \sim \bm{Q}$.
\STATE $p_J \gets \prod_{j=1}^J \mathrm{Clip}_{[0,1]}\left(\frac{B + W(t_j; \bm{\widetilde{Z}})}{2B}\right)$
\STATE Output $\widetilde{Z}_T$ with probability $p_J$.
\ENDFOR
\end{algorithmic}
\end{algorithm}

Although the above algorithm may appear mysterious at a glance, its output satisfies a rather concrete guarantee.

\begin{theorem}[Theorem 2.3 in \cite{chen2026exact}]
\label{th:fors-guarantee}
Algorithm~\ref{alg:fors} outputs a random point $\widetilde{Z}_T$ whose distribution $\widehat{\mu}$ is the marginal of a path measure $\widehat{\bm{P}}$, satisfying
\begin{align*}
    \frac{d \widehat{\bm{P}}}{d \bm{Q}} \propto \exp(\widehat{w}(\widetilde{\bm{Z}})), \quad \widehat{w}(\widetilde{\bm{Z}}) := \frac{1}{T} \int_0^T \mathrm{Clip}_{[-B, B]}(W(t; \widetilde{\bm{Z}})) dt,
\end{align*}
and with probability at least $1 - \delta$, the number of queries of $W$ is bounded by $3 B e^{2B} \log(2 / \delta)$.
\end{theorem}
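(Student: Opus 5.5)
The proof splits into two independent parts: identifying the law of the output, which follows from a Poissonization computation, and bounding the number of queries, which follows from a geometric tail for the number of rounds plus a Poisson tail for the number of estimator calls.

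\textbf{Law of the output.} Condition on a proposal path $\widetilde{\bm{Z}} \sim \bm{Q}$. Given the path (including its driving Brownian motion), $W(t;\widetilde{\bm{Z}})$ is a deterministic function of $t$, and the times $t_1,\dots,t_J$ are i.i.d.\ uniform and independent of $J$. Write
\begin{align*}
a(\widetilde{\bm{Z}}) := \frac{1}{T}\int_0^T \mathrm{Clip}_{[0,1]}\Bigl(\frac{B+W(t;\widetilde{\bm{Z}})}{2B}\Bigr)\,dt .
\end{align*}
The conditional acceptance probability is then $\E_J[a^J]$ with $J \sim \mathrm{Poisson}(2B)$. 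By the Poisson generating function this equals $\exp(-2B + 2Ba)$. The elementary identity $\mathrm{Clip}_{[0,1]}\bigl((B+w)/2B\bigr) = \tfrac12 + \mathrm{Clip}_{[-B,B]}(w)/(2B)$ gives $2Ba = B + \widehat w(\widetilde{\bm{Z}})$. Hence the acceptance probability is exactly $\exp(-B + \widehat w(\widetilde{\bm{Z}}))$.

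Since the rounds are i.i.d., the accepted path has law $\widehat{\bm{P}}$ with $d\widehat{\bm{P}}/d\bm{Q} \propto \exp(\widehat w)$. Its time-$T$ marginal is the output $\widehat\mu$, which proves the first claim.

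\textbf{Query bound.} Because $\widehat w \ge -B$, each round is accepted with probability at least $e^{-2B}$, uniformly over everything that happened in earlier rounds. Let $I$ be the number of rounds. Then
\begin{align*}
\Pr(I > m) \le (1-e^{-2B})^m \le \exp(-m e^{-2B}),
\end{align*}
so taking $m = e^{2B}\log(2/\delta)$ gives $\Pr(I>m) \le \delta/2$.

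On the event $I \le m$, the total number of $W$-queries is at most $\sum_{i\le m} J_i$. Since the $J_i$ are independent $\mathrm{Poisson}(2B)$, this sum is $\mathrm{Poisson}(\lambda)$ with $\lambda = 2Bm$. The Chernoff bound gives
\begin{align*}
\Pr\bigl(\mathrm{Poisson}(\lambda)\ge 1.5\lambda\bigr) \le \exp\bigl(-\lambda(1.5\log 1.5 - 0.5)\bigr).
\end{align*}
Note that $1.5\lambda = 3Be^{2B}\log(2/\delta)$ is exactly the target bound. A union bound over the two failure events then yields total failure probability at most $\delta$.

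\textbf{Main obstacle.} The delicate step is the constant bookkeeping in the Poisson tail. We need $\lambda(1.5\log1.5-0.5) \ge \log(2/\delta)$, i.e.\ $2B e^{2B}\cdot 0.108 \ge 1$, which holds once $B$ exceeds a modest absolute constant (roughly $B \ge 1$). For smaller $B$ I would first try a slightly looser multiplicative constant, or handle that regime separately: there $J$ is typically $0$ and the bound follows from the geometric tail alone. I would also check that the definition of $W$ involves no hidden dependence across rounds, so that the rounds really are i.i.d.\ and the sequential stopping argument is valid.
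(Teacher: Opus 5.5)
Your argument is correct for $B$ bounded below by a constant, and it is the standard proof for this Poisson-estimator rejection scheme. The Poisson generating function gives the conditional acceptance probability $e^{-2B+2Ba}=e^{-B+\widehat w}$. The query count is then controlled by a geometric tail on the number of rounds $I$ plus a Chernoff bound on $\sum_{i\le m}J_i$. Fixing a deterministic $m$ and pre-sampling all the $J_i$ is exactly what avoids the bias that conditioning on acceptance would introduce. The paper does not prove this statement; it imports it from \cite{chen2026exact}.

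The one thing to change is your plan for small $B$. That regime cannot be handled separately, because the bound as stated is false there. Take $B=0.05$ and $\delta=0.01$: then $3Be^{2B}\log(2/\delta)\approx 0.88<1$, so the claim would force zero queries with probability at least $0.99$. Yet the first round alone makes $J_1\ge 1$ queries with probability $1-e^{-0.1}\approx 0.095$.

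Loosening the multiplicative constant is not an option, since the constant $3$ is part of the statement. What the statement implicitly needs is a lower bound on $B$. This is harmless, since the paper uses $B\in[1,2]$.

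Under $B\ge 1$, your exponent is $2Be^{2B}(1.5\log 1.5-0.5)\ge 1.59$. That slack also absorbs the integrality of $m=e^{2B}\log(2/\delta)$, which you glossed over:
\begin{itemize}
\item $\Pr(I>m)\le(1-e^{-2B})^{\lfloor m\rfloor}\le(1-e^{-2})^{-1}e^{-me^{-2B}}\le 0.58\,\delta$.
\item $\sum_{i\le\lfloor m\rfloor}J_i$ is stochastically dominated by $\mathrm{Poisson}(2Bm)$, whose probability of reaching $3Bm=3Be^{2B}\log(2/\delta)$ is at most $(\delta/2)^{1.59}\le 0.34\,\delta$.
\end{itemize}
The two failure probabilities sum to less than $\delta$, which closes the argument.
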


The clipping parameter $B > 0$ directly controls the number of queries to $W$, defined in Equation~\ref{eq:disc-reweighting-factor}, that must be made during a single call to Algorithm~\ref{alg:fors}. Although taking this parameter larger will decrease the error incurred by FORS, the exponential dependence of the query complexity of Algorithm~\ref{alg:fors} on $B$ means that we must take $B$ to be a small constant in practice. For example, the authors of \cite{chen2026exact} take $B \in [1, 2]$. Fortunately, even for this choice of $B$, the error incurred by FORS can be controlled precisely. The following theorem is then adapted from Theorem 3.2(ii) in \cite{chen2026exact}, specialized to the case of a distribution satisfying Assumption~\ref{asmp:strong-log-concave-smooth}.

\begin{theorem}
\label{th:fors-mixing}
Suppose that $\pi(x) \propto \exp(-V(x))$ satisfies Assumption~\ref{asmp:strong-log-concave-smooth}. For any $q \geq 2$, set $\Delta_q := d + \sR_{2q}(\nu \| \pi)$, where $(X_0, P_0) \sim \nu$ and $\pi$ is the invariant measure of the underdamped Langevin diffusion in Equation~\ref{eq:uld} with friction $\widetilde{\gamma} = \Theta(\sqrt{\mu})$. Denoting
\begin{align*}
    M_q := q + \log\left(\frac{q \kappa^{2/3} \Delta_q^{1/3} \log(q \Delta_q / \varepsilon^2)}{\varepsilon^2}\right),
\end{align*}
we may take
\begin{align*}
    T = \Theta\left(\mu^{-1/2} \kappa^{-2/3} \Delta_q^{-1/3} M_q^{-2/3}\right)
\end{align*}
and
\begin{align*}
    N = \Theta\left(\kappa^{2/3} \Delta_q^{1/3} M_q^{2/3} \log(q \Delta_q / \varepsilon^2)\right).
\end{align*}
so that, with probability at least $1 - p$, the output distribution $\widehat{\nu}$ of $N$ steps of Algorithm~\ref{alg:fors} with timestep $T$ satisfies $\sR_q(\widehat{\nu} \| \pi) \leq \varepsilon^2$ with $O(N + \log(1 / p))$ first-order queries.
\end{theorem}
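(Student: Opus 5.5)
The plan is to obtain Theorem~\ref{th:fors-mixing} by directly specializing Theorem~3.2(ii) of \cite{chen2026exact}, rather than by re-deriving the FORS analysis. I would first check its hypotheses. Under Assumption~\ref{asmp:strong-log-concave-smooth}, $V$ is $\mu$-strongly convex and $L$-smooth. The friction is $\widetilde{\gamma}=\Theta(\sqrt{\mu})$. The initialization $\nu$ enters only through $\sR_{2q}(\nu\|\pi)$, which is exactly the warmness parameter in their statement.

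Next, I would reduce to $\mu=1$ by the rescaling $x\mapsto \sqrt{\mu}\,x$, with time rescaled by $\mu^{-1/2}$. This rescaling leaves $\kappa$, $d$, and all R\'enyi divergences unchanged. It multiplies the admissible horizon $T$ by $\mu^{-1/2}$, which accounts for the prefactor in the stated choice of $T$.

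The substantive content I would reconstruct as follows, to make sure the stated parameters are mutually consistent.

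\textbf{Per-step bias.} One FORS step with horizon $T$ samples exactly from the clipped reweighting $\widehat{\bm P}$ of Theorem~\ref{th:fors-guarantee}. The R\'enyi discrepancy between $\widehat{\bm P}$ and the true path measure $\bm P$ is controlled by the tail of $W$ beyond the clip level $B$, together with the discretization of the stochastic integral.

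\textbf{Variance of $W$.} Using $L$-smoothness and the fact that $\widetilde X_t$ moves by $O(T\|P\|+T^2\|\nabla V\|)$, the variance of $W$ scales like a power of $T$ times $L^2$ times $d+\Delta_q$. Choosing $T\asymp \kappa^{-2/3}\Delta_q^{-1/3}M_q^{-2/3}$ (in units with $\mu=1$) makes the per-step R\'enyi bias at most of order $\varepsilon^2/N$. The logarithmic factor $M_q$ absorbs the order-$q$ moment and the logarithm arising from the union over steps.

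\textbf{Number of steps.} The exact underdamped Langevin diffusion with $\widetilde\gamma=\Theta(1)$ contracts R\'enyi divergence at rate $\Theta(1)$ in continuous time. Total time $NT\asymp \log(q\Delta_q/\varepsilon^2)$ therefore drives the initial $\sR_{2q}$ warmness below $\varepsilon^2/2$. This gives the stated $N$.

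\textbf{Combining the errors.} I would glue the contraction of the exact chain to the accumulated bias using data processing and the weak triangle inequality for R\'enyi divergences. That inequality costs a doubling of the order, which is why $\Delta_q$ involves $\sR_{2q}$.

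\textbf{Query count.} With $B$ a fixed constant (say $B\in[1,2]$), Theorem~\ref{th:fors-guarantee} gives each step a query count with exponential tails and constant mean. The sum over $N$ independent steps is a sum of sub-exponential variables, so Bernstein's inequality shows it is at most $O(N+\log(1/p))$ with probability $1-p$. This is cleaner than a union bound over steps, which would cost $N\log(N/p)$.

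I expect the main obstacle to be the R\'enyi error accumulation across $N$ steps. The order doubling in the weak triangle inequality and the need for bias bounds uniform over the intermediate laws (which are only known to be warm, not stationary) are delicate. Here I would rely on the inductive warmness-propagation argument in \cite{chen2026exact} rather than reprove it, checking only that our parameter choices satisfy their conditions.
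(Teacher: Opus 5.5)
Your proposal is correct and follows the same route as the paper. The paper does not reprove anything; it obtains this statement by specializing Theorem~3.2(ii) of \cite{chen2026exact} to targets satisfying Assumption~\ref{asmp:strong-log-concave-smooth}, and your rescaling to $\mu = 1$ (which accounts for the $\mu^{-1/2}$ in $T$) and your Bernstein-type $O(N + \log(1/p))$ query bound correctly fill in details the paper leaves implicit, provided you use that the per-step tails of Theorem~\ref{th:fors-guarantee} hold uniformly conditional on the current state, since the steps are not literally independent.
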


In our setting, we can control the initialization error for approximately sampling from the invariant measure of the underdamped Langevin diffusion in Equation~\ref{eq:uld} by taking an appropriate Gaussian initialization. If the target distribution satisfies Assumption~\ref{asmp:strong-log-concave-smooth}, and Algorithm~\ref{alg:fors} is initialized with $(X_0, P_0) \sim \nu_0 := \sN(x^*, L^{-1} I_d) \otimes \sN(0, I_d)$, then the following technical lemma allows us to bound the initialization error $\sR_{2q}(\nu_0 \| \pi)$, where $\pi(x, p) \propto \exp(-V(x) - \frac{1}{2}\|p\|_2^2)$ is the target measure. For any \textit{any} $q \geq 1$, we can prove the following initialization guarantee.

\begin{lemma}
\label{lem:gaussian-initialization-renyi}
Borrowing notation from \cite{chen2026exact}, if $\pi_V(x) = Z^{-1} \exp(-V(x))$ is the first marginal of the invariant measure of the underdamped Langevin diffusion in Equation~\ref{eq:uld}, then for all $q > 1$ 
\begin{align*}
    \sR_{2q}(\nu_0 \| \pi) = \sR_{2q}(\sN(x^*, L^{-1} I_d) \| \pi_V) \leq \frac{d}{2} \log \kappa.
\end{align*}
\end{lemma}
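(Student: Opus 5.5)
The plan is to bound the density ratio $d\nu_0/d\pi$ uniformly by $\kappa^{d/2}$ and then use the elementary fact that every R\'enyi divergence is at most the log of the supremum of the density ratio.

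\textbf{Step 1: reduce to the position marginal.} Both $\nu_0 = \sN(x^*, L^{-1} I_d)\otimes \sN(0,I_d)$ and $\pi(x,p)\propto \exp(-V(x)-\tfrac12\|p\|^2)$ have the same $\sN(0,I_d)$ momentum factor, and the two measures are products. Hence the density ratio factorizes as
\[
\frac{d\nu_0}{d\pi}(x,p) = \frac{d\,\sN(x^*,L^{-1}I_d)}{d\pi_V}(x).
\]
This gives the first equality $\sR_{2q}(\nu_0\|\pi) = \sR_{2q}(\sN(x^*,L^{-1}I_d)\|\pi_V)$ directly from Definition~\ref{def:chi2-renyi}.

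\textbf{Step 2: a uniform bound on the density ratio.} Write $Z=\int e^{-V}$ and $\rho := d\,\sN(x^*,L^{-1}I_d)/d\pi_V$. Then
\[
\rho(x) = \left(\frac{L}{2\pi}\right)^{d/2} Z \exp\!\Big(V(x) - \tfrac{L}{2}\|x-x^*\|^2\Big).
\]
Since $\nabla V(x^*)=0$, Assumption~\ref{asmp:strong-log-concave-smooth} gives two Taylor bounds around $x^*$:
\begin{itemize}
\item $L$-smoothness gives $V(x)\le V(x^*)+\tfrac L2\|x-x^*\|^2$, so the exponent is at most $V(x^*)$.
\item $\mu$-strong convexity gives $V(x)\ge V(x^*)+\tfrac\mu2\|x-x^*\|^2$, so $Z\le e^{-V(x^*)}(2\pi/\mu)^{d/2}$.
\end{itemize}
Multiplying these, the factors $e^{\pm V(x^*)}$ and $(2\pi)^{d/2}$ cancel, and $\sup_x \rho(x)\le (L/\mu)^{d/2}=\kappa^{d/2}$.

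\textbf{Step 3: from the supremum to the R\'enyi bound.} For any order $r=2q>1$,
\[
\int \rho^{r}\,d\pi_V \le (\sup\rho)^{r-1}\int \rho\, d\pi_V = (\sup\rho)^{r-1}.
\]
Therefore
\[
\sR_r = \frac{1}{r-1}\log\int\rho^r d\pi_V \le \log\sup\rho \le \frac d2\log\kappa,
\]
uniformly in $q$.

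None of these steps is a real obstacle. The only point needing care is Step 2: the lower bound on $V$ must be used to control $Z$, and the upper bound on $V$ to control the exponent. The variance $L^{-1}$ of the initialization is exactly what makes the $\|x-x^*\|^2$ terms cancel in the exponent.
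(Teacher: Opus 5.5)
Your proposal is correct and follows the paper's proof essentially step for step. The product structure gives the first equality. Smoothness bounds the exponent $V(x)-\tfrac{L}{2}\|x-x^*\|^2$ by $V(x^*)$, and strong convexity bounds $Z$, which yields $\sup\rho\le\kappa^{d/2}$. Your Step 3 makes explicit the passage from a uniform density-ratio bound to a R\'enyi bound, which the paper leaves implicit; the paper's final line also loosens the bound to $d\log\kappa$, whereas your argument delivers the stated $\frac{d}{2}\log\kappa$ directly.
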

\begin{proof}
The first equality above holds because the initialization $\nu_0$ and the invariant measure $\pi$ are product measures that share the same marginal in $p$. For the remaining inequality, observe that the density ratio in question can be bounded pointwise as
\begin{align*}
    Z \left(\frac{L}{2\pi}\right)^{d/2} \exp\left(V(x) - \frac{L}{2}\|x - x^*\|_2^2\right) \leq Z \exp(V(x^*)) \left(\frac{L}{2\pi}\right)^{d/2},
\end{align*}
as Assumption~\ref{asmp:strong-log-concave-smooth} ensures that
\begin{align*}
    V(x) \leq V(x^*) + \frac{L}{2}\|x - x^*\|_2^2,
\end{align*}
for all $x \in \R^d$. Meanwhile, strong convexity implies that
\begin{align*}
    Z = \int_{\R^d} \exp(-V(x)) dx &\leq \exp(-V(x^*)) \int_{\R^d} \exp\left(-\frac{\mu}{2}\|x - x^*\|_2^2\right) dx \\
    &= \exp(-V(x^*)) \left(\frac{2\pi}{\mu}\right)^{d/2}.
\end{align*}
Thus, the density ratio that we must evaluate is bounded pointwise $\kappa^{d/2}$. Hence,
\begin{align*}
    \sR_{2q}(\sN(x^*, L^{-1} I_d) \| \pi_V) \leq \log(\kappa^{d/2}) \leq d \log \kappa
\end{align*}
for all $q > 1$, proving the lemma.
\end{proof}

\section{Underdamped Langevin Monte Carlo}
\label{app:ulmc}
Although the query complexity guarantee of FORS is impressive, the algorithm is somewhat difficult to implement in practice. Instead, practitioners may prefer a more standard discretization of the underdamped Langevin diffusion, described in detail in \cite{altschuler-chewi-ulmc}. Rather than using the Euler--Maruyama discretization of Equation~\ref{eq:uld} to obtain a practical algorithm, we fix the nonlinearity that manifests in the gradient term at each time step, obtaining a piecewise linear SDE that is exactly integrable. In particular, if $h > 0$ is a timestep parameter, then we may consider SDE with piecewise linear coefficients given by
\begin{align}
\label{eq:ulmc}
\begin{cases}
d \widehat{X}_t = \widehat{P}_t dt, \\
d \widehat{P}_t = - \nabla V(\widehat{X}_{kh}) dt - \tilde{\gamma} \widehat{P}_t  dt + \sqrt{2 \tilde{\gamma}} d B_t
\end{cases}
\end{align}
on the time interval $t \in [kh, (k + 1)h]$, where $k \in \N$. Conditioned on the value of $(\widehat{X}_{kh}, \widehat{P}_{kh})$, we have the following characterization of the next pair of iterates $(\widehat{X}_{(k+1)h}, \widehat{P}_{(k+1)h})$.
\begin{proposition}[Lemma 5.3.8 in \cite{chewi-book}]
\label{prop:exact-ulmc-int}
Conditioned on the value of $(\widehat{X}_{kh}, \widehat{P}_{kh})$, the next iterate of
Equation~\ref{eq:ulmc} is distributed as
$(\widehat{X}_{(k+1)h}, \widehat{P}_{(k+1)h}) \sim \sN(M_{(k+1)h}, \Sigma_{\tilde{\gamma}, h})$, where
\begin{align*}
    M_{(k+1)h} := \begin{bmatrix}
    \widehat{X}_{kh} + \tilde{\gamma}^{-1}\left(1 - e^{-\tilde{\gamma} h}\right) \widehat{P}_{kh}
    - \tilde{\gamma}^{-1}\left(h - \tilde{\gamma}^{-1}\left(1 - e^{-\tilde{\gamma} h}\right)\right) \nabla V(\widehat{X}_{kh}) \\
    e^{-\tilde{\gamma} h} \widehat{P}_{kh} - \tilde{\gamma}^{-1}\left(1 - e^{-\tilde{\gamma} h}\right) \nabla V(\widehat{X}_{kh})
    \end{bmatrix}
\end{align*}
and
\begin{align*}
    \Sigma_{\tilde{\gamma}, h} := \begin{bmatrix}
    \frac{2}{\tilde{\gamma}}\left[h - \frac{2}{\tilde{\gamma}}\left(1 - e^{-\tilde{\gamma} h}\right) + \frac{1}{2 \tilde{\gamma}}\left(1 - e^{-2 \tilde{\gamma} h}\right)\right]
    & \frac{1}{\tilde{\gamma}}\left(1 - e^{-\tilde{\gamma} h}\right)^2 \\
    \frac{1}{\tilde{\gamma}}\left(1 - e^{-\tilde{\gamma} h}\right)^2
    & 1 - e^{-2\tilde{\gamma} h}
    \end{bmatrix} \otimes I_d.
\end{align*}
\end{proposition}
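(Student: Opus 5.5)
The plan is to solve the linear SDE in Equation~\ref{eq:ulmc} explicitly on the interval $[kh,(k+1)h]$, then read off the conditional law. Conditioned on $(\widehat{X}_{kh},\widehat{P}_{kh})$, the frozen gradient $g:=\nabla V(\widehat{X}_{kh})$ is a deterministic constant vector. The system is therefore linear with constant forcing, driven additively by Brownian motion. Throughout I shift time so that $s\in[0,h]$ denotes $t-kh$, and I write $B$ for the Brownian increments on this interval.

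First, the momentum equation $d\widehat{P}=-g\,dt-\tilde{\gamma}\widehat{P}\,dt+\sqrt{2\tilde{\gamma}}\,dB$ is an Ornstein--Uhlenbeck equation. Variation of constants (Itô's formula applied to $e^{\tilde{\gamma}s}\widehat{P}_s$) gives
\begin{align*}
\widehat{P}_{kh+s} = e^{-\tilde{\gamma}s}\widehat{P}_{kh} - \tilde{\gamma}^{-1}(1-e^{-\tilde{\gamma}s})\,g + \sqrt{2\tilde{\gamma}}\int_0^s e^{-\tilde{\gamma}(s-r)}\,dB_r .
\end{align*}
Next I integrate $d\widehat{X}=\widehat{P}\,dt$ over $[0,h]$. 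The deterministic parts integrate directly: $\int_0^h e^{-\tilde{\gamma}u}du=\tilde{\gamma}^{-1}(1-e^{-\tilde{\gamma}h})$ and $\int_0^h\tilde{\gamma}^{-1}(1-e^{-\tilde{\gamma}u})du=\tilde{\gamma}^{-1}(h-\tilde{\gamma}^{-1}(1-e^{-\tilde{\gamma}h}))$. These reproduce exactly the mean vector $M_{(k+1)h}$. For the noise part I apply stochastic Fubini, which is justified because the integrand is deterministic and bounded:
\begin{align*}
\int_0^h\!\!\int_0^u e^{-\tilde{\gamma}(u-r)}\,dB_r\,du=\int_0^h \tilde{\gamma}^{-1}\bigl(1-e^{-\tilde{\gamma}(h-r)}\bigr)\,dB_r .
\end{align*}

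Both noise terms are Wiener integrals of deterministic kernels against the same Brownian motion. Hence the pair $(\widehat{X}_{(k+1)h},\widehat{P}_{(k+1)h})$ is jointly Gaussian with the stated mean. Because $B$ has independent coordinates and the kernels are scalar, the covariance has the form $C\otimes I_d$, where $C$ is a $2\times 2$ matrix. The entries of $C$ follow from the Itô isometry after substituting $v=h-r$:
\begin{align*}
C_{xx}&=\tfrac{2}{\tilde{\gamma}}\int_0^h(1-e^{-\tilde{\gamma}v})^2dv,\\
C_{xp}&=2\int_0^h(1-e^{-\tilde{\gamma}v})e^{-\tilde{\gamma}v}dv,\\
C_{pp}&=2\tilde{\gamma}\int_0^h e^{-2\tilde{\gamma}v}dv .
\end{align*}
Expanding the square and integrating term by term gives the closed forms:
\begin{align*}
C_{xx}&=\tfrac{2}{\tilde{\gamma}}\bigl[h-\tfrac{2}{\tilde{\gamma}}(1-e^{-\tilde{\gamma}h})+\tfrac{1}{2\tilde{\gamma}}(1-e^{-2\tilde{\gamma}h})\bigr],\\
C_{xp}&=\tilde{\gamma}^{-1}(1-e^{-\tilde{\gamma}h})^2,\\
C_{pp}&=1-e^{-2\tilde{\gamma}h}.
\end{align*}
These match $\Sigma_{\tilde{\gamma},h}$.

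No step is conceptually hard. The main point of care is the interchange of the time integral and the stochastic integral in the position noise; I would handle it either by the deterministic-integrand stochastic Fubini theorem or by integration by parts, $\int_0^h\int_0^u f\,dB\,du = h\int_0^h f\,dB - \int_0^h u f\,dB$ (adapted to the kernel). Beyond that, the remaining risk is arithmetic: verifying that $2(1-e^{-\tilde{\gamma}h})/\tilde{\gamma}-(1-e^{-2\tilde{\gamma}h})/\tilde{\gamma}$ simplifies to $(1-e^{-\tilde{\gamma}h})^2/\tilde{\gamma}$ in the cross term.
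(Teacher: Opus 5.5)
Your proposal is correct and takes the same approach the paper indicates: treat the frozen-gradient momentum equation as an Ornstein--Uhlenbeck process, integrate the position equation, and read off the Gaussian mean and covariance. The paper only sketches this and defers to Lemma 5.3.8 of the cited book, so your write-up is a fuller version of the same argument. I checked the mean, the stochastic Fubini step and all three It\^o-isometry integrals, including the cross term reducing to $\tilde{\gamma}^{-1}(1-e^{-\tilde{\gamma}h})^2$, and they agree with the stated $M_{(k+1)h}$ and $\Sigma_{\tilde{\gamma},h}$.
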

This result follows straightforwardly upon noting that, given $(\widehat{X}_{kh}, \widehat{P}_{kh})$, the SDE describing the evolution of $\widehat{P}_t$ on the interval $[kh, (k + 1)h]$ is simply the Ornstein--Uhlenbeck process, which can be integrated exactly. In turn, the evolution for $\widehat{X}_t$ is then a linear ODE that depends only on $\widehat{P}_t$. 
\begin{algorithm}
\caption{Underdamped Langevin Monte Carlo (ULMC)}
\label{alg:ulmc}
\begin{algorithmic}[1]
\REQUIRE Potential $V : \mathbb{R}^d \to \mathbb{R}$ with gradient oracle $\nabla V$, global minimizer $x^*$, friction parameter $\tilde{\gamma} > 0$, smoothness parameter $L > 0$, step size $h > 0$, number of iterations $N$ 
\STATE Draw $(\widehat{X}_0, \widehat{P}_0) \sim \sN(x^*, L^{-1} I_d) \otimes \sN(0, I_d)$
\FOR{$k = 0, 1, \ldots, N-1$}
    \STATE $\widehat{V}_{kh} \gets \nabla V(\widehat{X}_{kh})$
    \STATE Sample $(\eta_k, \xi_k) \sim \mathcal{N}(0, \Sigma_{\tilde{\gamma}, h})$
    \STATE $\widehat{X}_{(k+1)h} \gets \widehat{X}_{kh} + \dfrac{1 - e^{-\tilde{\gamma} h}}{\tilde{\gamma}}\, \widehat{P}_{kh} - \dfrac{\tilde{\gamma} h - 1 + e^{-\tilde{\gamma} h}}{\tilde{\gamma}^2} \widehat{V}_{kh} + \eta_k$
    \STATE $\widehat{P}_{(k+1)h} \gets e^{-\tilde{\gamma} h}\, \widehat{P}_{kh} - \dfrac{1 - e^{-\tilde{\gamma} h}}{\tilde{\gamma}} \widehat{V}_{kh} + \xi_k$
\ENDFOR
\RETURN $\widehat{X}_{Nh}$
\end{algorithmic}
\end{algorithm}

Algorithm~\ref{alg:ulmc} forms the backbone of Gaussian cooling, allowing us to sample efficiently at each stage of the algorithm. In practice, we run many ULMC chains in parallel. More concretely, if $n \in \N$, then we may produce $n$ approximate samples from a target distribution with potential $V$ by running Algorithm~\ref{alg:ulmc} $n$ times in parallel. In shorthand, we refer to this procedure by
\begin{align}
\label{eq:n-ulmc-call}
    x_1, \ldots, x_n = \mathrm{ULMC}(V, x^*, \tilde{\gamma}, L, h, N, n)
\end{align}
when Algorithm~\ref{alg:ulmc} is utilized as a subprocess of Gaussian cooling. For simplicity, we summarize all of the parameters needed for an implementation of ULMC:
\begin{itemize}
    \item Relative preconditioning error $\delta > 0$.
    \item Conditioning parameter $\alpha \in (0, 1)$.
    \item Cooling parameter $\gamma \in (0, 1)$.
    \item Augmented conditioning parameter $\Tilde{\alpha} = \min\{\alpha, \gamma\}$.
    \item  Smoothness parameter $L = \tilde{\alpha}^{-1} (1 + \delta)$.
\end{itemize}
The choice of $L = \widetilde{\alpha}^{-1} (1 + \delta)$ comes from Equation~\ref{eq:good-covar-precond}, in which the preconditioned potential is $L$-smooth with this choice of $L$. We replace $\alpha$ with $\tilde{\alpha} \leq \alpha$ because while $\alpha$ itself may not be known, in all applications we consider in this work, a reasonable lower bound on $\alpha$ is known and used as the cooling parameter $\gamma$.

We invoke an intermediate result on the optimal scaling of ULMC, based on the analysis in \cite[Theorem 4.1]{altschuler-chewi-ulmc}. The specific discretization that we take for ULMC is as in \cite[Section 5.3.2]{chewi-book}. Below, $V$ is a potential from which we want to sample. We then form a piecewise system of linear SDEs, discretizing the underdamped Langevin dynamics, that can be integrated \textit{exactly} as in \cite[Lemma 5.3.8]{chewi-book}:
\begin{align}
\label{eq:ulmc-disc}
\begin{cases}
d \widehat{X}_t = \widehat{P}_t dt & t \geq 0, \\
d \widehat{P}_t = - (\nabla V(\widehat{X}_{nh}) + \widetilde{\gamma} \widehat{P}_t)dt + \sqrt{2 \widetilde{\gamma}} d B_t & t \in [nh, (n + 1)h].
\end{cases}
\end{align}
Above, $\widetilde{\gamma}$ is a friction parameter and $h$ is a step size parameter, both of which we specify below. For this discretization of ULMC, we have the following R\'enyi divergence guarantee, which naturally translates to a guarantee in $\chi^2$-divergence.
\begin{theorem}[Theorem 6.3.6 in \cite{chewi-book}]
\label{th:ulmc-mixing}
Assume that $V$ is $\mu$-strongly convex and $L$-smooth and that $\pi(x) \propto \exp(-V(x))$ for $x \in \R^d$, with $x^* \in \R^d$ the unique minimizer of $V$. For each $n \in \N$, let $\widehat{\pi}_{nh} = \mathrm{Law}(\widehat{X}_{nh})$ in Equation~\ref{eq:ulmc-disc}, with friction parameter $\widetilde{\gamma} = \sqrt{2 L}$ and initialization $\widehat{\pi}_0 = \sN(x^*, L^{-1} I_d) \otimes \sN(0, I_d)$. Then, for any $q \geq 2$ and step size $h = \widetilde{\Theta}(\eps / \sqrt{L \kappa d q})$, we obtain $\sR_q(\widehat{\pi}_{N h} \| \pi) \leq \eps^2$ after
\begin{align*}
    N = \widetilde{\Theta}\left(\frac{\kappa^{3/2} d^{1/2} q^{1/2}}{\eps}\right)
\end{align*}
iterations of Equation~\ref{eq:ulmc-disc}.
\end{theorem}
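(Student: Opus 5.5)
The plan is to follow the standard \emph{continuous-time contraction plus discretization error} template for R\'enyi guarantees, as in \cite[Theorem 4.1]{altschuler-chewi-ulmc}. Let $\bm{\pi}(x,p) \propto \exp(-V(x) - \tfrac12\|p\|^2)$ be the invariant law of Equation~\ref{eq:uld}, and let $\mu_t$ be the law at time $t$ of the \emph{exact} underdamped Langevin diffusion started from $\widehat{\pi}_0$. By data processing it suffices to bound the divergence on phase space, since the $x$-marginal of $\bm{\pi}$ is $\pi$. I would then use the weak triangle inequality for R\'enyi divergences:
\begin{align*}
\sR_q(\widehat{\pi}_{Nh} \| \bm{\pi}) \le \frac{q - 1/2}{q-1}\,\sR_{2q}(\widehat{\pi}_{Nh} \| \mu_{Nh}) + \sR_{2q-1}(\mu_{Nh} \| \bm{\pi}).
\end{align*}
This splits the argument into a \emph{mixing term} and a \emph{discretization term}, each to be made at most $\eps^2/2$ up to constants.

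\textbf{Mixing term.} With friction $\widetilde{\gamma} = \sqrt{2L}$ and $V$ $\mu$-strongly convex and $L$-smooth, the underdamped diffusion contracts in R\'enyi divergence at rate $\Theta(\mu/\sqrt{L})$. I would establish this via a hypocoercive argument: a twisted metric on $(x,p)$ under which $\bm{\pi}$ satisfies a log-Sobolev inequality, and a R\'enyi decay computation in that metric. This decay holds for all orders $q$ up to a polynomial prefactor. For the starting point, Lemma~\ref{lem:gaussian-initialization-renyi} bounds $\sR_{2q}(\widehat{\pi}_0 \| \bm{\pi}) \le \tfrac d2 \log\kappa$ uniformly in $q$. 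Running for time
\begin{align*}
T = Nh = \Theta\!\left(\frac{\sqrt{L}}{\mu}\,\log\!\left(\frac{d q \log\kappa}{\eps}\right)\right)
\end{align*}
then drives the mixing term below $\eps^2/2$.

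\textbf{Discretization term.} Both processes start from the same $\widehat{\pi}_0$, so I would compare path measures on $[0,T]$ via Girsanov. The drift difference on $[kh,(k+1)h]$ is $\nabla V(X_t) - \nabla V(X_{kh})$, whose norm is at most $L\|X_t - X_{kh}\| \lesssim L h \sup_{s}\|P_s\|$. R\'enyi divergence of order $2q$ between path laws is controlled by an exponential moment of order roughly
\begin{align*}
\frac{q}{\widetilde{\gamma}}\int_0^T \|\nabla V(X_t)-\nabla V(X_{\lfloor t/h\rfloor h})\|^2\,dt \approx \frac{q\,T L^2 h^2 d}{\sqrt L}.
\end{align*}
Substituting $T \approx \sqrt{L}/\mu$, this is $\approx q\kappa L h^2 d$. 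Requiring it to be at most $\eps^2$ gives $h = \widetilde{\Theta}(\eps/\sqrt{L\kappa d q})$. Consequently
\begin{align*}
N = T/h = \widetilde{\Theta}\!\left(\frac{\sqrt L}{\mu}\cdot\frac{\sqrt{L\kappa d q}}{\eps}\right) = \widetilde{\Theta}\!\left(\frac{\kappa^{3/2} d^{1/2} q^{1/2}}{\eps}\right),
\end{align*}
matching the claimed bound.

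\textbf{Main obstacle.} The hard step is making the Girsanov bound rigorous. It requires an \emph{exponential} moment of $\int \|P_t\|^2\,dt$ along the whole trajectory, not just second moments. It also requires the $\widetilde{\Theta}$ to hide only logarithmic factors rather than an accumulation linear in $T$ with bad constants. I would first try to prove sub-Gaussian concentration of $\sup_{t\le T}\|P_t\|$ under the diffusion started from the Gaussian initialization, using strong convexity and a Lyapunov function. If the long-horizon exponential moment is too lossy, I would instead use the local-error / shifted-composition framework of \cite{altschuler-chewi-ulmc}. That framework bounds one-step R\'enyi errors, which are local Girsanov terms over a single step $h$, and propagates them through the contraction. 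This avoids a global change of measure and yields the same $h$ and $N$ scalings.
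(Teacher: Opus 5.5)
There is a genuine gap, but it sits in the mixing term, not in the Girsanov step you flag. (The paper itself gives no proof here; it imports the result from \cite[Theorem 6.3.6]{chewi-book}, based on \cite[Theorem 4.1]{altschuler-chewi-ulmc}.) Your decomposition requires $\sR_{2q-1}(\mu_{Nh}\|\bm{\pi})\le\eps^2/2$ for the \emph{exact} underdamped diffusion after time $\widetilde{O}(\sqrt{L}/\mu)$. That means R\'enyi decay at a $q$-independent rate $\Theta(\mu/\sqrt{L})$, with only logarithmic dependence on the initial divergence $\approx d\log\kappa$.

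The mechanism you describe cannot produce this. Write $f=d\mu_t/d\bm{\pi}$. The transport part of the kinetic generator is antisymmetric in $L^2(\bm{\pi})$ and drops out, leaving
\begin{align*}
\frac{d}{dt}\int f^q\,d\bm{\pi} = -q(q-1)\,\widetilde{\gamma}\int f^{q-2}\,\|\nabla_p f\|^2\,d\bm{\pi}.
\end{align*}
This dissipation vanishes for every nonconstant $f$ that depends on $x$ alone, while $\sR_q>0$ for such $f$. So no log-Sobolev inequality for $\bm{\pi}$, in any twisted metric, can bound $\sR_q$ by it. Closing the argument needs a genuinely hypocoercive Lyapunov functional with mixed $x$--$p$ derivative terms. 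Such twisted-entropy arguments are known for KL, but not off the shelf for $\sR_q$, and this is exactly the obstacle the cited analysis was built to avoid.

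The $q$-independence of the rate is also load-bearing in your bookkeeping. An LSI-type rate degrading like $\mu/(q\sqrt{L})$, as for R\'enyi decay of reversible diffusions, forces $T\approx q\sqrt{L}/\mu$. That gives $h\approx\eps/(q\sqrt{L\kappa d})$ and $N\approx q^{2}\kappa^{3/2}d^{1/2}/\eps$, which do not match the statement.

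The cited argument never uses continuous-time R\'enyi contraction. Mixing comes from a coupling contraction of ULMC in a twisted norm, which is elementary under strong log-concavity with friction of order $\sqrt{L}$. This contraction is controlled in an Orlicz--Wasserstein (sub-Gaussian) sense. The residual coupling distance is then converted into a R\'enyi bound by a shifted-divergence, Harnack-type regularization argument, so discretization error enters only through local comparisons.

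By comparison, your global Girsanov step is the standard part. Over a horizon $\widetilde{O}(\sqrt{L}/\mu)$, the needed exponential moment of $q^2L^{3/2}h^2\int_0^T\|P_t\|^2\,dt$ follows from sub-Gaussian concentration of the momentum. Your fallback to the shifted-composition framework points the right way, but only if what is propagated is a Wasserstein-type contraction, with the upgrade to R\'enyi done by regularization at the end. As you phrase it (one-step R\'enyi errors propagated ``through the contraction''), it still presupposes the missing R\'enyi contraction of the diffusion.
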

Using this as our sampler at each stage, we can fully characterize the query complexity of Gaussian cooling as in Corollary~\ref{cor:ulmc-query-complexity}. To be as explicit as possible, we quantify all the parameters that are required to use ULMC as the sampler for Gaussian cooling in Table~\ref{tab:params}.

\begin{table}
\centering\small
\renewcommand{\arraystretch}{1.3}
\begin{tabular}{@{}lll@{}}
\toprule
\textbf{Quantity} & \textbf{Symbol} & \textbf{Value} \\
\midrule
ULMC step size                  & $h$                & $\widetilde{\Theta}\!\left(\dfrac{\delta \widetilde{\alpha}^{3/2}}{(1 + \delta)^{3/2} d^{1/2}}\right)$ \\
ULMC friction parameter         & $\widetilde{\gamma}$ & $\sqrt{2 \widetilde{\alpha}^{-1} (1 + \delta)}$ \\
ULMC steps per sample           & $N$                & $\widetilde{\Theta}\!\left(\dfrac{(1 + \delta)^3 d^{1/2}}{\widetilde{\alpha}^{3} \delta}\right)$ \\
Samples per cooling stage       & $n$                & $\Theta\!\left(\dfrac{d (1 + \delta)^4}{\widetilde{\alpha}^4 \delta^2}\right)$ \\
Stages of Gaussian cooling      & $K$                & $O\!\left(\dfrac{\log \kappa}{\log(1 / \gamma)}\right)$ \\
\bottomrule
\end{tabular}

\caption{Parameters passed into ULMC and the number of iterations required for each subprocess of the Gaussian cooling algorithm.}
\label{tab:params}
\end{table}

We can now provide a precise analogue of Corollary~\ref{cor:fors-query-complexity} for ULMC:

\begin{corollary}
\label{cor:ulmc-query-complexity}
ULMC, as presented in Algorithm~\ref{alg:ulmc}, is a low accuracy sampler satisfying
Assumption~\ref{asmp:sampling-algorithm} with constants $c_1 = 3/2$, $c_2 = 1/2$, and $c_3 = 1$.
Consequently, Gaussian cooling with ULMC taking the place of $\mathcal{A}$ satisfies
Equation~\ref{eq:final-precond-guarantee} with probability at least $1 - p$ using
\begin{align*}
    N n K = \begin{cases}
    \widetilde{O}\left(\frac{d^{1/2} \left[d + \log(1/p)\right] \log(\kappa)}{\widetilde{\alpha}^{7} \delta^{3} \log(1 / \gamma)}\right) & \text{general target}, \\
    \widetilde{O}\left(\frac{d^{1/2} \log(1/p) \log(\kappa)}{\widetilde{\alpha}^{7} \delta^{3} \log(1 / \gamma)}\right) & \text{translation-invariant target}
    \end{cases}
\end{align*}
queries to $U$ and its gradient.
\end{corollary}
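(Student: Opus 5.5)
The plan has two parts: first verify that ULMC satisfies Assumption~\ref{asmp:sampling-algorithm} with $(c_1,c_2,c_3)=(3/2,1/2,1)$, then substitute these constants into Theorem~\ref{th:gc-query-complexity-general} and Theorem~\ref{th:query-complexity-translation-invariant}.

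\textbf{The low accuracy guarantee.} Theorem~\ref{th:ulmc-mixing} with $q=2$ gives $\sR_2(\widehat{\pi}_{Nh}\|\pi)\le \varepsilon^2$ after $N=\widetilde{\Theta}(\kappa^{3/2}d^{1/2}\varepsilon^{-1})$ iterations. This holds when $V$ is $\mu$-strongly convex and $L$-smooth, the initialization is $\sN(x^*,L^{-1}I_d)\otimes\sN(0,I_d)$, and the friction is $\sqrt{2L}$. Each iteration of Algorithm~\ref{alg:ulmc} makes exactly one gradient query. Hence (\ref{eq:low-accuracy-cond}) holds with the stated constants, and no high-probability term is needed.

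One point must be checked. Inside Algorithm~\ref{alg:gaussian-cooling}, ULMC is applied to the preconditioned potential $\widetilde{U}_k$ and started at the mapped minimizer $(\widetilde{\Sigma}_n^{(k-1)})^{-1/2}x^*$. The smoothness parameter used there is $L=\widetilde{\alpha}^{-1}(1+\delta)$, with a correspondingly bounded condition number. The proof of Theorem~\ref{th:gc-query-complexity-general} shows, on the high-probability event where (\ref{eq:sample-covar-requirement}) holds at the previous stage, that this potential has condition number $O(\widetilde{\alpha}^{-2})$ (cf.\ Equation~\ref{eq:good-covar-precond}). This justifies applying Theorem~\ref{th:ulmc-mixing} stage by stage with $\kappa$ replaced by $O(\widetilde{\alpha}^{-2})$; that replacement is the source of the factor $\widetilde{\alpha}^{-2c_1}$.

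\textbf{Assembling the count.} Substituting into Theorem~\ref{th:gc-query-complexity-general} gives the following quantities:
\begin{align*}
N = \widetilde{\Theta}\bigl(\widetilde{\alpha}^{-3} d^{1/2}\delta^{-1}\bigr), \qquad
n = \Theta\Bigl(\frac{d+\log(K/p)}{\widetilde{\alpha}^4\delta^2}\Bigr), \qquad
K = O\Bigl(\frac{\log\kappa}{\log(1/\gamma)}\Bigr).
\end{align*}
The product is
\begin{align*}
NnK=\widetilde{O}\Bigl(\frac{d^{1/2}[d+\log(1/p)]\log\kappa}{\widetilde{\alpha}^{7}\delta^{3}\log(1/\gamma)}\Bigr).
\end{align*}
Here the $\log K$ term is absorbed into $\widetilde{O}$, and the $(1+\delta)$ factors from Table~\ref{tab:params} are treated as constants for $\delta=O(1)$.

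For translation-invariant targets, Theorem~\ref{th:query-complexity-translation-invariant} replaces $d+\log(K/p)$ by $\log(dK/p)$. Since $\log d$ is hidden in $\widetilde{O}$, this yields the second bound. The probability $1-p$ and Equation~\ref{eq:final-precond-guarantee} are inherited directly from those theorems.

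The main obstacle is the stage-wise conditioning step. The constants in Theorem~\ref{th:ulmc-mixing} are expressed in terms of the conditioning of the preconditioned potential, and this conditioning is guaranteed only on the success event of the previous stage. The argument must therefore run conditionally on that event, using the same union bound already contained in Theorem~\ref{th:gc-query-complexity-general}.
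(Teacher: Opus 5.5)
Your proposal is correct and follows the same route as the paper. You read off $(c_1,c_2,c_3)=(3/2,1/2,1)$ from Theorem~\ref{th:ulmc-mixing} with $q=2$, using one gradient query per iteration, and then multiply $N=\widetilde{\Theta}(\widetilde{\alpha}^{-3}d^{1/2}\delta^{-1})$ by $n$ and $K$ from Theorem~\ref{th:gc-query-complexity-general} (or its translation-invariant counterpart), as summarized in Table~\ref{tab:params}; your point that the stage-wise conditioning $(1+\delta)^2/(\alpha\widetilde{\alpha})$ is available only on the previous stage's success event, and is covered by the existing induction and union bound, matches how the paper's argument is organized.
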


\section{Query Complexity Proofs}
\label{app:query-comp-proofs}
We begin with the proof of Proposition~\ref{prop:gc-induction}, which inductively establishes the preconditioning guarantee afforded by Algorithm~\ref{alg:gaussian-cooling} and gives a bound on the total number of cooling stages required to obtain a good preconditioner of the target distribution, as specified in Equation~\ref{eq:final-precond-guarantee}.
\begin{proof}[Proof of Proposition~\ref{prop:gc-induction}]
By construction,
\begin{align*}
    \alpha B + \gamma^k L I \preceq \nabla^2 U_k(x) \preceq B + \gamma^k L I
\end{align*}
for all $x \in \R^d$, so by Corollary~\ref{cor:sandwich}, it follows that
\begin{align*}
    \alpha B + \gamma^k L I \preceq (\Sigma^{(k)})^{-1} \preceq B + \gamma^k L I.
\end{align*}
Now, in the first step of the algorithm, we have that
\begin{align*}
    L I \preceq \nabla^2 U_0(x) \preceq 2L I, \quad x \in \R^d,
\end{align*}
so $U_0$ is strongly convex with condition number at most $2$. We now proceed inductively, showing that if
\begin{align}
    (1 + \delta)^{-1} \Sigma^{(k-1)} \preceq \widetilde{\Sigma}_n^{(k-1)} \preceq (1 + \delta) \Sigma^{(k-1)},
\end{align}
where $\widetilde{\Sigma}_n^{(k-1)}$ is the preconditioner formed in Line 6 of Algorithm~\ref{alg:gaussian-cooling} from $n$ samples approximately drawn from $\pi_{k-1}$, then $\widetilde{\Sigma}_n^{(k-1)}$ is a good preconditioner for $\pi_{k}$. Indeed, we have that
\begin{align*}
    \nabla^2 U_k(x) \preceq \alpha^{-1}(\alpha B + \alpha \gamma^k L I) \preceq \alpha^{-1} (\Sigma^{(k-1)})^{-1} \preceq \alpha^{-1} (1 + \delta) (\widetilde{\Sigma}_n^{(k-1)})^{-1}
\end{align*}
and
\begin{align*}
    \nabla^2 U_k(x) \succeq \widetilde{\alpha}(B + \gamma^{k-1} L I) \succeq \widetilde{\alpha} (\Sigma^{(k-1)})^{-1} \succeq \widetilde{\alpha} (1 + \delta)^{-1} (\widetilde{\Sigma}_n^{(k-1)})^{-1},
\end{align*}
In particular,
\begin{align*}
    \widetilde{\alpha} (1 + \delta)^{-1} (\widetilde{\Sigma}_n^{(k-1)})^{-1} \preceq \nabla^2 U_k(x) \preceq \alpha^{-1} (1 + \delta) (\widetilde{\Sigma}_n^{(k-1)})^{-1}, \quad x \in \R^d.
\end{align*}
If we take
\begin{align*}
    \widetilde{U}_k(x) := U_k((\widetilde{\Sigma}_n^{(k-1)})^{1/2} x),
\end{align*}
it therefore follows that
\begin{align}
\label{eq:good-covar-precond}
    \widetilde{\alpha} (1 + \delta)^{-1} I \preceq \nabla^2 \widetilde{U}_k(x) \preceq \alpha^{-1} (1 + \delta) I, \quad \forall x \in \R^d.
\end{align}
Hence, $\widetilde{\pi}_k(x) \propto \exp(-\widetilde{U}_k(x))$ is strongly log-concave with condition number at most $\frac{(1 + \delta)^2}{\alpha \widetilde{\alpha}}$.

Finally, suppose that $\gamma^k L \leq \mu$, which holds once $k \geq \log(\kappa) / \log(1/\gamma)$. By Corollary~\ref{cor:sandwich}, Assumption~\ref{asmp:conditioning}, and the fact that $\mu I \preceq \nabla^2 U(x)$, this choice of $k$ allows us to write
\begin{align*}
    (\Sigma^{(k)})^{-1} \preceq B + \gamma^k L I \preceq \alpha^{-1} \nabla^2 U(x) + \nabla^2 U(x)
    = \frac{1 + \alpha}{\alpha} \nabla^2 U(x)
\end{align*}
and
\begin{align*}
    (\Sigma^{(k)})^{-1} \succeq \alpha B + \gamma^k L I \succeq \alpha B \succeq \alpha \nabla^2 U(x).
\end{align*}
Combining these with the inequality $(1 + \delta)^{-1} \Sigma^{(k)} \preceq \widetilde{\Sigma}_n^{(k)} \preceq (1 + \delta) \Sigma^{(k)}$ gives
\begin{align*}
    \frac{\alpha}{(1 + \alpha)(1 + \delta)} (\widetilde{\Sigma}_n^{(k)})^{-1} \preceq \nabla^2 U(x)
    \preceq \frac{1 + \delta}{\alpha} (\widetilde{\Sigma}_n^{(k)})^{-1}
\end{align*}
for all $x \in \R^d$. In turn, $\widetilde{U}(x) := U((\widetilde{\Sigma}_n^{(k)})^{1/2} x)$ has condition number at most 
\begin{align*}
    \frac{(1 + \delta)^2 (1 + \alpha)}{\alpha^2} \leq \frac{2 (1 + \delta)^2}{\alpha^2},
\end{align*}
allowing us to sample from the target distribution with a good preconditioner.
\end{proof}

We now prove Theorem~\ref{th:gc-query-complexity-general}, distinguishing between low accuracy and high accuracy samplers as in Assumption~\ref{asmp:sampling-algorithm}.

\begin{proof}[Proof of Theorem~\ref{th:gc-query-complexity-general} for Low Accuracy $\mathcal{A}$]
In both the low accuracy and high accuracy settings, stage $k = 0$ remains the same. The target distribution is induced by the potential
\begin{align*}
    \widetilde{U}_0(x) := U_0(L^{-1/2} x).
\end{align*}
Because $\nabla^2 \widetilde{U}_0(x) = L^{-1} \nabla^2 U_0(L^{-1/2} x)$, it follows that
\begin{align*}
    I \preceq \nabla^2 \widetilde{U}_0(x) \preceq 2 I, \quad \forall x \in \R^d,
\end{align*}
so that the corresponding measure $\widetilde{\pi}^{(0)} \in \sP_2(\R^d)$ is $1$-strongly log-concave and $2$-smooth. By Theorem~\ref{th:sample-cov-concentration}, if we take $n = \Theta\left(\frac{d}{\delta^2}\right)$ approximate samples from $\widetilde{\pi}^{(0)}$, then with high probability, the resulting sample covariance $\widehat{\Sigma}_n^{(0)}$ satisfies
\begin{align*}
    (1 + \delta)^{-1} \covar(\widetilde{\pi}^{(0)}) \preceq \widehat{\Sigma}_n^{(0)} \preceq (1 + \delta) \covar(\widetilde{\pi}^{(0)}).
\end{align*}
In the low accuracy case, each of the samples can be produced with
\begin{align*}
    N = \widetilde{O}\left(d^{c_2} \delta^{-c_3}\right)
\end{align*}
first-order queries. Now, because
\begin{align*}
    \Sigma^{(0)} = L^{-1} \covar(\widetilde{\pi}^{(0)}),
\end{align*}
it follows that
\begin{align*}
    (1 + \delta)^{-1} \Sigma^{(0)} \preceq L^{-1} \widehat{\Sigma}_n^{(0)} \preceq (1 + \delta) \Sigma^{(0)}.
\end{align*}
Taking $\widetilde{\Sigma}_n^{(0)} := L^{-1} \widehat{\Sigma}_n^{(0)}$, it follows that $\widetilde{\Sigma}_n^{(0)}$ is a good preconditioner for $\pi_0$.

At stage $k$ of Gaussian cooling, assume that we have formed a sample covariance matrix $\widetilde{\Sigma}_{n}^{(k-1)}$ such that
\begin{align*}
    (1 + \delta)^{-1} \Sigma^{(k-1)} \preceq \widetilde{\Sigma}_{n}^{(k-1)} \preceq (1 + \delta) \Sigma^{(k-1)},
\end{align*}
allowing us to form a potential
\begin{align*}
    \widetilde{U}_k(x) := U_k((\widetilde{\Sigma}_{n}^{(k-1)})^{1/2} x)
\end{align*}
with corresponding measure $\widetilde{\pi}^{(k)} \in \sP_2(\R^d)$ that is $\widetilde{\alpha}(1 + \delta)^{-1}$-strongly log-concave and $\widetilde{\alpha}^{-1}(1 + \delta)$-smooth.

Note that if $\sR_2(\widehat{\pi} \| \pi) \leq \varepsilon^2$, then
\begin{align*}
    \chi^2(\widehat{\pi} \| \pi) \leq \exp(\varepsilon^2) - 1.
\end{align*}
Leveraging the guarantee in Theorem~\ref{th:sample-cov-concentration}, we take
\begin{align*}
    c(\delta) := \min\left\{\left(\sqrt{3 - (1 + \delta)^{-1/2}} - \sqrt{2}\right)^2, \frac{1}{16}\right\}
\end{align*}
In turn, let $\kappa = \frac{(1 + \delta)^2}{\widetilde{\alpha}^2}$, $L = \widetilde{\alpha}^{-1}(1 + \delta)$, and $\varepsilon = \sqrt{\log(c(\delta) + 1)}$. With these parameters, after
\begin{align*}
    N = \widetilde{\Theta}\left(\kappa^{c_1} d^{c_2}(\log(c(\delta) + 1))^{-c_3 / 2}\right) = \widetilde{\Theta}\left(\left(\frac{1 + \delta}{\widetilde{\alpha}}\right)^{2 c_1} d^{c_2} \delta^{-c_3}\right)
\end{align*}
first-order queries, Assumption~\ref{asmp:sampling-algorithm} ensures that the resulting samples $X_1, \ldots, X_n \overset{i.i.d.}{\sim}\widehat{\pi}_{N}^{(k)}$ are such that
\begin{align*}
    \chi^2(\widehat{\pi}_{N}^{(k)} \| \widetilde{\pi}^{(k)}) \leq c(\delta).
\end{align*}
As long as we take
\begin{align*}
    n = \Theta\left(\frac{d (1 + \delta)^4}{\widetilde{\alpha}^4 \delta^2}\right)
\end{align*}
samples, Theorem~\ref{th:sample-cov-concentration} then implies that, with high probability,
\begin{align}
\label{eq:high-prob-intermediate}
    (1 + \delta)^{-1} \covar(\widetilde{\pi}^{(k)}) \preceq \widehat{\Sigma}_n^{(k)} \preceq (1 + \delta) \covar(\widetilde{\pi}^{(k)}),
\end{align}
where $\widehat{\Sigma}_n^{(k)}$ is the sample covariance formed from $n$ samples drawn from $\widehat{\pi}_{N}^{(k)}$.
As
\begin{align*}
    \Sigma^{(k)} := \covar(\pi^{(k)}) = (\widetilde{\Sigma}_n^{(k-1)})^{1/2} \covar(\widetilde{\pi}^{(k)}) (\widetilde{\Sigma}_n^{(k-1)})^{1/2},
\end{align*}
it follows that
\begin{align*}
    (1 + \delta)^{-1}  \Sigma^{(k)} \preceq (\widetilde{\Sigma}_n^{(k-1)})^{1/2} \widehat{\Sigma}_n^{(k)} (\widetilde{\Sigma}_n^{(k-1)})^{1/2} \preceq (1 + \delta)  \Sigma^{(k)},
\end{align*}
allowing us to take $\widetilde{\Sigma}_n^{(k)} := (\widetilde{\Sigma}_n^{(k-1)})^{1/2} \widehat{\Sigma}_n^{(k)} (\widetilde{\Sigma}_n^{(k-1)})^{1/2}$ as a preconditioner for the next stage of cooling.

Finally, we perform a simple union bound over the probability that the concentration in Theorem~\ref{th:sample-cov-concentration} fails at each stage of Gaussian cooling. Specifically, across $K = O\left(\frac{\log(\kappa)}{\log(1 / \gamma)}\right)
$ stages of cooling, there exists constants $C_1, C_2 > 0$ such that as long as we take
\begin{align*}
    n = \frac{C_1 d(1 + \delta)^4}{\widetilde{\alpha}^4 \delta^2}
\end{align*}
samples at each stage, then the probability that our concentration inequality fails at each stage is at most $4 \exp\left(\frac{-C_2 \widetilde{\alpha}^4 \delta^2 n}{(1 + \delta)^4}\right)$.
By a union bound, the total probability of failure is thus at most
\begin{align*}
    4 C_3 \exp\left(-C_2 \frac{\widetilde{\alpha}^4 \delta^2 n}{(1 + \delta)^4}\right) \cdot \frac{\log(\kappa)}{\log(1 / \gamma)}
\end{align*}
for some constant $C_3 > 0$. Rearranging to find the dependence of $n$ on the remaining parameters, we find that in order for the above quantity to be at most $p$ for some arbitrary $p \in (0, 1)$, we must take
\begin{align*}
    n = \Theta\left(\frac{(1 + \delta)^4 \left[d + \log(K / p)\right]}{\widetilde{\alpha}^4 \delta^2}\right).
\end{align*}
samples at each stage. In turn, taking $n$ as above will ensure that Gaussian cooling succeeds with probability at least $1 - p$. This validates the query complexity reported above, as we have only incurred an additional sublogarithmic factor in $\kappa$ to ensure that the algorithm succeeds with high probability.
\end{proof}

In the high accuracy case, the above proof only needs to be modified slightly. Namely, the dependence on $\delta$ improves slightly due to the polylogarithmic dependence of the query complexity of a high accuracy sampler on the target accuracy. However, we must perform an additional union bound to account for the nonzero probability that the number of first-order queries at each stage exceeds the threshold in Assumption~\ref{asmp:sampling-algorithm}. We remark that this additional consideration is specifically due to the fact that FORS relies on rejection sampling.

\begin{proof}[Proof of Theorem~\ref{th:gc-query-complexity-general} for High Accuracy $\mathcal{A}$]
With $c(\delta)$, $\kappa$, $L$, and $\varepsilon$ as in the low accuracy case, the high accuracy setting in Assumption~\ref{asmp:sampling-algorithm} means that a single call to $\mathcal{A}$ produces a sample from a distribution $\widehat{\pi}^{(k)}$ with $\chi^2(\widehat{\pi}^{(k)} \| \widetilde{\pi}^{(k)}) \leq c(\delta)$, and that, with probability at least $1 - \eta$, the call uses at most
\begin{align*}
    N := N_0 + O(\log(1/\eta)), \qquad
    N_0 := \widetilde{\Theta}\left(\left(\frac{1 + \delta}{\widetilde{\alpha}}\right)^{2 c_1} d^{c_2}
    (\log(\delta^{-1}))^{c_3}\right)
\end{align*}
first-order queries. Across all $K$ stages of Gaussian cooling, the sampler $\mathcal{A}$ is called $nK$ times. Performing a union bound, the probability that any call uses more than $N$ queries is at most $nK\eta$. We therefore choose
\begin{align*}
    \eta := \frac{p}{2nK},
\end{align*}
so that this probability is at most $p/2$. We can then directly compute, due to our choice of $n$,
\begin{align*}
    \log(1/\eta) = \log\left(\frac{2nK}{p}\right) = O\left(\log d + \log(\widetilde{\alpha}^{-1}) + \log(\delta^{-1}) + \log K + \log(1/p)\right),
\end{align*}
which is polylogarithmic in all problem parameters. Now, the concentration argument and the union bound over the $K$ stages of cooling from the low accuracy case apply verbatim, with $p/2$ in place of $p$. In particular, taking
\begin{align*}
    n = \Theta\left(\frac{(1 + \delta)^4 \left[d + \log(K / p)\right]}{\widetilde{\alpha}^4 \delta^2}\right)
\end{align*}
samples per stage ensures that, with probability at least $1 - p/2$, the preconditioning guarantee holds at every stage. By a union bound over these two possible options for failure (i.e., the number of queries exceeds the desired threshold or concentration of the sample covariance fails), with probability at least $1 - p$ the preconditioning guarantee holds at every stage and every call to $\mathcal{A}$ uses at most $N$ first-order queries, for a total of at most
\begin{align*}
    nKN = nK N_0 + O\left(nK \log(nK/p)\right)
\end{align*}
first-order queries.
\end{proof}

In the translation-invariant case, instead suppose that we have an estimate $\widehat{\sigma}$ of $\sigma$ such that
\begin{align}
\label{eq:fourier-space-rel-preconditioning}
    \|\diag(\sigma)^{-1/2} \diag(\widehat{\sigma}) \diag(\sigma)^{-1/2} - I\|_2 \leq C(\delta),
\end{align}
where $C(\delta) := 1 - (1 + \delta)^{-1/2}$. Because $\sF$ is unitary, assuming that it is scaled appropriately, we have that
\begin{align*}
    \|\sF \diag(\sigma)^{-1/2} \sF^* \sF \diag(\widehat{\sigma}) \sF^* \sF \diag(\sigma)^{-1/2}  \sF^* - I\|_2 \leq C(\delta),
\end{align*}
which holds if and only if
\begin{align*}
    \|\Sigma^{-1/2} \widehat{\Sigma} \Sigma^{-1/2} - I\|_2 \leq C(\delta),
\end{align*}
where
\begin{align*}
    \widehat{\Sigma} = \sF \diag(\widehat{\sigma}) \sF^*.
\end{align*}
As seen in Appendix~\ref{app:tech-proofs}, this is sufficient to conclude that
\begin{align*}
    (1 + \delta)^{-1/2} \Sigma \preceq \widehat{\Sigma} \preceq (1 + \delta)^{1/2} \Sigma,
\end{align*}
which is precisely the guarantee that we require for our analysis in Appendix~\ref{app:tech-proofs}. Moreover, for Equation~\ref{eq:fourier-space-rel-preconditioning} to hold, it suffices to show that
\begin{align}
\label{eq:fourier-vector-concentration}
    \|\widehat{\sigma} \odot \sigma^{-1} - \mathbf{1}\|_\infty \leq C(\delta),
\end{align}
where $v \odot w \in \R^d$ denotes the Hadamard (entrywise) product of two vectors $v, w \in \R^d$ and $\mathbf{1} \in \R^d$ is the vector of all ones.

Given a set of samples $x_1, \ldots, x_n \overset{i.i.d.}{\sim} \mu$, where $\mu$ is a distribution whose covariance $\Sigma = \covar(\mu)$ we wish to estimate, the natural empirical counterpart to $\sigma$ is
\begin{align}
\label{eq:emp-fourier-sigma}
    \widehat{\sigma} := \frac{1}{n} \sum_{i=1}^n \sF(x_i - \widehat{m}_n) \odot \overline{\sF(x_i - \widehat{m}_n)},
\end{align}
where the complex conjugate on the second term is taken entrywise.
To see where this estimate comes from, note that $\sF^* \Sigma \sF = \diag(\sigma)$. If we similarly conjugate by $\sF$ the (biased) sample covariance $\widehat{\Sigma}_n$ formed from the samples $x_1, \ldots, x_n$, we obtain
\begin{align*}
    \sF^* \widehat{\Sigma}_n \sF &= \frac{1}{n} \sum_{i=1}^n \sF^* (x_i - \widehat{m}_n) (x_i - \widehat{m}_n)^\top \sF \\
    &= \frac{1}{n} \sum_{i=1}^n ((x_i - \widehat{m}_n)^\top \sF)^* (x_i - \widehat{m}_n)^\top \sF.
\end{align*}
Extracting the diagonal from the above expression, our estimate $\widehat{\sigma}$ of $\sigma$ is thus the vector with entries
\begin{align*}
    \widehat{\sigma}_j = \frac{1}{n} \sum_{i=1}^n \overline{(x_i - \widehat{m}_n)^\top \sF e_j}(x_i - \widehat{m}_n)^\top \sF e_j, \quad \forall j \in [d].
\end{align*}
Finally, because the discrete Fourier transform is symmetric, we have that
\begin{align*}
    (x_i - \widehat{m}_n)^\top \sF e_j = (\sF(x_i - \widehat{m}_n))_j,
\end{align*}
which yields the estimate proposed in Equation~\ref{eq:emp-fourier-sigma}. In Appendix~\ref{app:tech-proofs}, we analyze the number of samples required in Equation~\ref{eq:emp-fourier-sigma} to ensure that Equation~\ref{eq:fourier-vector-concentration} holds with high probability, after which the  analysis in both the low and high accuracy cases can be repeated. 

\section{Approximating the Sample Covariance}
\label{app:tech-proofs}
In the proofs below, we will require the following standard definitions pertaining to subgaussian and subexponential distributions, taken from \cite[Chapter 2]{Vershynin2018}.

\begin{definition}
\label{def:subgaussian}
A real-valued random variable $X$ is \textit{subgaussian} if there exists $K_1 > 0$ such that
\begin{align*}
    \P(|X| \geq t) \leq 2 \exp(-t^2 / K_1^2), \quad t \geq 0.
\end{align*}
Equivalently, $X$ is subgaussian if there exists $K_2 > 0$ such that
\begin{align*}
    \E[\exp(X^2 / K_2^2)] \leq 2,
\end{align*}
and we define the corresponding subgaussian norm by
\begin{align*}
    \|X\|_{\psi_2} := \inf\{K > 0 : \E[\exp(X^2 / K^2)] \leq 2\}.
\end{align*}
\end{definition}

The definition of a \textit{subexponential} random variable is entirely analogous, so we only mention the relevant subexponential norm, given by
\begin{align*}
    \|X\|_{\psi_1} := \inf\{K > 0 : \E[\exp(|X| / K)] \leq 2\}.
\end{align*}
The proof of our main result relies on three technical tools. First, we leverage the variational characterization of $\chi^2$-divergence provided in Theorem~\ref{thm:chi2-variational} to describe the relative distance, in spectral norm, between the covariance matrices of two distributions that are close in $\chi^2$-divergence. Then, we show that closeness in $\chi^2$-divergence preserves subgaussianity of log-concave measures. Finally, subgaussianity of the sampled distribution will allow us to show that the sample covariance concentrates well, providing a good preconditioner for the next stage of Gaussian cooling. At a high level, the primary technical contribution of this section is an answer to the following question.
\begin{question}
\label{q:successive-precond}
Suppose that $\{x_1, \ldots, x_n\}$ is a set of approximate samples from a log-concave measure $\mu$, in the sense that $x_i \overset{i.i.d}{\sim} \widehat{\mu}$, where
\begin{align*}
    \mathcal{D}(\widehat{\mu} \| \mu) \leq \varepsilon
\end{align*}
for some divergence $\mathcal{D}$ and all $i \in [n]$. If $\delta > 0$ is fixed, then how many samples do we need to ensure that the empirical covariance $\widehat{\Sigma}_n$ of $\{x_1, \ldots, x_n\}$ satisfies
\begin{align} 
\label{eq:sigma-hat-bound}
(1 + \delta)^{-1} \Sigma \preceq \widehat{\Sigma}_n \preceq (1 + \delta) \Sigma,
\end{align}
where $\Sigma$ is the covariance of $\mu$?
\end{question}

By answering Question~\ref{q:successive-precond}, we guarantee that stage $k$ of Gaussian cooling provides a good preconditioner for stage $k + 1$, as specified by the parameter $\delta > 0$.

\subsection{Preconditioning Under Closeness in $\chi^2$-Divergence}
Perhaps the most natural characterization of $\chi^2$-divergence for our purposes is its variational formulation, elucidated in Appendix~\ref{app:tech-assumps}. For completeness, we recall it here:
\begin{align}
\label{eq:chi-2-var}
\chi^2(\mu \| \nu) = \sup_{g : \R^d \to \R} \frac{(\E_{\mu}[g(X)] - \E_{\nu}[g(Y)])^2}{\Var_{\nu}(g(Y))},
\end{align}
where the supremum is taken over all functions $g: \R^d \to \R$ such that $g \in L^2(\nu)$ and $\Var_{Y \sim \nu}(g(Y)) > 0$. Note that for distributions with all moments finite, such as subgaussian distributions, this class of functions includes all polynomial functions of the form $g(x) = \langle v, x - w\rangle^p$, where $p \in \N$ and $v,w  \in \R^d$ are arbitrary vectors.

To begin, one may attempt to take $g(x) = \langle v, x \rangle$ for some arbitrary unit vector $v \in \R^d$, which yields
\begin{align*}
    \frac{(\E_{\mu}[\langle v, X \rangle] - \E_{\nu}[\langle v, Y \rangle])^2}{\Var_{\nu}(\langle v, Y \rangle)} = \frac{(\langle v, \E[X] - \E[Y] \rangle)^2}{\langle v, \Sigma_2 v \rangle}.
\end{align*}
Hence, if $\chi^2(\mu \| \nu) < \varepsilon$ for some $\varepsilon > 0$, then
\begin{align}
\label{eq:linear-test-bound}
    |\langle v, \E[X] - \E[Y] \rangle| < \sqrt{\varepsilon \langle v, \Sigma_2 v \rangle}.
\end{align}
Although this doesn't give immediate control over the difference between the covariance matrices of $\mu$ and $\nu$ respectively, Equation~\ref{eq:linear-test-bound} will prove to be useful later.

If we instead take $m_1 := \E_{\mu}[X]$ and $\Sigma_1 = \covar_{\mu}(X)$, and similarly let $m_2 := \E_{\nu}[X]$ and $\Sigma_2 := \covar_{\nu}(X)$, we may instead consider the test function
\begin{align*}
    g_v(x) := \langle v, x - m_2\rangle^2.
\end{align*}
Then, we have that
\begin{align*}
    \E_{\mu}[g_{v}(X)] &= \E_{\mu}[\langle v, X - m_2 \rangle^2] \\
    &= \E_{\mu}[(\langle v, X - m_1 \rangle + \langle v, m_1 - m_2 \rangle)^2] \\
    &= \E_{\mu}[\langle v, X - m_1 \rangle^2] + \langle v, m_1 - m_2 \rangle^2 + 2 \langle v, m_1 - m_2 \rangle \E_{\mu} [\langle v, X - m_1 \rangle] \\
    &= \langle v, \Sigma_1 v \rangle + \langle v, m_1 - m_2 \rangle^2.
\end{align*}
On the other hand,
\begin{align*}
    \E_{\nu}[g_{v}(Y)] &= \E_{\nu}[\langle v, Y - m_2 \rangle^2] = \langle v, \Sigma_2 v \rangle.
\end{align*}
Hence, the numerator of the above expression becomes
\begin{align}
\label{eq:chi-2-num}
    (\E_{\mu}[g(X)] - \E_{\nu}[g(Y)])^2 = (\langle v, \Sigma_1 v \rangle - \langle v, \Sigma_2 v \rangle + \langle v, m_1 - m_2 \rangle^2)^2
\end{align}
On the other hand, we compute
\begin{align}
\label{eq:chi-2-denom}
    \Var_{\nu}(g(Y)) &= \E_{\nu}[\langle v, Y - m_2 \rangle^4] - (\E_{\nu}[\langle v, Y - m_2 \rangle^2])^2 \notag \\
    &= \E_{\nu}[\langle v, Y - m_2 \rangle^4] - \langle v, \Sigma_2 v \rangle^2.
\end{align}
To motivate the computation that is to come, observe that if $\nu$ is the law of a centered Gaussian $Y \sim \sN(0, I_d)$, then
\begin{align*}
    \E_{\nu}[\langle v, Y - m_2 \rangle^4] = \E_{g_i \sim \sN(0, 1)}\left[\left( \sum_{i=1}^d v_i g_i \right)^4 \right] &= \E_{g_i \sim \sN(0, 1)}\left[\sum_{1 \leq i \neq j \leq d} v_i^2 v_j^2 g_i^2 g_j^2 + \sum_{i=1}^d v_i^4 g_i^4\right] \\
    &= 6 \sum_{1 \leq i < j \leq d} v_i^2 v_j^2 + 3 \sum_{i=1}^d v_i^4 \\
    &= 3 \left(\sum_{i=1}^d v_i^2 \right)^2 \\
    &= 3\E_{g_i \sim \sN(0, 1)} \left[\left(\sum_{i=1}^d v_i g_i \right)^2\right]^2 \\
    &= 3 \E_{\nu}[\langle v, Y - m_2 \rangle^2]^2 \\
    &= 3 \langle v, \Sigma_2 v \rangle^2.
\end{align*}
Motivated by this simple example, we anticipate that if $\nu$ is subgaussian, then we can control the fourth moments of their one-dimensional marginals by the corresponding second moments.
\subsection{$L^4$--$L^2$ Equivalence for Log-Concave Measures}
Several recent results concerning one-dimensional moment comparison bounds come to the rescue \cite{Eitan2024,Murawski2026}. To reduce to the one-dimensional case, observe that if $Y$ is a centered log-concave random vector, then $\langle v, Y \rangle$ is a centered log-concave random variable for any $v \in \R^d$. Indeed, if $U \in \R^{d \times d}$ is an orthogonal matrix such that $U v = e_1$, where $e_1$ is the first standard basis vector in $\R^d$, then $\langle v, Y \rangle = \langle e_1, U Y \rangle$. Because log-concavity is preserved under affine transformations and marginalization, the above claim follows. 

Hence, to obtain the $L^4$--$L^2$ norm equivalence that we require to continue our analysis in $\chi^2$-divergence, it suffices to find the optimal constant $C > 0$ such that
\begin{align}
\label{eq:1d-l4-l2}
    \E[X^4] \leq C \E[X^2]^2
\end{align}
for all centered, real-valued log-concave distributions. As one might suspect, the optimal constant in Equation~\ref{eq:1d-l4-l2} is attained by a centered exponential distribution. Precisely, we have the following result from \cite{Eitan2024}.
\begin{proposition}[Theorem 4 in \cite{Eitan2024}]
\label{prop:eitan}
Let $X$ be a centered, real-valued log-concave random variable. Then, for all $q \in \N$, we have that
\begin{align*}
    \frac{\|X\|_{2q}}{\|X\|_2} \leq (!(2q))^{1/(2q)},
\end{align*}
with equality if and only if $X$ has the law of $\pm \Gamma$, where $!n$ is the number of derangements of $n$ and $\Gamma$ has density $f(x) = e^{-(x + 1)}$ for $x \geq -1$.
\end{proposition}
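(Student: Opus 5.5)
The plan is to normalize, reduce to a finite-parameter family of extremal densities using a localization argument, and then solve the resulting optimization explicitly.

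\textbf{Normalization and the value at the extremizer.} By homogeneity we may assume $\E[X] = 0$ and $\E[X^2] = 1$. The claim then becomes
\begin{align*}
\E[X^{2q}] \leq\, !(2q).
\end{align*}
First I would check that the bound is attained. If $E \sim \mathrm{Exp}(1)$, then $\Gamma = E - 1$ is centered with unit variance. Expanding binomially gives
\begin{align*}
\E[(E-1)^n] = \sum_{k=0}^n \binom{n}{k} (-1)^{n-k} k! = n! \sum_{j=0}^n \frac{(-1)^j}{j!} = \,!n,
\end{align*}
so $\E[\Gamma^{2q}] = \,!(2q)$. The same holds for $-\Gamma$, since $2q$ is even. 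It therefore remains to show that $\pm\Gamma$ maximize the linear functional $\rho \mapsto \int x^{2q}\,d\rho$ over centered, unit-variance, one-dimensional log-concave laws.

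\textbf{Localization.} This is a linear functional maximized over log-concave measures subject to two linear constraints, $\int x\,d\rho = 0$ and $\int x^2\,d\rho = 1$. I would invoke the Fradelizi--Guédon localization theorem. It says that the supremum of a convex (here linear) functional over log-concave probability measures satisfying $p$ linear constraints is approached by extreme points whose densities are of the form $e^{\psi}$ on an interval, with $\psi$ piecewise affine with at most $p$ pieces. The set of admissible measures is not compact, so I would first truncate to measures supported on $[-R,R]$, apply the theorem there, and let $R \to \infty$; the $L^4$-type moments of unit-variance log-concave laws are uniformly bounded, so this passage to the limit is harmless. With $p = 2$, we are reduced to densities $e^{\psi}$ on an interval $[a,b]$ with $\psi$ concave and having at most two affine pieces, including degenerate cases such as point masses. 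A point mass cannot have unit variance, so it is excluded.

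\textbf{Explicit optimization (main obstacle).} The hard step is the optimization over this family: two slopes, a breakpoint, and two endpoints, with the two moment constraints cutting this down to three free parameters. I would attack it with a perturbation and crossing argument. Given a candidate density $f$ in the family that is not $\pm\Gamma$, I would construct a one-parameter deformation, for example pushing an endpoint to infinity or flattening one affine piece, that preserves both moment constraints. I would then show that along this deformation $f_t - f$ changes sign exactly three times on the real line. Since the first two moments of $f_t - f$ vanish, the sign pattern together with the convexity of $x^{2q}$ beyond its inflection structure should force $\int x^{2q}(f_t - f) \geq 0$, with strict inequality unless the deformation is trivial. The key fact needed is that $x^{2q} - (c_0 + c_1 x + c_2 x^2)$ has a controlled sign pattern matching the sign changes of $f_t - f$.

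Iterating such deformations drives any candidate to a one-sided exponential density $e^{-(x+1)}\mathbf{1}_{x \geq -1}$ or its reflection, which identifies the maximizers as $\pm\Gamma$. For the equality case, I would note that each deformation step strictly increases the functional unless $f$ is already $\pm\Gamma$. Finally, I would translate back to the ratio $\|X\|_{2q}/\|X\|_2$ by undoing the normalization.
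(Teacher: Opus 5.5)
The paper gives no proof of this proposition; it imports it as Theorem~4 of Eitan. So your argument has to stand on its own, and it has a genuine gap at its central step.

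Your preliminaries are sound. The identity $\E[(E-1)^n] = n!\sum_{j=0}^n (-1)^j/j!$ is exactly the derangement number. Passing from compactly supported laws to general ones by truncating, re-centering and rescaling is harmless for the inequality.

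The gap is the step you flag as the main obstacle, which carries the entire content of the theorem. The mechanism you propose for it cannot work as stated. Suppose $h = f_t - f$ has exactly three sign changes $x_1 < x_2 < x_3$ and $\int h = \int xh = \int x^2 h = 0$. The crossing (Karlin--Novikoff) argument then determines the sign of $\int \phi h$ only when $\phi'''$ has constant sign. For $\phi(x) = x^{2q}$ with $q \ge 2$, $\phi'''(x) = 2q(2q-1)(2q-2)x^{2q-3}$ changes sign at $0$, and $0$ lies inside the support of every centered density.

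Your ``key fact'' is in fact false. For any quadratic $P$, $x^{2q} - P(x) \to +\infty$ at both ends, so it has an even number of sign changes. It therefore cannot reproduce, on the whole line, the sign pattern of a function with exactly three sign changes. If $P$ interpolates $x^{2q}$ at $x_1, x_2, x_3$, a fourth crossing $x_0$ is forced. The argument then works only if you prove, for concrete deformations, that $x_0$ lies outside the support of $h$, and nothing in the proposal controls this.

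Beyond that, the deformations are never specified. You do not show that deformations exist which preserve log-concavity and both moment constraints and have the claimed sign structure. Iterating local improvements over a three-parameter family also need not end at $\pm\Gamma$; it could stall at a boundary of the family. You need a global argument, for instance showing that any two-piece density can be improved to a one-piece one, and then treating the one-parameter family of truncated exponentials explicitly.

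Two further points need repair. First, the Fradelizi--Gu\'edon theorem is stated for inequality constraints $\int f_i\,d\mu \ge 0$. Your two equalities amount to four such constraints, or three if you relax to $\int x^2\,d\mu \le 1$ (legitimate, since the functional increases under dilation). So the theorem as stated gives up to three or four affine pieces, not two. The two-piece reduction is plausible, because in their perturbation argument an equality imposes a single linear condition, just like an active inequality. But you must justify it rather than read it off with $p=2$.

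Second, your equality argument does not reach a general $X$. The laws $\pm\Gamma$ are not compactly supported, so they never appear among the extremizers of the truncated problems on $[-R,R]$. The inequality for arbitrary $X$ is obtained only as a limit, which loses strictness, and a maximizing $X$ need not itself have two-piece form. Showing that equality forces $X$ to have the law of $\pm c\Gamma$ for some $c>0$ needs a separate argument, e.g.\ a strict-inequality version of localization valid on the whole line. Note that the equality clause must be read up to scaling and for $q \ge 2$: for $q = 1$ both sides equal $1$ for every $X$.
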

Above, another formula for the number of derangements of $n$ (i.e., permutations with no fixed points) is
\begin{align*}
    !n = n! \sum_{k=0}^n \frac{(-1)^k}{k!},
\end{align*}
so for $2q = 4$, we have that $!(2q) = 9$. Hence, Proposition~\ref{prop:eitan} implies that for any centered, real-valued log-concave random variable $X$,
\begin{align*}
    \E[X^4] = \|X\|_4^4 \leq 9 \|X\|_2^4 = 9 \E[X^2]^2,
\end{align*}
and that the constant $C = 9$ is attained by the centered exponential distribution. Returning to the multivariate setting, we immediately obtain the following corollary.
\begin{corollary}
\label{cor:log-concave-l4-l2}
Suppose that $\mu$ is a log-concave distribution on $\R^d$. Then, for any $v \in \R^d$, we have that
\begin{align*}
    \E[\langle v, X - \E[X] \rangle^4] \leq 9 \langle v, \Sigma v \rangle^2.
\end{align*}
\vspace{-0.6cm}
\end{corollary}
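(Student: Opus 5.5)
The plan is to reduce the statement to the one-dimensional bound of Proposition~\ref{prop:eitan} with $q = 2$, applied to a suitable projection of $X$. Fix $v \in \R^d$ and let $X \sim \mu$ with mean $m := \E[X]$ and covariance $\Sigma$. If $v = 0$ both sides vanish, so assume $v \neq 0$. Set $Z := \langle v, X - m\rangle$. This is a centered real-valued random variable, and its second moment is
\begin{align*}
    \E[Z^2] = \E[\langle v, X - m\rangle^2] = \langle v, \Sigma v\rangle.
\end{align*}

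Next I would verify that $Z$ is log-concave, as argued in the paragraph preceding Proposition~\ref{prop:eitan}. Translating $\mu$ by $-m$ preserves log-concavity. Choosing an orthogonal $U$ with $Uv = \|v\| e_1$ gives $Z = \|v\|\,\langle e_1, U(X-m)\rangle$. The vector $U(X - m)$ is log-concave, its first coordinate marginal is log-concave by Prékopa's theorem, and scaling by $\|v\|$ preserves log-concavity.

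One degenerate case needs care. If $\mu$ is supported on a lower-dimensional affine subspace, "log-concave" should be read in the usual sense of a log-concave measure on its affine hull. If $Z$ is a.s. zero, both sides of the inequality are zero. Otherwise $Z$ has a log-concave density on $\R$, since the marginal of a log-concave measure onto a direction where it is nondegenerate is absolutely continuous with log-concave density. In the setting of this paper $\mu$ has a density, so this subtlety does not arise.

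Finally, apply Proposition~\ref{prop:eitan} with $q = 2$. Since $!4 = 9$, we get $\|Z\|_4 \le 9^{1/4}\|Z\|_2$, that is, $\E[Z^4] \le 9\,\E[Z^2]^2 = 9\langle v,\Sigma v\rangle^2$, which is the claim. The only step that is more than bookkeeping is the preservation of log-concavity under one-dimensional marginalization, and that is a standard result (Prékopa--Leindler) which I would cite rather than reprove.
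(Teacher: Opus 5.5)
Your proof is correct and is essentially the paper's own argument: the paper also projects onto $v$ via an orthogonal change of coordinates, uses that log-concavity is preserved under affine maps and marginalization, and then applies Proposition~\ref{prop:eitan} with $q = 2$ and $!4 = 9$ (your normalization $Uv = \|v\| e_1$ and your handling of the degenerate case are minor tidy-ups of the paper's $Uv = e_1$). One caveat about the paper: the sentence following the corollary instead credits the seminorm comparison of Proposition~\ref{prop:log-concave-moment-comparison}, which with $p = 2$, $q = 4$ only gives $\E[\langle v, X - \E[X]\rangle^4] \leq 16c^4 \langle v, \Sigma v\rangle^2$ for an unspecified universal $c$, so your route through the sharp one-dimensional bound is the one that actually delivers the constant $9$ (and hence the $\sqrt{8\varepsilon}$ in Lemma~\ref{lem:chi2-to-precond-1}).
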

This follows immediately upon applying the following result from \cite{Stavrakakis2013} to $Y = X - \E[X]$, whose law is still log-concave under $\mu$, and considering the seminorm $f_v(x) = |\langle v, x\rangle|$.

\begin{proposition}
\label{prop:log-concave-moment-comparison}
Suppose that $\mu$ is a log-concave probability measure on $\R^d$ and that $f : \R^d \to \R_+$ is a seminorm. Then, for any $1 \leq p < q$, we have that
\begin{align*}
    \|f\|_{L^q(\mu)} \leq \frac{c q}{p} \|f\|_{L^p(\mu)},
\end{align*}
where $c > 0$ is a universal constant.
\end{proposition}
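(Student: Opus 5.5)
The statement immediately above is Proposition~\ref{prop:log-concave-moment-comparison}, the reverse H\"older (Borell-type) inequality for seminorms under log-concave measures. I will follow the classical route through Borell's lemma.

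\emph{Step 1: reduce to a tail bound.} Let $f$ be a seminorm and fix $p \geq 1$. Take $t_0 := 3\|f\|_{L^p(\mu)}$ and set $A := \{x : f(x) \le t_0\}$. Since $f$ is a seminorm, $A$ is convex and symmetric. By Markov's inequality,
\[
\mu(A^c) \le \frac{\|f\|_p^p}{t_0^p} \le 3^{-p} \le \tfrac13,
\]
so $\theta := \mu(A) \ge 2/3$.

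\emph{Step 2: Borell's lemma.} For $s \ge 1$, I will show
\[
\mu\bigl(\{f > s t_0\}\bigr) \le \theta\Bigl(\frac{1-\theta}{\theta}\Bigr)^{(s+1)/2}.
\]
The key inclusion is
\[
A^c \supseteq \frac{2}{s+1}\,(sA)^c + \frac{s-1}{s+1}\,A.
\]
To verify it, suppose a point $y = \frac{2}{s+1} z + \frac{s-1}{s+1} a$ with $z \notin sA$ and $a \in A$ lay in $A$. Then $z = \frac{s+1}{2} y - \frac{s-1}{2} a$ would lie in $sA$, using convexity and symmetry of $A$; this is a contradiction. Log-concavity of $\mu$ then gives
\[
1 - \theta \ge \mu\bigl((sA)^c\bigr)^{2/(s+1)}\,\theta^{(s-1)/(s+1)},
\]
which rearranges to the claimed bound. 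With $\theta \ge 2/3$ we have $(1-\theta)/\theta \le 1/2$, so $\mu(f > s t_0) \le e^{-c s}$ for a universal $c > 0$.

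\emph{Step 3: integrate the tail.} For $q > p$, write
\[
\E f^q = \int_0^\infty q t^{q-1}\,\mu(f > t)\,dt \le t_0^q + t_0^q \int_1^\infty q s^{q-1} e^{-cs}\,ds \le t_0^q\,\bigl(1 + C^q\,\Gamma(q+1)\bigr).
\]
Taking $q$-th roots yields $\|f\|_q \le C' q \|f\|_p$.

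This is already the claimed inequality in the regime where $p$ is bounded, since then $q/p \gtrsim q$. For large $p$ one needs the sharper $q/p$ scaling. To get it, I apply the same argument with the threshold $t_0 = e\|f\|_p$, so that $\mu(A^c) \le e^{-p}$ and $(1-\theta)/\theta \lesssim e^{-p}$. The tail then decays like $e^{-c p s}$, and the integral becomes
\[
\Bigl(\frac{C q}{p}\Bigr)^{q},
\]
giving $\|f\|_q \le C q/p\,\|f\|_p$ for $q \ge p$.

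\emph{Main obstacle.} The delicate step is Step 2, namely the set inclusion and the bookkeeping of exponents in Borell's lemma. I would verify it first, then track the constants in Step 3 to confirm uniformity in $p$.

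For the paper's purposes only $p = 2$, $q = 4$ with some universal constant is actually needed, and Proposition~\ref{prop:eitan} already gives the constant $9$ in that case. So one could alternatively just cite \cite{Stavrakakis2013}.
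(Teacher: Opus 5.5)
Your proof is correct. It supplies an argument where the paper gives none: the paper simply imports the statement from \cite{Stavrakakis2013}, where it is proved by the Borell-lemma route you reconstruct, and invokes it only to justify the multivariate Corollary~\ref{cor:log-concave-l4-l2}.

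What your version buys is self-containment and a clear view of where uniformity in $p$ comes from. It also shows that the two-regime split is unnecessary. Already with $t_0 = 3\|f\|_{L^p(\mu)}$, Markov gives $1-\theta \le 3^{-p}$, hence $(1-\theta)/\theta \le (3^p-1)^{-1} \le 2^{-p}$ for every $p \ge 1$. Borell's bound then yields $\mu(f > s t_0) \le 2^{-p(s+1)/2}$, which is exactly the $e^{-cps}$ decay Step~3 needs to produce $\|f\|_{L^q(\mu)} \le t_0\bigl(1 + \Gamma(q+1)^{1/q}/(cp)\bigr) \le C\frac{q}{p}\|f\|_{L^p(\mu)}$.

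The citation buys brevity. Neither route, however, delivers what the paper actually uses downstream. With $p=2$, $q=4$ the proposition only gives $\E\langle v, X-\E X\rangle^4 \le (2c)^4\langle v,\Sigma v\rangle^2$ for an unspecified universal $c$. The $\sqrt{8\varepsilon}$ in Lemma~\ref{lem:chi2-to-precond-1} and the threshold $(\sqrt{3-(1+\delta)^{-1/2}}-\sqrt{2})^2$ rely on the explicit constant $9$. That constant comes from Proposition~\ref{prop:eitan} applied to the log-concave one-dimensional marginals. Your closing remark that this proposition is dispensable for the paper's quantitative claims is therefore accurate.
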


By restricting our attention to seminorms of the form $f_{v}(x) = |\langle v, x\rangle|$, where $v \in \R^d$, the $L^4$--$L^2$ comparison bound that we require for our $\chi^2$-divergence analysis becomes apparent.

Now, from the variational characterization of $\chi^2$-divergence in Equation~\ref{eq:chi-2-var}, if $\chi^2(\mu \| \nu) < \eps$, then Equation~\ref{eq:chi-2-num} and Equation~\ref{eq:chi-2-denom} combine into
\begin{align*}
    (\langle v, \Sigma_1 v \rangle - \langle v, \Sigma_2 v \rangle + \langle v, m_1 - m_2 \rangle^2)^2 < \eps (\E_{\nu}[\langle v, Y - m_2 \rangle^4] - \langle v, \Sigma_2 v \rangle^2).
\end{align*}
If $\nu$ is also log-concave (as will be the case for us), then it follows that
\begin{align*}
    (\langle v, \Sigma_1 v \rangle - \langle v, \Sigma_2 v \rangle + \langle v, m_1 - m_2 \rangle^2)^2 \leq 8 \eps \langle v, \Sigma_2 v \rangle^2,
\end{align*}
applying Corollary~\ref{cor:log-concave-l4-l2}
Taking square roots of both sides,
\begin{align*}
    \langle v, \Sigma_1 v \rangle - \langle v, \Sigma_2 v \rangle \leq \langle v, \Sigma_1 v \rangle - \langle v, \Sigma_2 v \rangle + \langle v, m_1 - m_2 \rangle^2 \leq \sqrt{8 \eps}  \langle v, \Sigma_2 v \rangle,
\end{align*}
and similarly, that
\begin{align*}
    \langle v, \Sigma_1 v \rangle - \langle v, \Sigma_2 v \rangle + \langle v, m_1 - m_2 \rangle^2 \geq - \sqrt{8 \eps} \langle v, \Sigma_2 v \rangle.
\end{align*}
Additionally, we already showed that in Equation~\ref{eq:linear-test-bound} that if $\chi^2(\mu \| \nu) < \eps$, then $\langle v, m_1 - m_2 \rangle^2 < \eps \langle v, \Sigma_2 v \rangle.$
Hence, we have that
\begin{align*}
    \langle v, \Sigma_1 v \rangle - \langle v, \Sigma_2 v \rangle \geq -\left(\sqrt{8 \eps} + \eps\right)\langle v, \Sigma_2 v \rangle.
\end{align*}
To summarize:
\begin{lemma}
\label{lem:chi2-to-precond-1}
If $\mu, \nu \in \sP_2(\R^d)$ are two distributions such that $\nu$ is log-concave and $\chi^2(\mu \| \nu) < \eps$ for some $\eps > 0$, then
\begin{align*}
    |\langle v, \Sigma_1 v \rangle - \langle v, \Sigma_2 v \rangle| \leq \left(\sqrt{8 \eps} + \eps\right) \langle v, \Sigma_2 v \rangle,
\end{align*}
where $v \in \R^d$ is an arbitrary unit vector.
\end{lemma}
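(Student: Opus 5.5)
The plan is to apply the variational characterization of Theorem~\ref{thm:chi2-variational} twice, once with a linear test function and once with a centered quadratic one. Then I would control the variance in the denominator using the $L^4$--$L^2$ comparison of Corollary~\ref{cor:log-concave-l4-l2}, which uses log-concavity of $\nu$ only. Throughout, $m_1,\Sigma_1$ denote the mean and covariance of $\mu$, and $m_2,\Sigma_2$ those of $\nu$. Fix a unit vector $v$.

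If $\langle v,\Sigma_2 v\rangle=0$, then $\nu$ is supported on a hyperplane orthogonal to $v$. Since $\chi^2(\mu\|\nu)<\infty$ forces $\mu\ll\nu$, both sides of the claimed inequality vanish. So I would assume $\langle v,\Sigma_2v\rangle>0$, which makes both test functions admissible. Log-concavity of $\nu$ gives all moments, so both lie in $L^2(\nu)$.

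First, I would take $g(x)=\langle v,x\rangle$. The variational formula gives
\[
\langle v,m_1-m_2\rangle^2<\varepsilon\langle v,\Sigma_2v\rangle,
\]
which is Equation~\ref{eq:linear-test-bound}. Second, I would take $g_v(x)=\langle v,x-m_2\rangle^2$. The expectation difference is then exactly Equation~\ref{eq:chi-2-num}, namely
\[
D:=\langle v,\Sigma_1v\rangle-\langle v,\Sigma_2v\rangle+\langle v,m_1-m_2\rangle^2 .
\]
The variance under $\nu$ is Equation~\ref{eq:chi-2-denom}. Here Corollary~\ref{cor:log-concave-l4-l2} bounds the fourth moment by $9\langle v,\Sigma_2v\rangle^2$, so the variance is at most $8\langle v,\Sigma_2v\rangle^2$. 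Hence
\[
D^2<8\varepsilon\langle v,\Sigma_2v\rangle^2, \qquad\text{so}\qquad |D|\le\sqrt{8\varepsilon}\,\langle v,\Sigma_2v\rangle.
\]

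Finally, I would split into the two directions. For the upper bound, the mean term $\langle v,m_1-m_2\rangle^2$ is nonnegative, so dropping it gives
\[
\langle v,\Sigma_1v\rangle-\langle v,\Sigma_2v\rangle\le D\le\sqrt{8\varepsilon}\,\langle v,\Sigma_2v\rangle.
\]
For the lower bound, I would subtract the mean term and apply the linear-test bound:
\[
\langle v,\Sigma_1v\rangle-\langle v,\Sigma_2v\rangle\ge-\sqrt{8\varepsilon}\,\langle v,\Sigma_2v\rangle-\varepsilon\langle v,\Sigma_2v\rangle .
\]
Combining the two directions yields the stated two-sided bound with constant $\sqrt{8\varepsilon}+\varepsilon$.

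The only genuinely nontrivial input is the fourth-moment control of the one-dimensional marginal $\langle v,Y-m_2\rangle$. That step relies on the marginal being centered and log-concave, after which the result of \cite{Eitan2024} applies. Once that is in hand, the remainder is bookkeeping.
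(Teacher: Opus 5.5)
Your proposal is correct and follows the paper's argument essentially step for step: the linear test function gives Equation~\ref{eq:linear-test-bound}, the centered quadratic test function combined with the fourth-moment bound $9\langle v,\Sigma_2 v\rangle^2$ from Corollary~\ref{cor:log-concave-l4-l2} gives $|D|\le\sqrt{8\varepsilon}\,\langle v,\Sigma_2 v\rangle$, and the two one-sided bounds then follow exactly as you describe. Your separate treatment of the degenerate case $\langle v,\Sigma_2 v\rangle=0$, where the variational formula does not directly apply, is a small addition that the paper leaves implicit.
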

The following corollary is immediate upon carrying out some straightforward algebraic manipulations.
\begin{corollary}
\label{cor:chi2-to-precond-1}
Suppose that $\mu, \widehat{\mu} \in \sP_2(\R^d)$ are such that $\covar(\mu) = \Sigma$ and $\covar(\hat{\mu}) = \widehat{\Sigma}$, with $\mu$ log-concave. If
\begin{align*}
    \chi^2(\widehat{\mu} \| \mu) < \left(\sqrt{3 - (1 + \delta)^{-1/2}} - \sqrt{2}\right)^{2}
\end{align*}
for some $\delta > 0$, then
\begin{align*}
    (1 + \delta)^{-1/2} \Sigma \preceq \widehat{\Sigma} \preceq (1 + \delta)^{1/2} \Sigma.
\end{align*}
\end{corollary}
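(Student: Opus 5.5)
The plan is to apply Lemma~\ref{lem:chi2-to-precond-1} with $\widehat{\mu}$ as the first argument and the log-concave $\mu$ as the second, so that $\Sigma_1 = \widehat{\Sigma}$ and $\Sigma_2 = \Sigma$. What remains is to check that the stated threshold on $\chi^2(\widehat{\mu}\|\mu)$ makes the resulting relative error small enough, which is a scalar computation.

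Write $a := (1+\delta)^{-1/2} \in (0,1)$ and $\eps := \chi^2(\widehat{\mu}\|\mu)$. Since $3 - a > 2$, the threshold $(\sqrt{3-a}-\sqrt{2})^2$ is strictly positive, so the hypothesis is not vacuous. Lemma~\ref{lem:chi2-to-precond-1} gives, for every unit vector $v$,
\begin{align*}
|\langle v, \widehat{\Sigma} v\rangle - \langle v, \Sigma v\rangle| \leq \eta\, \langle v, \Sigma v\rangle, \qquad \eta := \sqrt{8\eps} + \eps .
\end{align*}
Both sides are homogeneous of degree two in $v$, so the bound holds for all $v \in \R^d$.

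The key step is to bound $\eta$ by completing the square:
\begin{align*}
\eta = (\sqrt{\eps} + \sqrt{2})^2 - 2 .
\end{align*}
The hypothesis $\sqrt{\eps} < \sqrt{3-a} - \sqrt{2}$ then gives $(\sqrt{\eps} + \sqrt{2})^2 < 3 - a$, so $\eta < 1 - a$. Hence, for all $v$,
\begin{align*}
a\,\langle v, \Sigma v\rangle < (1-\eta)\langle v,\Sigma v\rangle \leq \langle v, \widehat{\Sigma} v\rangle \leq (1+\eta)\langle v,\Sigma v\rangle < (2-a)\langle v, \Sigma v\rangle .
\end{align*}

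The lower bound is already $(1+\delta)^{-1/2}\Sigma \preceq \widehat{\Sigma}$. For the upper bound I will show $2 - a \leq a^{-1} = (1+\delta)^{1/2}$. This is equivalent to $a + a^{-1} \geq 2$, which holds by AM--GM. Together these give $\widehat{\Sigma} \preceq (1+\delta)^{1/2}\Sigma$, which completes the proof.

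There is no real obstacle here. The only points needing care are matching the argument order in the lemma (the log-concavity assumption sits on $\mu$, the reference measure in the $\chi^2$-divergence) and noticing the completed-square identity that explains the odd-looking threshold.
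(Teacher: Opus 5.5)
Your proposal is correct and is exactly the ``straightforward algebraic manipulation'' the paper alludes to: the threshold is the value of $\varepsilon$ at which $\sqrt{8\varepsilon}+\varepsilon = 1-(1+\delta)^{-1/2}$, and the upper bound uses $1-(1+\delta)^{-1/2} \le (1+\delta)^{1/2}-1$, which is what your completed-square and AM--GM steps establish. One cosmetic point: Lemma~\ref{lem:chi2-to-precond-1} is stated under the strict hypothesis $\chi^2 < \varepsilon$, so apply it with some $\varepsilon'$ strictly between $\chi^2(\widehat{\mu}\|\mu)$ and the threshold rather than with $\varepsilon = \chi^2(\widehat{\mu}\|\mu)$ itself; nothing else in your argument changes.
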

For example, for $\delta = 1$, we require that $\chi^2(\widehat{\mu} \| \mu) \approx 0.01$.
For a more precise characterization of the first-order contribution, note that because $\sqrt{1 + x} = 1 + \frac{x}{2} + O(x^2)$ for $|x| < 1$, we have that
\begin{align*}
    \left(\sqrt{3 - (1 + \delta)^{-1/2}} - \sqrt{2}\right)^{2} = O(\delta^2)
\end{align*}
for $|\delta| < 1$. Hence, for small $\delta > 0$, we only require that $\chi^2(\widehat{\mu} \| \mu) = O(\delta^2)$ in order to obtain the desired preconditioning guarantee. 

\subsection{Preconditioning with the Sample Covariance}
At a high level, we require a strong matrix concentration bound to show that the sample covariance is a viable preconditioner at each stage of Gaussian cooling, and closeness in $\chi^2$-divergence preserves subgaussianity, a sufficient property to apply such a concentration inequality. The argument in the next section then proceeds in two steps:
\begin{enumerate}
\item[(1)] If the target distribution $\mu$ at step $k$ is $\alpha$-strongly log-concave and $(1 / \alpha)$-smooth, then the approximate distribution, furnished by the user's sampler of choice after sufficiently many steps to satisfy the hypothesis of Corollary~\ref{cor:chi2-to-precond-1}, will be subgaussian, with subgaussian norm that depends on $\alpha$.
\item[(2)] Having characterized the subgaussianity of the sampled distribution, we can apply an appropriate matrix concentration inequality for matrices with independent, centered, isotropic, and subgaussian rows to the sample covariance formed by i.i.d. samples, produced using parallel chains.
\end{enumerate}
\subsubsection{Subgaussianity of Sampled Distribution}
As described in the preceding section, it will be to our advantage to first quantitatively characterize the subgaussianity of a distribution $\widehat{\mu}$ such that $\chi^2(\widehat{\mu} \| \mu) < \eps$, where $\mu$ is $\alpha$-strongly log-concave and $(1 / \alpha)$-smooth. 
\begin{lemma}
\label{lem:chi-2-to-sg}
Suppose that $\mu \in \sP_2(\R^d)$ is an $\alpha$-strongly log-concave distribution and $\widehat{\mu} \in \sP_2(\R^d)$ is such that $\chi^2(\widehat{\mu} \| \mu) < \frac{1}{2}$. Then, for all unit vectors $v \in \R^d$,
\begin{align*}
    \P_{\widehat{\mu}}(|\langle v, Y - \E_{\widehat{\mu}}[Y] \rangle| \geq t) \leq 2 \exp\left(\frac{-\alpha t^2}{8}\right)
\end{align*}
for all $t \geq 0$ so that $Y \sim \widehat{\mu}$ is $(\sqrt{8 / \alpha})$-subgaussian.
\end{lemma}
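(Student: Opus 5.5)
The plan is to transfer Gaussian concentration from $\mu$ to $\widehat{\mu}$ by a Cauchy--Schwarz change of measure, and then to absorb the mean shift between the two distributions using Equation~\ref{eq:linear-test-bound}. Throughout, fix a unit vector $v$ and write $m := \E_\mu[X]$ and $\widehat{m} := \E_{\widehat{\mu}}[Y]$.

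\textbf{Step 1 (concentration under $\mu$).} Since $\mu$ is $\alpha$-strongly log-concave, it satisfies a log-Sobolev inequality with constant $1/\alpha$. The function $x \mapsto \langle v, x\rangle$ is $1$-Lipschitz, so Herbst's argument gives
\begin{align*}
\P_\mu(|\langle v, X - m\rangle| \geq s) \leq 2\exp(-\alpha s^2/2), \quad s \geq 0.
\end{align*}
Separately, applying Brascamp--Lieb (Corollary~\ref{cor:sandwich}) with $\nabla^2 U \succeq \alpha I$ gives $\langle v, \Sigma v\rangle \leq 1/\alpha$.

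\textbf{Step 2 (change of measure).} For any event $A$, Cauchy--Schwarz gives
\begin{align*}
\P_{\widehat{\mu}}(A) = \E_\mu\!\left[\tfrac{d\widehat{\mu}}{d\mu} \mathbf{1}_A\right] \leq \sqrt{1 + \chi^2(\widehat{\mu} \| \mu)}\, \sqrt{\P_\mu(A)} \leq \sqrt{3/2}\,\sqrt{\P_\mu(A)}.
\end{align*}
Combining this with Step 1, we get
\begin{align*}
\P_{\widehat{\mu}}(|\langle v, Y - m\rangle| \geq s) \leq \sqrt{3}\,\exp(-\alpha s^2/4).
\end{align*}

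\textbf{Step 3 (recentering).} This is the only delicate point, because the claimed tail is about $Y - \widehat{m}$ rather than $Y - m$. By Equation~\ref{eq:linear-test-bound},
\begin{align*}
s_0 := |\langle v, \widehat{m} - m\rangle| \leq \sqrt{\chi^2\, \langle v, \Sigma v\rangle} \leq (2\alpha)^{-1/2}.
\end{align*}
We split into two cases according to whether $t$ exceeds $s_0$.
\begin{itemize}
\item If $t \geq s_0$, the event $|\langle v, Y - \widehat{m}\rangle| \geq t$ implies $|\langle v, Y - m\rangle| \geq t - s_0$. We then use the elementary inequality $(t - s_0)^2 \geq t^2/2 - s_0^2$. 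Plugging it into Step 2 yields a bound of
\begin{align*}
\sqrt{3}\, e^{\alpha s_0^2/4} \exp(-\alpha t^2/8) \leq \sqrt{3}\, e^{1/8} \exp(-\alpha t^2/8),
\end{align*}
and $\sqrt{3}\, e^{1/8} \approx 1.96 \leq 2$.
\item If $t < s_0$, then $\alpha t^2/8 < 1/16$. The right-hand side $2\exp(-\alpha t^2/8)$ therefore exceeds $1$, and the bound holds trivially.
\end{itemize}

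The main obstacle is bookkeeping these constants so that they land exactly on $2\exp(-\alpha t^2/8)$. The hypothesis $\chi^2 < 1/2$ is exactly what makes $\sqrt{3}\, e^{1/8} \leq 2$ work. If the Herbst constant in Step 1 were weaker, I would instead use the sharper Gaussian-comparison tail for strongly log-concave measures (Caffarelli contraction). The $(\sqrt{8/\alpha})$-subgaussian conclusion then follows directly from Definition~\ref{def:subgaussian}.
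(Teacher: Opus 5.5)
Your proof is correct and follows essentially the same route as the paper. Both use a Cauchy--Schwarz change of measure costing a factor $\sqrt{3/2}$, the mean-shift bound from the linear test function in the $\chi^2$ variational formula combined with $\lambda_{\max}(\Sigma)\le 1/\alpha$, the inequality $(t-c)^2\ge t^2/2-c^2$, and arrive at the same final constant $\sqrt{3}\,e^{1/8}\le 2$. The only difference is the order of the steps: the paper applies Cauchy--Schwarz to the event centered at $\E_{\widehat{\mu}}[Y]$ and then recenters under $\mu$ using the Euclidean norm of the mean shift, whereas you recenter under $\widehat{\mu}$ after changing measure and need only the directional shift $|\langle v, \widehat{m}-m\rangle|$.
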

\begin{proof}
If $\mu$ is such a distribution, then for all unit vectors $v \in \R^d$,
\begin{align*}
    \P_{\mu}(|\langle v, X - \E_{\mu}[X] \rangle| \geq t) \leq 2 \exp\left(-\frac{\alpha t^2}{2}\right),
\end{align*}
which aligns with the definition of $(\sqrt{2 / \alpha})$-subgaussianity for an $\R^d$-valued random vector provided in \cite[Section 3.4]{Vershynin2018}. 
Now, take $Y \sim \widehat{\mu}$ such that $\chi^2(\widehat{\mu} \| \mu) < \eps$ for some $\eps \in (0, 1 / 2)$. Fixing a unit vector $v \in \R^d$, we compute
\begin{align*}
    \P_{\widehat{\mu}}(|\langle v, Y - \E_{\widehat{\mu}}[Y] \rangle| \geq t) &= \int_{|\langle v, y - \E_{\widehat{\mu}}[Y] \rangle| \geq t} d\widehat{\mu}(y) \\
    &= \int_{|\langle v, y - \E_{\widehat{\mu}}[Y] \rangle| \geq t} \frac{d\widehat{\mu}}{d\mu}(y) d\mu(y) \\
    &\leq \left(\int_{|\langle v, y - \E_{\widehat{\mu}}[Y] \rangle| \geq t} \left(\frac{d\widehat{\mu}}{d\mu}(y)\right)^{2} d\mu(y)\right)^{1/2} \left(\int_{|\langle v, y - \E_{\widehat{\mu}}[Y] \rangle| \geq t} d\mu(y)\right)^{1/2} \\
    &= \left(\int_{|\langle v, y - \E_{\widehat{\mu}}[Y] \rangle| \geq t} \left(\frac{d\widehat{\mu}}{d\mu}(y)\right)^{2} d\mu(y)\right)^{1/2} \left(\P_{\mu}(|\langle v, X - \E_{\widehat{\mu}}[Y] \rangle| \geq t)\right)^{1/2}.
\end{align*}
We can now analyze both terms separately, relating the first term to the fact that $\chi^2(\widehat{\mu} \| \mu) < \eps$ and bounding the second term using the subgaussianity of $\mu$. For the first term, we write
\begin{align*}
    \int_{|\langle v, y - \E_{\widehat{\mu}}[Y] \rangle| \geq t} \left(\frac{d\widehat{\mu}}{d\mu}(y)\right)^{2} d\mu(y) \leq \int_{\R^d} \left(\frac{d\widehat{\mu}}{d\mu}(y)\right)^{2} d\mu(y) = \chi^2(\widehat{\mu} \| \mu) + 1 < \frac{3}{2},
\end{align*}
as long as $\eps \in (0, 1 / 2)$.
On the other hand, because
\begin{align*}
    |\langle v, X - \E_{\widehat{\mu}}[Y] \rangle| &\leq |\langle v, X - \E_{\mu}[X] \rangle| + |\langle v, \E_{\mu}[X] - \E_{\widehat{\mu}}[Y] \rangle| \\
    &\leq |\langle v, X - \E_{\mu}[X] \rangle| + \|\E_{\mu}[X] - \E_{\widehat{\mu}}[Y]\|_2
\end{align*}
we have that
\begin{align*}
    \P_{\mu}(|\langle v, X - \E_{\widehat{\mu}}[Y] \rangle| \geq t) &\leq \P_{\mu}(|\langle v, X - \E_{\mu}[X] \rangle| + \|\E_{\mu}[X] - \E_{\widehat{\mu}}[Y]\|_2 \geq t) \\
    &=  \P_{\mu}(|\langle v, X - \E_{\mu}[X] \rangle| \geq t - \|\E_{\mu}[X] - \E_{\widehat{\mu}}[Y]\|_2).
\end{align*}
Because $\chi^2(\widehat{\mu} \| \mu) < \eps$, we have that
\begin{align*}
    \|\E_{\mu}[X] - \E_{\widehat{\mu}}[Y]\|_2 \leq \sqrt{\eps \lambda_{\max}(\Sigma_{\mu})} < \frac{1}{2} \sqrt{2 \lambda_{\max}(\Sigma_{\mu})} \leq \frac{\sqrt{2}}{2}\alpha^{-1/2},
\end{align*}
taking the supremum over all unit vectors $v \in \R^d$ in Equation~\ref{eq:linear-test-bound}. We also apply the fact that $\mu$ is $\alpha$-strongly log-concave so that $\lambda_{\max}(\Sigma_{\mu}) \leq 1 / \alpha$. Thus, by subgaussianity of $\mu$, for $t > \frac{\sqrt{2}}{2}\alpha^{-1/2}$, we have that
\begin{align*}
    \P_{\mu}(|\langle v, X - \E_{\widehat{\mu}}[Y] \rangle| \geq t) &\leq \P_{\mu}\left(|\langle v, X - \E_{\mu}[X] \rangle| \geq t - \frac{\sqrt{2}}{2}\alpha^{-1/2}\right) \\
    &\leq 2 \exp\left(-\frac{\alpha (t - \frac{\sqrt{2}}{2}\alpha^{-1/2})^2}{2}\right).
\end{align*}
Now, we apply the fact that $(t - c)^2 \geq \frac{1}{2}t^2 - c^2$ for all $t, c \in \R$ to obtain a bound of 
\begin{align*}
    2 \exp\left(-\frac{\alpha (\frac{1}{2}t^2 - \frac{1}{2}\alpha^{-1})}{2}\right) = 2 \exp(1/4) \exp\left(-\frac{\alpha t^2}{4}\right).
\end{align*}
Moreover, if $t \leq \frac{\sqrt{2}}{2}\alpha^{-1/2}$, then we can simply bound
\begin{align*}
    \P_{\mu}(|\langle v, X - \E_{\widehat{\mu}}[Y] \rangle| \geq t) \leq 1 \leq 2 \exp(1/4) \exp\left(-\frac{\alpha t^2}{4}\right),
\end{align*}
so the above bound holds for all $t > 0$. Returning to the original expression that we sought to bound to establish the subgaussianity of $Y \sim \widehat{\mu}$, we have that
\begin{align*}
    \P_{\widehat{\mu}}(|\langle v, Y - \E_{\widehat{\mu}}[Y] \rangle| \geq t) \leq \sqrt{3} \exp(1 / 8) \exp\left(\frac{-\alpha t^2}{8}\right) \leq 2 \exp\left(\frac{-\alpha t^2}{8}\right)
\end{align*}
for all $t \geq 0$.
\end{proof}
Crucially, we can now safely assume that the sampled distribution $\widehat{\mu}$ is subgaussian, paving the way for good concentration of the sample covariance.
\subsubsection{Concentration of Sample Covariance}
Consider taking $n$ independent, identically distributed samples $X_1, \ldots, X_n \sim \widehat{\mu}$, where we use $\widehat{\mu}$ to indicate that we are drawing samples from parallel ULMC chains. From these samples, we form the empirical mean
\begin{align*}
    \widehat{m}_n := \frac{1}{n} \sum_{i=1}^n X_i
\end{align*}
and the empirical covariance
\begin{align*}
    \widehat{\Sigma}_n := \frac{1}{n} \sum_{i=1}^n (X_i - \widehat{m}_n) (X_i - \widehat{m}_n)^\top.
\end{align*}
With this normalization, $\widehat{\Sigma}_n$ provides a \textit{biased} estimate of the population covariance
\begin{align*}
    \widehat{\Sigma}  = \E_{\widehat{\mu}}[(X - \E_{\widehat{\mu}}[X])(X - \E_{\widehat{\mu}}[X])^\top].
\end{align*}
However, we only seek a guarantee that the sample covariance satisfies
\begin{align}
\label{eq:emp-cov-des}
    (1 + \delta)^{-1/2} \widehat{\Sigma} \preceq \widehat{\Sigma}_n \preceq (1 + \delta)^{1/2} \widehat{\Sigma},
\end{align}
and the biased estimate above eases the analysis slightly. If Equation~\ref{eq:emp-cov-des} holds, and 
\begin{align*}
    (1 + \delta)^{-1/2} \Sigma \preceq \widehat{\Sigma} \preceq (1 + \delta)^{1/2} \Sigma
\end{align*}
holds for the same constant $\delta > 0$, we then have that
\begin{align*}
    (1 + \delta)^{-1} \Sigma \preceq (1 + \delta)^{-1/2} \widehat{\Sigma} \preceq \widehat{\Sigma}_n \preceq (1 + \delta)^{1/2} \widehat{\Sigma} \preceq (1 + \delta) \Sigma,
\end{align*}
allowing us to use the sample covariance as a preconditioner at step $k$ of Gaussian cooling. The biased estimate of the sample covariance turns out to suffice for these purposes, with cleaner analysis. Moreover, observe that Equation~\ref{eq:emp-cov-des} is equivalent to showing that
\begin{align*}
   [(1 + \delta)^{-1/2} - 1] \widehat{\Sigma} \preceq \widehat{\Sigma}_n - \widehat{\Sigma} \preceq [(1 + \delta)^{1/2} - 1] \widehat{\Sigma}.
\end{align*}
Equivalently, we have that
\begin{align*}
    [(1 + \delta)^{-1/2} - 1] I \preceq \widehat{\Sigma}^{-1/2} \widehat{\Sigma}_n\widehat{\Sigma}^{-1/2} - I \preceq [(1 + \delta)^{1/2} - 1] I.
\end{align*}
In turn, this holds true if
\begin{align}
\label{eq:2-norm-precond}
    \|\widehat{\Sigma}^{-1/2} \widehat{\Sigma}_n\widehat{\Sigma}^{-1/2} - I\|_2 \leq 1 - (1 + \delta)^{-1/2}.
\end{align}
We now aim to quantify how many samples are needed in order for Equation~\ref{eq:2-norm-precond} to hold with high probability, to be made precise below.

To quantify the sample complexity of obtaining Equation~\ref{eq:2-norm-precond}, we begin by decomposing $\widehat{\Sigma}_n$ as
\begin{align*}
    \widehat{\Sigma}_n &= \frac{1}{n} \sum_{i=1}^n (X_i - \widehat{m} + \widehat{m} - \widehat{m}_n) (X_i - \widehat{m} + \widehat{m} - \widehat{m}_n)^\top \\
    &= \frac{1}{n} \sum_{i=1}^n (X_i - \widehat{m}) (X_i - \widehat{m})^\top - (\widehat{m}_n - \widehat{m})(\widehat{m}_n - \widehat{m})^\top,
\end{align*}
where $\widehat{m} := \E_{\widehat{\mu}}[X]$. Hence, we have that
\begin{align}
\label{eq:bv-decomp}
    \widehat{\Sigma}^{-1/2} \widehat{\Sigma}_n\widehat{\Sigma}^{-1/2} &= \frac{1}{n} \sum_{i=1}^n \widehat{\Sigma}^{-1/2} (X_i - \widehat{m}) (X_i - \widehat{m})^\top \widehat{\Sigma}^{-1/2} - \widehat{\Sigma}^{-1/2} (\widehat{m}_n - \widehat{m})(\widehat{m}_n - \widehat{m})^\top \widehat{\Sigma}^{-1/2}
\end{align}
Define
\begin{align*}
    Z_i := \widehat{\Sigma}^{-1/2} (X_i - \widehat{m}), \quad i \in [n].
\end{align*}
Mirroring the proof of \cite[Theorem 4.7.1]{Vershynin2018}, if we can show that the $Z_i \in \R^d$ are $\kappa$-subgaussian vectors for some $\kappa > 0$, then we will be able to show that
\begin{align*}
    \P\left(\left\|\frac{1}{n} \sum_{i=1}^n Z_i Z_i^\top - I\right\|> \kappa^2 \max\{\delta(t), \delta(t)^2\}\right) \leq 2 \exp(-t^2),
\end{align*}
where
\begin{align*}
    \delta(t) := C \sqrt{\frac{d}{n}} + \frac{t}{\sqrt{n}}
\end{align*}
for some absolute constant $C > 0$.

To proceed in this manner, we observe that for any unit vector $v \in \R^d$ and $t \geq 0$,
\begin{align*}
    \P(|\langle Z_i, v \rangle| \geq t) = \P\left(\left|\langle \widehat{\Sigma}^{-1/2} (X_i - \widehat{m}), v \rangle\right| \geq t\right) &= \P\left(\left|\langle X_i - \widehat{m}, \widehat{\Sigma}^{-1/2} v \rangle\right| \geq t\right) \\
    &= \P\left(\left|\left\langle X_i - \widehat{m}, \frac{\widehat{\Sigma}^{-1/2} v}{\|\widehat{\Sigma}^{-1/2} v\|_2}\right\rangle\right| \geq \frac{t}{\|\widehat{\Sigma}^{-1/2} v\|_2}\right) \\
    &\leq 2 \exp\left(-\frac{\alpha t^2}{8 \|\widehat{\Sigma}^{-1/2} v\|_2^2}\right) \\
    &\leq 2 \exp\left(-\frac{\alpha t^2}{8 \lambda_{\max}(\widehat{\Sigma}^{-1})}\right) \\
    &= 2 \exp\left(-\frac{\alpha \lambda_{\min}(\widehat{\Sigma}) t^2}{8}\right),
\end{align*}
applying Lemma~\ref{lem:chi-2-to-sg} to the distribution $\widehat{\mu}$ and the fact that $\|\widehat{\Sigma}^{-1/2} v\|_2^2 = v^\top \widehat{\Sigma}^{-1} v \leq \lambda_{\max}(\widehat{\Sigma}^{-1})$ for any unit vector $v \in \R^d$. Returning to Lemma~\ref{lem:chi2-to-precond-1}, as long as $\varepsilon < (\sqrt{3} - \sqrt{2})^2$, we have that
\begin{align*}
    \langle v, \widehat{\Sigma} v \rangle \geq (1 - \sqrt{8 \varepsilon} - \varepsilon) \langle v, \Sigma v \rangle \geq (1 - \sqrt{8 \varepsilon} - \varepsilon) \lambda_{\min}(\Sigma).
\end{align*}
Hence, if $\mu$ is $(1 / \alpha)$-smooth (as will be the case for Gaussian cooling), it follows that
\begin{align*}
    \lambda_{\min}(\widehat{\Sigma}) \geq (1 - \sqrt{8 \varepsilon} - \varepsilon) \alpha.
\end{align*}
If we assume, say, that $\varepsilon < \frac{1}{16}$, then $1 - \sqrt{8 \varepsilon} - \varepsilon > \frac{1}{5}$, and we may conclude that
\begin{align*}
    \P(|\langle Z_i, v \rangle| \geq t) \leq 2 \exp\left(-\frac{\alpha^2 t^2}{40}\right),
\end{align*}
showing that once we have placed the sample distribution in isotropic position, the resulting random vectors $Z_i$ are $(\sqrt{40} / \alpha)$-subgaussian, as long as $\chi^2(\widehat{\mu} \| \mu) < \frac{1}{16}$. We may now apply the relevant subgaussian matrix concentration bound from \cite{Vershynin2018} directly, which we include for the sake of completeness.
\begin{theorem}[Theorem 4.6.1 in \cite{Vershynin2018}]
\label{th:sg-matrices}
Let $A \in \R^{n \times d}$ be a random matrix with independent, mean-zero, $\kappa$-subgaussian, isotropic rows $A_i$. Then, for any $t \geq 0$, we have that
\begin{align}
\label{eq:vershynin-two-side}
    \P\left(\left\|\frac{1}{n} A^\top A - I\right\|> \kappa^2 \max\{\delta(t), \delta(t)^2\}\right) \leq 2 \exp(-t^2),
\end{align}
where
\begin{align*}
    \delta(t) := C \sqrt{\frac{d}{n}} + \frac{t}{\sqrt{n}}
\end{align*}
for some absolute constant $C > 0$.
\end{theorem}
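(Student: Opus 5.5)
The plan is the standard $\varepsilon$-net plus Bernstein argument. Write $M := \frac{1}{n} A^\top A - I$, which is symmetric. The first step reduces the operator norm to a supremum over a finite set. Fix a $\frac14$-net $\mathcal{N}$ of the unit sphere $S^{d-1}$ with $|\mathcal{N}| \le 9^d$. For symmetric $M$ one has
\begin{align*}
\|M\| \le 2 \max_{x \in \mathcal{N}} |\langle Mx, x\rangle|.
\end{align*}
To see this, take $x_0$ attaining $\|M\| = |\langle M x_0, x_0\rangle|$ and choose $x \in \mathcal{N}$ with $\|x - x_0\| \le \frac14$. Then $|\langle M x_0,x_0\rangle - \langle Mx,x\rangle| \le 2\cdot\frac14\|M\|$. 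So it suffices to show, with $\varepsilon := \kappa^2 \max\{\delta(t),\delta(t)^2\}$, that $|\langle Mx,x\rangle| \le \varepsilon/2$ for every $x \in \mathcal{N}$ with probability at least $1 - 2e^{-t^2}$.

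Next, fix a unit vector $x$. Then
\begin{align*}
\langle Mx,x\rangle = \frac1n\sum_{i=1}^n \left(\langle A_i,x\rangle^2 - 1\right).
\end{align*}
By isotropy, $\E\langle A_i,x\rangle^2 = 1$, so the summands are independent and mean zero. Since $\|\langle A_i,x\rangle\|_{\psi_2} \le \kappa$, the square satisfies $\|\langle A_i,x\rangle^2\|_{\psi_1} \le \kappa^2$. The centering lemma then gives $\|\langle A_i,x\rangle^2 - 1\|_{\psi_1} \le C_0 \kappa^2$. Bernstein's inequality for sums of subexponential variables yields
\begin{align*}
\P\left(|\langle Mx,x\rangle| \ge \tfrac{\varepsilon}{2}\right) \le 2\exp\left(-c\, n \min\left\{\frac{\varepsilon^2}{\kappa^4}, \frac{\varepsilon}{\kappa^2}\right\}\right).
\end{align*}

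Now plug in $\varepsilon/\kappa^2 = \max\{\delta,\delta^2\}$, writing $\delta = \delta(t)$. In both cases $\delta \le 1$ and $\delta > 1$, one checks $\min\{\max(\delta,\delta^2)^2, \max(\delta,\delta^2)\} = \delta^2$. Hence the single-vector failure probability is at most $2\exp(-c n \delta^2)$. Since $\delta = C\sqrt{d/n} + t/\sqrt n$, we get $n\delta^2 \ge C^2 d + t^2$. A union bound over $\mathcal N$ gives a total failure probability of at most
\begin{align*}
9^d \cdot 2\exp\left(-c(C^2 d + t^2)\right) \le 2e^{-t^2},
\end{align*}
once the absolute constant $C$ is chosen large enough that $cC^2 \ge \log 9$. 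Two small points need care to make this last inequality hold. The Bernstein constant $c$ can be absorbed, for example by rescaling $t$ or enlarging $C$, so that the exponent on $t^2$ is at least $1$. Also, isotropy forces $\kappa \ge c' > 0$, because $1 = \E\langle A_i,x\rangle^2 \lesssim \kappa^2$, so the normalization by $\kappa^2$ is legitimate.

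The main obstacle is the constant bookkeeping in this last step. The difficulty is making the two regimes of Bernstein's inequality, the subgaussian one for $\delta \le 1$ and the subexponential one for $\delta > 1$, collapse uniformly into $\delta^2$. Simultaneously the union-bound factor $9^d$ must be absorbed into the $C\sqrt{d/n}$ term of $\delta(t)$. I would handle this by treating the two cases explicitly and taking $C$ large relative to the constants $c$ and $C_0$ from Bernstein and the centering lemma.
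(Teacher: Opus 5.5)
Your quarter-net reduction $\|M\|\le 2\max_{x\in\mathcal N}|\langle Mx,x\rangle|$, the Bernstein bound for the centered squares $\langle A_i,x\rangle^2-1$, the identity $\min\{\max(\delta,\delta^2)^2,\max(\delta,\delta^2)\}=\delta^2$, and the union bound over $9^d$ points are exactly the standard proof of Vershynin's Theorem 4.6.1. The paper cites that theorem rather than proving it. Everything up to the bound $9^d\cdot 2\exp(-c(C^2d+t^2))$ is correct. The gap is the final inequality.

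The constant $c$ that Bernstein gives you, after the $\psi_2\to\psi_1$ and centering losses, is an absolute constant below $1$. Enlarging $C$ cannot repair the coefficient of $t^2$. Even if you keep the cross term, $c(C\sqrt d+t)^2-d\log 9\ge t^2$ fails for large $t$ because of the deficit $(1-c)t^2$. Rescaling $t\mapsto t/\sqrt c$ does produce $2e^{-t^2}$, but then the threshold becomes $\delta(t)=C\sqrt{d/n}+t/\sqrt{cn}$. So your argument proves Vershynin's actual statement, $\delta(t)=C(\sqrt{d/n}+t/\sqrt n)$ with the absolute constant on both terms. It does not prove the version transcribed here, which has coefficient $1$ on $t/\sqrt n$.

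No bookkeeping can close this, because the transcribed version is false. Take $d=1$ and $A_i=\varepsilon_i\sqrt{E_i}$, with $\varepsilon_i$ Rademacher and $E_i\sim\mathrm{Exp}(1)$, all independent. These rows are mean zero and isotropic, and $\P(|A_i|\ge s)=e^{-s^2}$, so $\kappa=1$ in the paper's tail convention. Here $\|\frac1n A^\top A-I\|=|\frac1n\sum_i E_i-1|$. Fix $t$ and let $n\to\infty$, so that $\delta\le 1$. By the central limit theorem, the probability in the claimed bound tends to $\P(g>C+t)$ with $g$ standard normal. This is $e^{-(C+t)^2/2}$ up to a factor polynomial in $C+t$, so it exceeds $2e^{-t^2}$ once $t$ is a large multiple of $C$.

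The same CLT mechanism breaks the statement under your $\psi_2$ normalization whenever $\mathrm{Var}(\langle A_i,x\rangle^2)>\kappa^4/2$. An example is the law with mass $1/14$ at each of $\pm\sqrt7$ and the rest at $0$.

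The correct fix is to state the theorem with $C$ multiplying both terms; with that change your proof is complete. This is harmless for the paper, whose downstream use needs only $n=\Omega(d/(\alpha^4\delta^2))$ and a failure probability of the form $\exp(-c\,\alpha^4\delta^2 n)$.
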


Indeed, taking $A$ to the $n \times d$ matrix with rows $Z_i$, we have that
\begin{align*}
    \frac{1}{n} A^\top A = \frac{1}{n} \sum_{i=1}^n Z_i Z_i^\top,
\end{align*}
and it therefore follows immediately from Theorem~\ref{th:sg-matrices} that
\begin{align*}
    \P\left(\left\|\frac{1}{n} \sum_{i=1}^n Z_i Z_i^\top - I\right\|> \frac{40}{\alpha^2} \max\{\delta(t), \delta(t)^2\}\right) \leq 2 \exp(-t^2).
\end{align*}
In fact, with some simple algebraic manipulations (c.f. Remark 4.7.3 and Exercise 4.49 in \cite{Vershynin2018}), one can show that Equation~\ref{eq:vershynin-two-side} implies that
\begin{align*}
    \P\left(\left\|\frac{1}{n} \sum_{i=1}^n Z_i Z_i^\top - I\right\|> \frac{40}{\alpha^2} \left(\frac{d + u}{n} + \sqrt{\frac{d + u}{n}}\right)\right) \leq 2 \exp(-u), \quad u \geq 0.
\end{align*}
For notational simplicity, let $C(\delta) = 1 - (1 + \delta)^{-1/2}$ as above. We aim to choose $u$ and $n$ so that
\begin{align*}
    \frac{40}{\alpha^2} \left(\frac{d + u}{n} + \sqrt{\frac{d + u}{n}}\right) < \frac{1}{2} C(\delta),
\end{align*}
while still maintaining a high probability guarantee. Now, if we consider the inequality
\begin{align*}
    \frac{40}{\alpha^2}(s + \sqrt{s}) < \frac{1}{2} C(\delta),
\end{align*}
we require that $\sqrt{s} < \frac{1}{80} \alpha^2 C(\delta)$, which implies that choosing $s = \Theta(\alpha^4 \delta^2)$ will suffice. Moreover, we require that $\frac{d + u}{n} \leq s$, which holds if and only if $u \leq s n - d$.

Taking $u = \frac{s}{2} n$, and rearranging, we will have the desired bound as long as
\begin{align*}
    n \geq \frac{2 d}{s} = \Omega\left(\frac{d}{\alpha^4 \delta^2}\right).
\end{align*}
In particular, for this choice of $n$, we will have that
\begin{align*}
    \P\left(\left\|\frac{1}{n} \sum_{i=1}^n Z_i Z_i^\top - I\right\| > \frac{1}{2} C(\delta)\right) \leq 2 \exp\left(-\frac{s}{2}n\right) = 2 \exp(-C_1 \alpha^4 \delta^2 n),
\end{align*}
where $C_1 > 0$ is an absolute constant.

We now turn our attention to the second term, which is slightly easier to analyze. Take
\begin{align*}
    R_n := \widehat{\Sigma}^{-1/2}(\widehat{m}_n - \widehat{m}) = \frac{1}{n} \sum_{i=1}^n \widehat{\Sigma}^{-1/2} \left(X_i - \widehat{m}\right) = \frac{1}{n} \sum_{i=1}^n Z_i,
\end{align*}
so that
\begin{align*}
    \|\widehat{\Sigma}^{-1/2} (\widehat{m}_n - \widehat{m})(\widehat{m}_n - \widehat{m})^\top \widehat{\Sigma}^{-1/2}\|_2 = \|R_n R_n^\top\|_2 = \|R_n\|_2^2.
\end{align*}
Noting that the $Z_i$ are independent, centered, and isotropic by construction, it follows that
\begin{align*}
    Y_n := \sqrt{n} R_n = \frac{1}{\sqrt{n}} \sum_{i=1}^n Z_i
\end{align*}
is centered and isotropic. Below, we take the Orlicz norm associated to the function $\Phi(u) = \exp(u^2) - 1$ to be $\|X\|_{\psi_2} := \inf\{\kappa > 0 : \E[\exp(X^2 / \kappa^2)] \leq 2\}$, as in \cite[Definition 2.6.4]{Vershynin2018}. Then, if $X$ is $\kappa$-subgaussian, $\|X\|_{\psi_2} \leq c \kappa$ for some universal $c > 0$.

Now, because the $Z_i$ are independent, centered, $\kappa$-subgaussian random variables, \cite[Proposition 2.7.1]{Vershynin2018} reveals that for some universal $C' > 0$,
\begin{align*}
    \|Y_n\|_{\psi_2}^2 = \frac{1}{n} \left\|\sum_{i=1}^n Z_i\right\|_{\psi_2}^2 \leq \frac{C'}{n} \sum_{i=1}^n \|Z_i\|_{\psi_2}^2 = C' c^2 \kappa^2
\end{align*}
so that $Y_n$ itself is $\Theta(\kappa)$-subgaussian. To proceed, we require a standard $\epsilon$-net argument.
\begin{lemma}
\label{lem:norm-eps-net}
With $R_n$ and $C(\delta)$ defined as above, if $n = \Omega\left(\frac{d}{\alpha^4 \delta^2}\right)$, then
\begin{align*}
    \P\left(\|R_n\|_2^2 > \frac{1}{2} C(\delta)\right) \leq \exp(-C_2 \alpha^2 \delta n).
\end{align*}
\end{lemma}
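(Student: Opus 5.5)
We bound $\|R_n\|_2 = \sup_{v \in S^{d-1}} \langle v, R_n \rangle$ by combining the subgaussianity of $Y_n = \sqrt{n} R_n$ established just above with a standard $\tfrac12$-net argument. Fix a $\tfrac12$-net $\mathcal{N}$ of the unit sphere $S^{d-1}$ with $|\mathcal{N}| \leq 5^d$ (Corollary 4.2.13 in \cite{Vershynin2018}). The usual approximation argument gives $\|R_n\|_2 \leq 2 \max_{v \in \mathcal{N}} \langle v, R_n \rangle$: write $v^\star$ for the maximizer on the sphere and approximate it by a net point. It therefore suffices to control each one-dimensional marginal $\langle v, R_n \rangle = n^{-1/2} \langle v, Y_n \rangle$ and then take a union bound over $\mathcal{N}$.

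For a fixed unit vector $v$, we showed that $Y_n$ is $\Theta(\kappa)$-subgaussian with $\kappa = \sqrt{40}/\alpha$. Hence
\begin{align*}
    \P\left(|\langle v, R_n \rangle| \geq s\right) = \P\left(|\langle v, Y_n \rangle| \geq \sqrt{n}\, s\right) \leq 2 \exp\left(-\frac{c\, \alpha^2 n s^2}{40}\right)
\end{align*}
for a universal $c > 0$. We take $s = \tfrac12 \sqrt{C(\delta)/2}$, which ensures $(2s)^2 = \tfrac12 C(\delta)$. The union bound then yields
\begin{align*}
    \P\left(\|R_n\|_2^2 > \tfrac12 C(\delta)\right) \leq 2 \cdot 5^d \exp\left(-c'\, \alpha^2 C(\delta)\, n\right).
\end{align*}
Since $C(\delta) = 1 - (1+\delta)^{-1/2} \geq c''\delta$ for $\delta \in (0,1]$ (and is bounded below by a constant for larger $\delta$), the exponent is of order $\alpha^2 \delta n$.

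It remains to absorb the prefactor $2 \cdot 5^d$. We need $d \log 5 + \log 2 \leq \tfrac12 c' \alpha^2 \delta n$, which holds once $n \gtrsim d/(\alpha^2 \delta)$. This requirement is weaker than the hypothesis $n = \Omega(d/(\alpha^4 \delta^2))$, because $\alpha < 1$ and $\delta \lesssim 1$ in the regime of interest. Under this condition, half of the exponent dominates the net cardinality, and we obtain the stated tail bound $\exp(-C_2 \alpha^2 \delta n)$ after renaming the constant.

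The main obstacle is bookkeeping rather than ideas. First, we must make sure the subgaussian constant of $Y_n$ really scales like $1/\alpha$; this is where we rely on \cite[Proposition 2.7.1]{Vershynin2018} together with the isotropy and independence of the $Z_i$. Second, we must handle the case of large $\delta$, where $C(\delta) \leq 1$ saturates. There we would simply replace $\delta$ by $\min\{\delta, 1\}$ in the exponent, or note that the bound only needs to hold for $\delta = O(1)$, as used in Gaussian cooling.
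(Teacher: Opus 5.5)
Your proposal is correct and follows the paper's proof essentially step for step. Both use a $\frac{1}{2}$-net of size at most $5^d$, the bound $\|R_n\|_2 \le 2\max_{v\in\mathcal{N}}\langle v,R_n\rangle$, the $\Theta(1/(\alpha\sqrt{n}))$-subgaussian tail of each marginal, a union bound, and absorption of the $5^d$ factor once $n \gtrsim d/(\alpha^2\delta)$, which $n=\Omega(d/(\alpha^4\delta^2))$ implies. Your version is slightly cleaner in two places: you track the threshold factor explicitly through $s$, and you absorb the leading factor $2$, which the paper's proof leaves in front even though the lemma statement omits it.
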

\begin{proof}
Choose $\eps = 1 / 2$. By \cite[Corollary 4.2.11]{Vershynin2018}, we can construct an $\eps$-net $\sN$ of $\mathbb{S}^{d-1}$ of size $|\sN| \leq 5^d$. Choosing $v \in \R^d$ such that $\|R_n\|_2 = \langle R_n, v \rangle$, we may find $v' \in \sN$ such that $\|v - v'\|_2 \leq \frac{1}{2}$. Thus,
\begin{align*}
    |\langle R_n, v - v' \rangle| \leq \|R_n\|_2 \|v - v'\|_2 \leq \frac{1}{2} \|R_n\|_2,
\end{align*}
implying that
\begin{align*}
    |\langle R_n, v' \rangle| \geq |\langle R_n, v \rangle| - |\langle R_n, v - v' \rangle| \geq \frac{1}{2} \|R_n\|_2.
\end{align*}
In particular, $\|R_n\|_2 \leq 2 \max_{v \in \sN} |\langle R_n, v \rangle|.$
Now, as $R_n$ is $\Theta(\kappa / \sqrt{n})$-subgaussian, for any fixed $v \in \sN$, we have that
\begin{align*}
    \P(|\langle R_n, v \rangle| > t) \leq 2 \exp\left(-\frac{c_0 n t^2}{\kappa^2}\right)
\end{align*}
for some absolute constant $c_0 > 0$. Via a union bound over all $v \in \sN$, we have that
\begin{align*}
    \P(\|R_n\|_2 > t) \leq \P\left(\max_{v \in \sN} |\langle R_n, v \rangle| > \frac{t}{2}\right) &\leq 2 \cdot 5^d \exp\left(-\frac{c_0 n t^2}{\kappa^2}\right) = 2 \exp\left(-\frac{c_0 n t^2}{\kappa^2} + d \log 5\right)
\end{align*}
taking $c_0$ smaller if necessary. Hence, for $t = \sqrt{C(\delta) / 2}$, it follows that
\begin{align*}
    \P(\|R_n\|_2^2 > C(\delta) / 2) \leq 2 \exp\left(-\frac{c_0 n C(\delta)}{\kappa^2} + d \log 5\right).
\end{align*}
In particular, there exists universal $C_2 > 0$ such that for $n = \Omega\left(\frac{d}{\alpha^2 \delta}\right)$, we have that
\begin{align*}
    \P\left(\|R_n\|_2^2 > \frac{1}{2} C(\delta)\right) \leq 2 \exp(-C_2 \alpha^2 \delta n),
\end{align*}
using the fact that $\kappa = \Theta(1 / \alpha)$ and $C(\delta) = \Theta(\delta)$ for $\delta$ sufficiently small.
\end{proof}

Combining our estimates of both terms in Equation~\ref{eq:bv-decomp}, we have that for $n = \Omega\left(\frac{d}{\alpha^4 \delta^2}\right)$,
\begin{align*}
    \P\left(\|\widehat{\Sigma}^{-1/2} \widehat{\Sigma}_n\widehat{\Sigma}^{-1/2} - I\|_2 > C(\delta)\right) &\leq \P\left(\left\|\frac{1}{n} \sum_{i=1}^n Z_i Z_i^\top - I\right\| > \frac{1}{2} C(\delta)\right) + \P\left(\|R_n\|_2^2 > \frac{1}{2} C(\delta)\right) \\
    &\leq 2 \exp(-C_1 \alpha^4 \delta^2 n) + 2 \exp(-C_2 \alpha^2 \delta n) \\
    &\leq 4 \exp(-C_3 \alpha^4 \delta^2 n)
\end{align*}
for some universal $C_3 > 0$. This lemma is sufficient to conclude the main result of this section.
\begin{theorem}
\label{th:sample-cov-concentration}
Suppose that $\mu \in \sP_2(\R^d)$ is an $\alpha$-strongly log-concave and $(1 / \alpha)$-smooth measure on $\R^d$ with $\Sigma := \covar(\mu)$ and that $\widehat{\mu} \in \sP_2(\R^d)$ is such that for fixed $\delta \in (0, 1)$,
\begin{align*}
    \chi^2(\widehat{\mu} \| \mu) < \min\left\{\left(\sqrt{3 - (1 + \delta)^{-1/2}} - \sqrt{2}\right)^2, \frac{1}{16}\right\}.
\end{align*}
There exist universal constants $C_1, C_2 > 0$ such that if $X_1, \ldots, X_n \overset{i.i.d.}{\sim} \widehat{\mu}$ and
\begin{align*}
    \widehat{\Sigma}_n := \frac{1}{n} \sum_{i=1}^n \left(X_i - \frac{1}{n} \sum_{j=1}^n X_j\right)\left(X_i - \frac{1}{n} \sum_{j=1}^n X_j\right)^\top
\end{align*}
is the corresponding (biased) sample covariance, formed from at least $n = \frac{C_1 d}{\alpha^4 \delta^2}$ samples, then
\begin{align*}
    (1 + \delta)^{-1} \Sigma \preceq \widehat{\Sigma}_n \preceq (1 + \delta) \Sigma
\end{align*}
with probability at least $1 - 4 \exp(-C_2 \alpha^4 \delta^2 n)$.
\end{theorem}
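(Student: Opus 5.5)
The plan is to split the desired two-sided bound into two factors of $(1+\delta)^{1/2}$. Write $\widehat{\Sigma} := \covar(\widehat{\mu})$.

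The first factor compares the population covariances. Since $\mu$ is log-concave and $\chi^2(\widehat{\mu}\|\mu) < (\sqrt{3-(1+\delta)^{-1/2}}-\sqrt{2})^2$, Corollary~\ref{cor:chi2-to-precond-1} gives deterministically
\begin{align*}
(1+\delta)^{-1/2}\Sigma \preceq \widehat{\Sigma} \preceq (1+\delta)^{1/2}\Sigma .
\end{align*}
The second factor is the probabilistic statement of Equation~\ref{eq:emp-cov-des}, namely $(1+\delta)^{-1/2}\widehat{\Sigma} \preceq \widehat{\Sigma}_n \preceq (1+\delta)^{1/2}\widehat{\Sigma}$. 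Chaining the two sandwiches then gives $(1+\delta)^{-1}\Sigma \preceq \widehat{\Sigma}_n \preceq (1+\delta)\Sigma$, as already observed after Equation~\ref{eq:emp-cov-des}. So it suffices to establish Equation~\ref{eq:2-norm-precond}, i.e.\ $\|\widehat{\Sigma}^{-1/2}\widehat{\Sigma}_n\widehat{\Sigma}^{-1/2}-I\|_2 \le C(\delta)$ with $C(\delta)=1-(1+\delta)^{-1/2}$, with the stated probability.

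For this I would use the decomposition in Equation~\ref{eq:bv-decomp}. Set $Z_i=\widehat{\Sigma}^{-1/2}(X_i-\widehat m)$, which are i.i.d., centered and isotropic, and let $R_n=\frac1n\sum_i Z_i$. By the triangle inequality,
\begin{align*}
\|\widehat{\Sigma}^{-1/2}\widehat{\Sigma}_n\widehat{\Sigma}^{-1/2}-I\| \le \Big\|\tfrac1n\textstyle\sum_i Z_iZ_i^\top - I\Big\| + \|R_n\|^2 .
\end{align*}
It therefore suffices to make each term at most $C(\delta)/2$ with high probability.

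Both bounds need a subgaussian constant for the $Z_i$. Since $\chi^2<1/16<1/2$, Lemma~\ref{lem:chi-2-to-sg} shows that $\widehat{\mu}$ is $\sqrt{8/\alpha}$-subgaussian. Lemma~\ref{lem:chi2-to-precond-1}, with $\varepsilon<1/16$, combined with $\lambda_{\min}(\Sigma)\ge\alpha$ gives $\lambda_{\min}(\widehat{\Sigma})\ge\alpha/5$. The lower bound $\lambda_{\min}(\Sigma)\ge\alpha$ comes from $(1/\alpha)$-smoothness via Corollary~\ref{cor:sandwich}. Together these make the $Z_i$ $O(1/\alpha)$-subgaussian, as computed just before Theorem~\ref{th:sg-matrices}.

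The first term is handled by Theorem~\ref{th:sg-matrices} in the form with $u$. Choose $s=\Theta(\alpha^4\delta^2)$ so that $\frac{40}{\alpha^2}(s+\sqrt s)<C(\delta)/2$, using $C(\delta)=\Theta(\delta)$ for $\delta\in(0,1)$, and take $u=sn/2$. Then $n\ge 2d/s$, i.e.\ $n=C_1d/(\alpha^4\delta^2)$, gives failure probability $2e^{-c\alpha^4\delta^2 n}$. The second term is Lemma~\ref{lem:norm-eps-net}, which bounds it with failure probability $2e^{-c\alpha^2\delta n}$ under the same sample size. Since $\alpha,\delta<1$, this is dominated by $2e^{-c\alpha^4\delta^2 n}$. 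A union bound then yields $4\exp(-C_2\alpha^4\delta^2 n)$.

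The main obstacle is bookkeeping rather than ideas. The first point is making sure the $1/16$ threshold is what delivers both the subgaussian constant and the lower bound on $\lambda_{\min}(\widehat{\Sigma})$. The second is checking that in the $\epsilon$-net bound the $d\log 5$ term is absorbed by $c_0nC(\delta)/\kappa^2$ once $n\gtrsim d/(\alpha^4\delta^2)$, where $\kappa$ denotes the subgaussian constant $O(1/\alpha)$, not the condition number. This absorption holds because $d/(\alpha^4\delta^2)\ge d/(\alpha^2\delta)$. Enlarging $C_1$ as needed handles both points.
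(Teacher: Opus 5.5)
Your proposal is correct and follows the paper's own argument essentially step for step. The steps are the same: the split into two $(1+\delta)^{1/2}$ factors via Corollary~\ref{cor:chi2-to-precond-1}, the decomposition in Equation~\ref{eq:bv-decomp}, the $\sqrt{40}/\alpha$ subgaussian constant for the $Z_i$ obtained from Lemma~\ref{lem:chi-2-to-sg} and $\lambda_{\min}(\widehat{\Sigma})\ge\alpha/5$, then Theorem~\ref{th:sg-matrices} with $s=\Theta(\alpha^4\delta^2)$ and $u=sn/2$, Lemma~\ref{lem:norm-eps-net} for $\|R_n\|^2$ (where $n\gtrsim d/(\alpha^4\delta^2)\ge d/(\alpha^2\delta)$ absorbs the $d\log 5$ term), and a final union bound.
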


\subsection{Concentration in the Translation-Invariant Case}
As in the general case in the preceding section, we can reduce to well-established concentration inequalities for subgaussian random variables. However, we only need to establish that the $d$ entries of the vector $\widehat{\sigma}$, defined in Equation~\ref{eq:emp-fourier-sigma}, concentrate well. For completeness, recall that in the translation-invariant case, the target covariance $\Sigma$ is diagonalized by a discrete Fourier transform in the sense that
\begin{align*}
    \Sigma = \sF \diag(\sigma) \sF^*
\end{align*}
for some $\sigma \in \R^d$. We then must establish that the entries of our estimate $\widehat{\sigma}$ satisfy
\begin{align}
\label{eq:sigma-concentration}
    \left|\frac{\widehat{\sigma}_i}{\sigma_i} - 1\right| \leq C(\delta), \quad i \in [d],
\end{align}
simultaneously, as seen in Equation~\ref{eq:fourier-vector-concentration}. To this end, we may show that Equation~\ref{eq:sigma-concentration} holds individually for each $i \in [d]$ and then perform a union bound over the $d$ entries of $\widehat{\sigma}$. We begin by writing
\begin{align*}
    \widehat{\sigma}_i = \frac{1}{n} \sum_{j=1}^n |\langle e_i, \sF(x_j - \widehat{m}_n)\rangle|^2 =  \frac{1}{n} \sum_{j=1}^n |\langle x_j - \widehat{m}_n, v_i \rangle|^2,
\end{align*}
where $v_i = \sF^* e_i \in \C^d$ (taking the physics convention, the inner product on $\C^d$ is linear in its second argument). Note that $\|v_i\| = 1$, as $\sF^*$ is an isometry. On the other hand, if $X \sim \mu$ is drawn from the target distribution, we have that
\begin{align*}
    \sigma_i = \langle e_i, \sF^* \Sigma \sF e_i \rangle &= \E[\langle e_i, \sF^* (X - \E[X])(X - \E[X])^\top \sF e_i \rangle] \\
    &= \E[|\langle X - \E[X], \sF^* e_i\rangle|^2] \\
    &= \E[|\langle X - \E[X], v_i\rangle|^2].
\end{align*}
Splitting $v_i$ into its real and imaginary components, we observe that if $\mu$ is an $(1/\alpha)$-smooth distribution, then
\begin{align*}
    \E[|\langle X - \E[X], v_i\rangle|^2] = \E[|\langle X - \E[X], \Re(v_i)\rangle|^2] + \E[|\langle X - \E[X], \Im(v_i)\rangle|^2] \geq \alpha,
\end{align*}
invoking Corollary~\ref{cor:sandwich} to obtain a lower bound on the covariance of $\mu$. Hence, the righthand side of Equation~\ref{eq:sigma-concentration} is well-defined, and in fact, it suffices to show that
\begin{align*}
    |\widehat{\sigma}_i - \sigma_i| \leq \alpha C(\delta), \quad i \in [d].
\end{align*}

Fixing $v \in \R^d$ to be a vector with $\|v\| \leq 1$ (as is the case for both $\Re(v_i)$ and $\Im(v_i)$ above), it then suffices to obtain a sample complexity estimate such that
\begin{align*}
    \left|\frac{1}{n} \sum_{j=1}^n |\langle x_j - \widehat{m}_n, v \rangle|^2 - \E[|\langle X - \widehat{m}, v \rangle|^2]\right| \leq \frac{\alpha C(\delta)}{2}
\end{align*}
with high probability, where $\widehat{m} = \E_{\widehat{\mu}}[X]$. The extra factor of two in the denominator arises because we will eventually need to apply such an inequality twice, to the real and imaginary components of the vector $v_i = \sF^* e_i$. Now, for each $j \in [n]$, define
\begin{align}
\label{eq:Zj}
    Z_j := \langle x_j - \widehat{m}, v \rangle.
\end{align}
We also take
\begin{align}
\label{eq:Zhat}
    \widehat{Z}_n := \frac{1}{n} \sum_{j=1}^n Z_j = \frac{1}{n} \sum_{j=1}^n \langle x_j - \widehat{m}, v \rangle = \langle \widehat{m}_n - \widehat{m}, v \rangle.
\end{align}
Now, it is easy to see that
\begin{align*}
    \frac{1}{n} \sum_{j=1}^n |\langle x_j - \widehat{m}_n, v \rangle|^2 &= \frac{1}{n} \sum_{j=1}^n |Z_j - \widehat{Z}_n|^2 \\
    &= \frac{1}{n} \sum_{j=1}^n Z_j^2 - \widehat{Z}_n^2.
\end{align*}
If $Z \overset{d}{=} \langle X - \widehat{m}, v \rangle$, then
\begin{align*}
    \E[|\langle X - \widehat{m}, v \rangle|^2] = \E[Z^2],
\end{align*}
and have reduced to showing that for $Z_1, \ldots, Z_n \overset{i.i.d.}{\sim} \mathrm{Law}(Z)$,
\begin{align}
\label{eq:translation-invariant-decomp}
    \left|\frac{1}{n} \sum_{j=1}^n Z_j^2 - \E[Z^2]\right| + \widehat{Z}_n^2 \leq \frac{\alpha C(\delta)}{2}
\end{align}
via triangle inequality.
In a very similar manner to the general case, the first term can be handled via an appropriate concentration inequality for sub-exponential random variables. 
\begin{lemma}
\label{lem:subexp-concentration-1}
Suppose that $\mu \in \sP_2(\R^d)$ is an $\alpha$-strongly log-concave and $(1 / \alpha)$-smooth measure on $\R^d$ and that $\widehat{\mu} \in \sP_2(\R^d)$ is such that for fixed $\delta \in (0, 1)$,
\begin{align*}
    \chi^2(\widehat{\mu} \| \mu) < \min\left\{\left(\sqrt{3 - (1 + \delta)^{-1/2}} - \sqrt{2}\right)^2, \frac{1}{16}\right\}.
\end{align*}
There exists a universal constant $C > 0$ such that if $X_1, \ldots, X_n \overset{i.i.d.}{\sim} \widehat{\mu}$, then for $Z_1, \ldots, Z_n$ as defined in Equation~\ref{eq:Zj} and $Z \sim \mathrm{Law}(Z_1)$,
\begin{align*}
    \left|\frac{1}{n} \sum_{j=1}^n Z_j^2 - \E[Z^2]\right| \leq \frac{\alpha C(\delta)}{4}
\end{align*}
with probability at least $1 - 2 \exp(- C n \alpha^4 \delta^2)$.
\end{lemma}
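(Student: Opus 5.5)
The plan is to apply Bernstein's inequality for subexponential random variables to the centered variables $W_j := Z_j^2 - \E[Z^2]$. The only real input needed is a bound on the subgaussian norm of $Z_j = \langle x_j - \widehat{m}, v\rangle$.

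\textbf{Step 1 (subgaussianity of $Z_j$).} Since $\|v\| \le 1$, Lemma~\ref{lem:chi-2-to-sg} applies: its hypothesis $\chi^2(\widehat{\mu}\|\mu) < \tfrac12$ holds because the assumed bound is at most $\tfrac1{16}$. Applying the lemma in direction $v/\|v\|$ and rescaling gives
\begin{align*}
\P(|Z_j| \ge t) \le 2\exp(-\alpha t^2/8), \qquad t \ge 0.
\end{align*}
By the equivalence of subgaussian characterizations in Definition~\ref{def:subgaussian}, $\|Z_j\|_{\psi_2} \le c/\sqrt{\alpha}$ for a universal $c$. Note the difference from the general case: there we needed to whiten by $\widehat{\Sigma}^{-1/2}$ and lost a factor $1/\alpha$. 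Here no whitening is required, so the subgaussian scale is only $\alpha^{-1/2}$.

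\textbf{Step 2 (subexponential norm).} Use $\|X^2\|_{\psi_1} = \|X\|_{\psi_2}^2$, together with centering of $\psi_1$-norms (Vershynin, Exercise~2.7.10), to get $\|W_j\|_{\psi_1} \le C_0/\alpha$. The $W_j$ are i.i.d. and mean zero.

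\textbf{Step 3 (Bernstein).} Bernstein's inequality (Vershynin, Corollary~2.8.3) gives, with $K = C_0/\alpha$,
\begin{align*}
\P\Big(\Big|\tfrac1n \textstyle\sum_j W_j\Big| \ge s\Big) \le 2\exp\!\Big(-c\, n \min\{s^2/K^2,\, s/K\}\Big).
\end{align*}
Take $s = \alpha C(\delta)/4$. Then $s/K = \Theta(\alpha^2 C(\delta)) \le 1$, since $\alpha<1$ and $C(\delta) \le 1$, so the quadratic branch is the active one. The exponent is therefore
\begin{align*}
c\,n\,\alpha^4 C(\delta)^2/(16 C_0^2).
\end{align*}
On $\delta \in (0,1)$ we have $C(\delta) = 1-(1+\delta)^{-1/2} \ge c'\delta$, because the function is concave and $C(1) = 1-2^{-1/2}$. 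This gives the exponent $C n \alpha^4 \delta^2$ claimed in Lemma~\ref{lem:subexp-concentration-1}.

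\textbf{Main obstacle.} The only delicate point is Step 1. The tail bound from Lemma~\ref{lem:chi-2-to-sg} is about centering at $\E_{\widehat\mu}$, which is exactly $\widehat m$, so it applies directly to $Z_j$. Once that is confirmed, the constants have to be tracked so that the final exponent has the form $n\alpha^4\delta^2$. This is weaker than the $n\alpha^2\delta^2$ the argument actually yields, but it is still valid because $\alpha \le 1$.
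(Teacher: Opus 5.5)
Your proposal is correct and follows the paper's own proof. Both arguments take subgaussianity of $Z_j$ from Lemma~\ref{lem:chi-2-to-sg}, deduce $\|Z_j^2-\E[Z^2]\|_{\psi_1}=O(1/\alpha)$ by squaring and centering, apply subexponential Bernstein with the quadratic branch active, and finish with $C(\delta)=\Theta(\delta)$; your concavity bound $C(\delta)\ge(1-2^{-1/2})\delta$ makes that last step explicit. One slip is in your closing remark: the argument does not yield $n\alpha^2\delta^2$, because $s/K\asymp\alpha^2C(\delta)$ gets squared, so your own Step~3 gives exactly the exponent $n\alpha^4C(\delta)^2$ and the stated bound is what this method produces rather than a weakening of it.
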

\begin{proof}
We write
\begin{align*}
    \left|\frac{1}{n} \sum_{j=1}^n Z_j^2 - \E[Z^2]\right| = \left|\sum_{j=1}^n \frac{1}{n}(Z_j^2 - \E[Z^2])\right|.
\end{align*}
Now, the random variables $Y_j := Z_j^2 - \E[Z^2]$ are independent, mean-zero, and $O(1/\alpha)$-subexponential, assuming that $\widehat{\mu}$ is the same distribution discussed in Lemma~\ref{lem:chi-2-to-sg}. Indeed, $Z_j = \langle X_j - \widehat{m}, v \rangle$ is shown to be $(\sqrt{8/ \alpha})$-subgaussian in said lemma, and a combination of \cite[Lemma 2.8.5 and Equation (2.26)]{Vershynin2018} show that the centered variance $Z_j^2 - \E[Z^2]$ is then subexponential with
\begin{align*}
    \|Z_j^2 - \E[Z^2]\|_{\psi_1} \leq C\|Z_j^2\|_{\psi_1} = C\|Z_j\|_{\psi_2}^2 = \frac{8 C}{\alpha},
\end{align*}
where $C > 0$ is a universal constant and the (subexponential) Orlicz norm $\|\cdot\|_{\psi_1}$ is defined by $\|X\|_{\psi_1} := \inf\{K > 0 : \E[\exp(|X| / K)] \leq 2\}$. This observation allows us to apply the subexponential Bernstein inequality from \cite[Corollary 2.9.2]{Vershynin2018}, which yields
\begin{align*}
    \P\left(\left|\sum_{j=1}^n \frac{1}{n}(Z_j^2 - \E[Z^2])\right| \geq \frac{\alpha C(\delta)}{2}\right) &\leq 2 \exp\left[-C n \min\left\{\frac{\alpha^2 C(\delta)^2}{\|Z_j^2 - \E[Z^2]\|_{\psi_1}^2}, \frac{\alpha C(\delta)}{\|Z_j^2 - \E[Z^2]\|_{\psi_1}}\right\}\right]\\
    &\leq 2 \exp\left(- C n \alpha^4 C(\delta)^2\right),
\end{align*}
where $C > 0$ is some universal constant (taken larger if necessary in the second inequality). The conclusion then follows because $C(\delta) = \Theta(\delta)$.
\end{proof}
If $\varepsilon > 0$ is fixed, then as long as we take
\begin{align}
\label{eq:translation-inv-sample-complexity}
    n = \Omega\left(\frac{\log(d / \varepsilon)}{\alpha^4 \delta^2}\right)
\end{align}
we can ensure that
\begin{align*}
    \P\left(\left|\sum_{j=1}^n \frac{1}{n}(Z_j^2 - \E[Z^2])\right| \geq \frac{\alpha C(\delta)}{4}\right) \leq \frac{\varepsilon}{4 d}.
\end{align*}

Turning to the second term in Equation~\ref{eq:translation-invariant-decomp}, we observe that
\begin{align*}
    \P\left(\widehat{Z}_n^2 \geq \frac{\alpha C(\delta)}{4}\right) = \P\left(\left|\frac{1}{n} \sum_{j=1}^n Z_j \right| \geq \sqrt{\frac{\alpha C(\delta)}{4}}\right).
\end{align*}
As noted in the proof Lemma~\ref{lem:subexp-concentration-1}, the $Z_j$ are $(\sqrt{8/ \alpha})$-subgaussian. As a result, we can straightforwardly apply Hoeffding's inequality. From \cite[Theorem 2.7.3]{Vershynin2018}, we know that
\begin{align*}
    \P\left(\left|\frac{1}{n} \sum_{j=1}^n Z_j \right| \geq \sqrt{\frac{\alpha C(\delta)}{4}}\right) &\leq 2 \exp\left(-c n^2 \frac{\alpha C(\delta)}{\sum_{j=1}^n \|Z_j\|_{\psi_2}^2}\right) \\
    &= 2 \exp(-c n \alpha^2 \delta),
\end{align*}
for some universal $c > 0$,
recalling that $\|Z_j\|_{\psi_2} = O(\alpha^{-1/2})$. Hence, taking $n$ samples, where $n$ is given as in Equation~\ref{eq:translation-inv-sample-complexity}, suffices to ensure that
\begin{align*}
    \widehat{Z}_n^2 \leq \frac{\alpha C(\delta)}{4}
\end{align*}
with probability at least $1 - \frac{\varepsilon}{4d}$. We have in particular shown the following:
\begin{lemma}
\label{lem:subexp-concentration-combined}
Suppose that $\mu, \widehat{\mu} \in \sP_2(\R^d)$ are as in Lemma~\ref{lem:subexp-concentration-1}.
Then, for any fixed $v \in \R^d$ with $\|v\| \leq 1$ and $\varepsilon > 0$, as long as $n$ is as in Equation~\ref{eq:translation-inv-sample-complexity}, we have that
\begin{align*}
    \P\left(\left|\frac{1}{n} \sum_{j=1}^n |\langle x_j - \widehat{m}_n, v \rangle|^2 - \E[|\langle X - \widehat{m}, v \rangle|^2]\right| \geq \frac{\alpha C(\delta)}{2}\right) \leq \frac{\varepsilon}{2 d}.
\end{align*}
\end{lemma}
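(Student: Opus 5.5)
The plan is to combine the two tail estimates just established through the decomposition in Equation~\ref{eq:translation-invariant-decomp}, then take a union bound. Write $Z_j = \langle x_j - \widehat{m}, v\rangle$ and $\widehat{Z}_n = \langle \widehat{m}_n - \widehat{m}, v\rangle$. The algebraic identity
\begin{align*}
\frac{1}{n}\sum_{j=1}^n |\langle x_j - \widehat{m}_n, v\rangle|^2 = \frac{1}{n}\sum_{j=1}^n Z_j^2 - \widehat{Z}_n^2
\end{align*}
is verified by expanding the square and using $\frac{1}{n}\sum_j Z_j = \widehat{Z}_n$. It gives, since $\E[Z^2] = \E[|\langle X - \widehat{m}, v\rangle|^2]$ with $X \sim \widehat{\mu}$,
\begin{align*}
\left|\frac{1}{n}\sum_{j=1}^n |\langle x_j - \widehat{m}_n, v\rangle|^2 - \E[Z^2]\right| \le \left|\frac{1}{n}\sum_{j=1}^n Z_j^2 - \E[Z^2]\right| + \widehat{Z}_n^2 .
\end{align*}
If this quantity is at least $\alpha C(\delta)/2$, then at least one of the two summands on the right is at least $\alpha C(\delta)/4$.

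For the first summand I would invoke Lemma~\ref{lem:subexp-concentration-1}. Its failure probability $2\exp(-Cn\alpha^4\delta^2)$ is at most $\varepsilon/(4d)$ once $n \ge C'\log(8d/\varepsilon)/(\alpha^4\delta^2)$, which is the choice in Equation~\ref{eq:translation-inv-sample-complexity} with a suitably large implicit constant. The only thing to check is that $\|v\|\le 1$ rather than $=1$ is harmless. Rescaling $v$ to a unit vector only shrinks the subgaussian norm of $Z_j$, so the bound $\|Z_j\|_{\psi_2}=O(\alpha^{-1/2})$ coming from Lemma~\ref{lem:chi-2-to-sg} persists.

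For the second summand I would use the Hoeffding bound already displayed. The $Z_j$ are i.i.d., centered (since $\widehat{m}$ is the exact mean under $\widehat{\mu}$), and $O(\alpha^{-1/2})$-subgaussian, so $\P(\widehat{Z}_n^2 \ge \alpha C(\delta)/4) \le 2\exp(-cn\alpha^2 C(\delta))$. Because $\alpha\le 1$ and $C(\delta)=\Theta(\delta)$ with $\delta<1$, we have $\alpha^2\delta \ge \alpha^4\delta^2$. Hence the same $n$ makes this probability at most $\varepsilon/(4d)$.

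A union bound over the two events then gives total failure probability at most $\varepsilon/(2d)$, which is the claim. The main (mild) obstacle is bookkeeping the constants. The universal constant hidden in Equation~\ref{eq:translation-inv-sample-complexity} must be chosen large enough to absorb both the factor $2$ in each tail bound and the conversion between $C(\delta)$ and $\delta$. One also has to confirm that the fixed $\chi^2$ threshold ensures the subgaussian constants of Lemma~\ref{lem:chi-2-to-sg} are independent of $d$.
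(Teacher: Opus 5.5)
Your proposal is correct and follows the paper's own argument. It uses the same identity $\frac{1}{n}\sum_j |\langle x_j - \widehat{m}_n, v\rangle|^2 = \frac{1}{n}\sum_j Z_j^2 - \widehat{Z}_n^2$ with the triangle inequality, then Lemma~\ref{lem:subexp-concentration-1} for the first term and Hoeffding for $\widehat{Z}_n^2$, each at level $\varepsilon/(4d)$, and finishes with a union bound. Your remarks on $\|v\| \le 1$ and on $\alpha^2\delta \ge \alpha^4\delta^2$ (so the Hoeffding term needs no larger $n$) make explicit steps the paper leaves implicit.
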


We conclude, in the translation-invariant setting, by simply taking a union bound over the $2 d$ vectors $\Re(v_i)$ and $\Im(v_i)$, where $v_i = \sF^* e_i$ as identified above.

\begin{corollary}
\label{cor:translation-invariant-union-bound}
Suppose that $\mu \in \sP_2(\R^d)$ is an $\alpha$-strongly log-concave and $(1 / \alpha)$-smooth measure on $\R^d$ with translation-invariant covariance $\Sigma = \sF \diag(\sigma) \sF^*$ and $\widehat{\mu} \in \sP_2(\R^d)$ is such that
\begin{align*}
    \chi^2(\widehat{\mu} \| \mu) < \min\left\{\left(\sqrt{3 - (1 + \delta)^{-1/2}} - \sqrt{2}\right)^2, \frac{1}{16}\right\}.
\end{align*}
If $n = \Omega\left(\frac{\log(d / \varepsilon)}{\alpha^4 \delta^2}\right)$ and $X_1, \ldots, X_n \overset{i.i.d.}{\sim} \widehat{\mu}$, then $\widehat{\sigma}$ from Equation~\ref{eq:emp-fourier-sigma} satisfies
\begin{align*}
    \|\widehat{\sigma} \odot \sigma^{-1} - \mathbf{1}\|_\infty \leq C(\delta).
\end{align*}
with probability at least $1 - \varepsilon$.
\end{corollary}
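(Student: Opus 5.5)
The plan is to reduce Corollary~\ref{cor:translation-invariant-union-bound} to Lemma~\ref{lem:subexp-concentration-combined}, applied to $2d$ real test vectors, followed by a union bound.

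For each $i \in [d]$, set $v_i = \sF^* e_i$ and write $v_i = a_i + \mathrm{i}\, b_i$ with $a_i = \Re(v_i)$ and $b_i = \Im(v_i)$. Both lie in $\R^d$ and satisfy $\|a_i\|, \|b_i\| \le \|v_i\| = 1$. Since the samples $x_j - \widehat{m}_n$ are real, we have $|\langle x_j - \widehat{m}_n, v_i\rangle|^2 = \langle x_j - \widehat{m}_n, a_i\rangle^2 + \langle x_j - \widehat{m}_n, b_i\rangle^2$. Averaging over $j$ gives
\[
\widehat{\sigma}_i = \widehat{s}(a_i) + \widehat{s}(b_i), \qquad \widehat{s}(w) := \frac{1}{n}\sum_{j=1}^n \langle x_j - \widehat{m}_n, w\rangle^2 .
\]
Define the population analogue $s(w) := \E_{\widehat{\mu}}[\langle X - \widehat{m}, w\rangle^2]$.

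Apply Lemma~\ref{lem:subexp-concentration-combined} to each of the $2d$ vectors $a_i, b_i$, with $n$ as in Equation~\ref{eq:translation-inv-sample-complexity}. Each application fails with probability at most $\varepsilon/(2d)$, so a union bound shows that with probability at least $1 - \varepsilon$ we have $|\widehat{s}(w) - s(w)| \le \alpha C(\delta)/2$ simultaneously for all $2d$ vectors. Summing the real and imaginary parts then gives $|\widehat{\sigma}_i - \widehat{\sigma}^{\mathrm{pop}}_i| \le \alpha C(\delta)$ for every $i$, where $\widehat{\sigma}^{\mathrm{pop}}_i := s(a_i) + s(b_i) = \langle v_i, \widehat{\Sigma} v_i\rangle$ and $\widehat{\Sigma} = \covar(\widehat{\mu})$.

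To finish, divide by $\sigma_i$. The text preceding Lemma~\ref{lem:subexp-concentration-1} shows, via Corollary~\ref{cor:sandwich} and $(1/\alpha)$-smoothness, that $\sigma_i \ge \alpha$. Hence $|\widehat{\sigma}_i/\sigma_i - \widehat{\sigma}^{\mathrm{pop}}_i/\sigma_i| \le C(\delta)$.

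\textbf{Main obstacle.} The difficulty is that the lemma compares $\widehat{\sigma}_i$ with the covariance of the \emph{sampled} law $\widehat{\mu}$, whereas the corollary concerns the target $\mu$, whose diagonal is $\sigma_i = \langle v_i, \Sigma v_i\rangle$. I would bridge this with Lemma~\ref{lem:chi2-to-precond-1}, applied separately to $a_i$ and $b_i$ and then summed. This gives $|\widehat{\sigma}^{\mathrm{pop}}_i - \sigma_i| \le (\sqrt{8\eta} + \eta)\,\sigma_i$, where $\eta$ is the $\chi^2$ threshold in the hypothesis. By the choice of threshold, as in Corollary~\ref{cor:chi2-to-precond-1}, this factor is at most of order $C(\delta)$.

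The two errors then add. The cleanest route is to run the concentration step with accuracy $\alpha C(\delta)/2$ in place of $\alpha C(\delta)$, which costs only a constant factor in $n$, and to use the $\chi^2$ bias bound for the other half. If the constants do not close exactly, I would instead state the result with the multiplicative chaining used just before Theorem~\ref{th:sample-cov-concentration}: the population bias contributes a factor $(1+\delta)^{\pm 1/2}$ and the sampling error contributes another $(1+\delta)^{\pm1/2}$, which is exactly the form needed in Equation~\ref{eq:fourier-space-rel-preconditioning}.
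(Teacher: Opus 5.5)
You have spotted a real issue that the paper's proof skips, but your main way of closing it does not work under the stated hypothesis.

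\textbf{Where you match the paper, and where the paper is loose.} Your reduction is exactly the paper's proof: apply Lemma~\ref{lem:subexp-concentration-combined} to the $2d$ real vectors $\Re(v_i),\Im(v_i)$, take a union bound with failure probability $\varepsilon/(2d)$ each, and divide by $\sigma_i\ge\alpha$. The paper never compares $\covar(\widehat{\mu})$ with $\covar(\mu)$. In its displayed chain it writes $|\sigma_i-\widehat{\sigma}_i|$ as the deviation from $\E[|\langle X-\widehat{m},v_i\rangle|^2]$, where, as in Lemma~\ref{lem:subexp-concentration-1}, $X\sim\widehat{\mu}$. So it is really bounding the error against the Fourier variances of the \emph{sampled} law, while still using the target's lower bound $\sigma_i\ge\alpha$. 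The bias only enters through the $(1+\delta)^{\pm1/2}$ chaining used in the general case before Theorem~\ref{th:sample-cov-concentration}. You are right that an extra step is needed if $\sigma$ is the target's spectrum.

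\textbf{Where your fix breaks.} Your ``cleanest route'' is to use half the budget for sampling error and ``the $\chi^2$ bias bound for the other half.'' This fails with the threshold as stated. Write $\eta$ for the $\chi^2$ bound and $C=C(\delta)$. Since $3-(1+\delta)^{-1/2}=2+C$, we have
\[
\eta<\bigl(\sqrt{2+C}-\sqrt{2}\bigr)^2 \iff \sqrt{8\eta}+\eta<C .
\]
So the hypothesis is calibrated so that Lemma~\ref{lem:chi2-to-precond-1} bounds the bias $|\langle v_i,\covar(\widehat{\mu})v_i\rangle-\sigma_i|$ only by $C\sigma_i$. That is the entire budget, not half of it. (The $1/16$ in the minimum never binds for $\delta\in(0,1)$.) Halving the sampling accuracy therefore gives only $\|\widehat{\sigma}\odot\sigma^{-1}-\mathbf{1}\|_\infty\le\tfrac{3}{2}C(\delta)$. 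Reaching $C(\delta)$ additively would require a stronger hypothesis, e.g.\ $\sqrt{8\eta}+\eta\le C(\delta)/2$.

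\textbf{What your fallback actually proves.} Multiplicative chaining is the correct repair. Note that you must run the concentration relative to $\langle v_i,\covar(\widehat{\mu})v_i\rangle$, which is only $\ge(1+\delta)^{-1/2}\alpha$; this costs a constant factor in $n$. Combined with $1+C(\delta)\le(1+\delta)^{1/2}$ and $1-C(\delta)=(1+\delta)^{-1/2}$, this yields $(1+\delta)^{-1}\le\widehat{\sigma}_i/\sigma_i\le 1+\delta$. That is not the stated $\ell_\infty$ bound. It is, however, exactly what Equation~\ref{eq:sample-covar-requirement} and Theorem~\ref{th:query-complexity-translation-invariant} consume.

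\textbf{How to finish.} Either restate the corollary in that multiplicative form, or read $\sigma$ as the Fourier diagonal of $\covar(\widehat{\mu})$, as the paper's proof implicitly does. In the second reading your bias paragraph is unnecessary. You would then replace $\sigma_i\ge\alpha$ by the corresponding lower bound for $\widehat{\mu}$.
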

\begin{proof}
As noted above, this claim will follow if we can show that $|\widehat{\sigma}_i - \sigma_i| \leq \alpha C(\delta)$ with probability at least $1 - \frac{\varepsilon}{d}$ for each $i \in [d]$, with the final claim then following directly from a union bound over all $i \in [d]$. If $v_i = \sF^* e_i \in \C^d$, then
\begin{align*}
    \left|\sigma_i - \widehat{\sigma}_i\right| &= \left|\frac{1}{n}\sum_{j=1}^n |\langle x_j - \widehat{m}_n, v_i \rangle|^2 - \E[|\langle X - \widehat{m}, v_i \rangle|^2]\right|  \\
    &\leq \left|\frac{1}{n}\sum_{j=1}^n |\langle x_j - \widehat{m}_n, \Re(v_i) \rangle|^2 - \E[|\langle X - \widehat{m}, \Re(v_i) \rangle|^2]\right| \\
    &\qquad + \left|\frac{1}{n}\sum_{j=1}^n |\langle x_j - \widehat{m}_n, \Im(v_i) \rangle|^2 - \E[|\langle X - \widehat{m}, \Im(v_i) \rangle|^2]\right|.
\end{align*}
By Lemma~\ref{lem:subexp-concentration-combined}, both summands on the righthand side of the preceding inequality are at most $\frac{\alpha C(\delta)}{2}$ with probability at least $1 - \frac{\varepsilon}{2 d}$ given our choice of $n$. Hence, for each $i \in [d]$, we have that
\begin{align*}
    \P(|\sigma_i - \widehat{\sigma}_i| \geq \alpha C(\delta)) \leq \frac{\varepsilon}{d}
\end{align*}
by union bounding over the event that either summand exceeds the threshold $\alpha C(\delta)$.
\end{proof}

\section{Additional Lattice $\phi^4$ Proofs and Experiments}
\label{app:additional-phi4-facts}

In this appendix, we expand upon and provide proofs of several facts about lattice $\phi^4$ distribution, as introduced in Section~\ref{sec:num-exps}. Additionally, we determine the sensitivity of translation-invariant Gaussian cooling to the number of ULMC steps per chain $N$ and the number of chains per stage $n$ in Figure~\ref{fig:phi4-scaling}.

\subsection{Additional Proofs}
In full generality, the lattice $\phi^4$ Hamiltonian takes the following form:
\begin{align}
\label{eq:lattice-hamiltonian}
    H(\phi) := \frac{1}{4}\sum_{x, y \in \Lambda} \beta_{xy}|\phi_x - \phi_y|^2 + h \sum_{x \in \Lambda} \phi_x + \sum_{x \in \Lambda} w(\phi_x)
\end{align}
Above, $\beta_{xy} = \beta_{yx} \geq 0$ is a symmetric coupling constant for each pair $(x, y) \in \Lambda \times \Lambda$, $h \in \R$ is an external magnetic field, and $w : \R \to \R$ is a scalar potential. For the lattice $\phi^4$ model, we then take the scalar potential $w$ to be
\begin{align*}
    w(\phi_x) = \frac{\lambda}{4} \phi_x^4 + \frac{m}{2} \phi_x^2.
\end{align*}

First, to understand why the lattice $\phi^4$ model provides a useful testbed for Gaussian cooling, we compute the Hessian of the Hamiltonian:
\begin{align*}
    \nabla^2 H(\phi) = (\Delta_\beta + m I) + 3 \lambda \diag(\phi^2).
\end{align*}
where $\phi^2$ is interpreted elementwise. As long as $m > 0$ and the coupling constant $\beta$ is also non-negative, having specialized to the case that $\Delta_\beta$ is a weighted graph Laplacian, it follows immediately that
\begin{align*}
    \nabla^2 H(\phi) \succeq m I, \quad \forall \phi \in \R^{\Lambda}.
\end{align*}
In other words, the $H$ is $m$-strongly convex. However, it is \textit{not} the case that $H$ is $L$-smooth for any $L > 0$. Indeed, because the Hessian of the quartic term in the lattice $\phi^4$ potential remains dependent on $\phi$, taking $|\phi| \to \infty$ allows us to deduce that for any $L > 0$, there exists $\phi \in \R^\Lambda$ such that
\begin{align*}
    \nabla^2 H(\phi) \succeq 3 \lambda \diag(\phi^2) \succeq L I.
\end{align*}

In order for this example to fit into our framework, we therefore need to truncate the scalar $\phi^4$ potential to a quadratic potential past some threshold. More specifically, we replace the scalar potential $w$ defined above with a truncated scalar potential of the form:
\begin{align}
\label{eq:truncated-potential}
w_R(\phi_x) := \begin{cases}
\frac{\lambda}{4} R^4 + \frac{m}{2} R^2 - (\lambda R^3 + m R) (\phi_x + R) + \frac{1}{2}(3 \lambda R^2 + m) (\phi_x + R)^2 & \phi_x < - R, \\
w(\phi_x) & |\phi_x| \leq R, \\
\frac{\lambda}{4} R^4 + \frac{m}{2} R^2 + (\lambda R^3 + m R) (\phi_x - R) + \frac{1}{2}(3 \lambda R^2 + m) (\phi_x - R)^2 & \phi_x > R.
\end{cases}
\end{align}
By construction, the truncated scalar potential $w_R$ is twice continuously differentiable, and we have that
\begin{align*}
    w_R''(\phi_x) = \begin{cases}
    w''(\phi_x) & |\phi_x| \leq R, \\
    3 \lambda R^2 + m & |\phi_x| > R.
    \end{cases}
\end{align*}
Because $|w''(\phi_x)| \leq 3 \lambda R^2 + m$ for $|\phi_x| \leq R$ and the contribution of the scalar potential to the Hessian of the Hamiltonian is diagonal, the truncated Hamiltonian $H_R(\phi)$ now satisfies, at the very least,
\begin{align*}
 \Delta_\beta + m I \preceq \nabla^2 H_R(\phi) \preceq \Delta_\beta + (3 \lambda R^2 + m) I.
\end{align*}
for all $\phi \in \R^\Lambda$. 
If $\lambda_1 = \lambda_{\min}(\Delta_\beta)$ and $\lambda_{n} = \lambda_{\max}(\Delta_\beta)$, then we conclude that the measure $\pi_R(\phi) \propto \exp(-H_R(\phi))$ induced by the truncated Hamiltonian is indeed
strongly log-concave and log-smooth, as
\begin{align*}
    \nabla^2 H_R(\phi) \preceq (\lambda_n + 3 \lambda R^2 + m)I.
\end{align*}
We can guarantee that Assumptions~\ref{asmp:conditioning}--\ref{asmp:strong-log-concave-smooth} are satisfied with the following parameters:
\begin{align}
\label{eq:lattice-phi4-params}
\begin{split}
&\alpha = \left(\frac{3 \lambda R^2}{\lambda_{\min}(\Delta_\beta) + m} + 1\right)^{-1}, \\
&\mu = \lambda_{\min}(\Delta_\beta) + m, \\
&L = \lambda_{\max}(\Delta_\beta) + m + 3 \lambda R^2.
\end{split}
\end{align}

To show that the covariance is block circulant with circulant blocks, it suffices to show that for all $x, y \in \Lambda_d$
\begin{align*}
    \E_{\phi \sim \pi}[\phi_x \phi_y] - \E_{\phi \sim \pi}[\phi_x] \E_{\phi \sim \pi}[\phi_y] = C([x - y]),
\end{align*}
where $C : [d - 1] \times [d - 1] \to \R$ is a function of $[x - y] := (x_1 - y_1 \bmod{d}, x_2 - y_2 \bmod{d})$. Indeed, if this holds for all $x, y \in \Lambda_d$, then we can partition $\Lambda_d \times \Lambda_d$ into $d \times d$ blocks by ordering the lattices sites $(x, y)$ in ascending order as $(0, 0), (0, 1), \ldots, (0, d - 1), \ldots, (d - 1, 0,), \ldots, (d - 1, d - 1)$. It is then evident that the resulting covariance matrix $\Sigma = \covar(\pi) \in \R^{d^2 \times d^2}$ is block circulant with $d \times d$ circulant blocks. Consequently, it is diagonalized by a discrete Fourier transform, exactly as in Section~\ref{subsec:translation-invariant}.

\begin{lemma}
\label{lem:block-circulance}
The entries $\Sigma_{xy}$ of the covariance $\Sigma := \covar(\pi)$ of the (truncated) lattice $\phi^4$ measure depend only on the increments $[x - y] := (x_1 - y_1 \bmod{d}, x_2 - y_2 \bmod{d})$.
\end{lemma}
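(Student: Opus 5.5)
The argument rests on translation invariance of the truncated Hamiltonian $H_R$ under the lattice shift group, combined with a change of variables in the integrals that define the mean and the second moments. For $a \in \Lambda$, define the shift $T_a : \R^\Lambda \to \R^\Lambda$ by $(T_a \phi)_x := \phi_{[x + a]}$, with indices taken modulo $d$ in each coordinate. This map is a permutation of coordinates. It is therefore linear, has determinant $\pm 1$, and preserves Lebesgue measure.

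The first step is to show that $H_R(T_a \phi) = H_R(\phi)$. There are three contributions to check.
\begin{itemize}
\item The on-site term $\sum_x w_R(\phi_x)$ is invariant under any permutation of sites.
\item An external-field term $h \sum_x \phi_x$, if present, is invariant for the same reason.
\item The quadratic term $\frac{1}{2}\langle \phi, (\Delta_\beta + mI)\phi\rangle = \frac14\sum_{x,y}\beta_{xy}(\phi_x-\phi_y)^2 + \frac m2\|\phi\|^2$ requires the couplings to be shift-invariant, $\beta_{[x+a][y+a]} = \beta_{xy}$.
\end{itemize}
The third point is exactly where an implicit assumption enters. I would make it explicit: the couplings depend only on $[x-y]$, for instance constant nearest-neighbour couplings $\beta_{xy} = \beta\,\mathbf{1}_{x\sim y}$, which is the setting of the experiments. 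Under this assumption, reindexing the sum by $x' = [x+a]$ and $y' = [y+a]$ gives invariance.

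The second step transfers this invariance to the measure. Substituting $\phi = T_a\psi$ in $\int F(\phi)e^{-H_R(\phi)}\,d\phi$ shows that $\pi_R$ is invariant under $T_a$. Hence $\E_\pi[\phi_x] = \E_\pi[\phi_{[x+a]}]$ for all $x$ and $a$, so the mean is a constant $c$. Similarly, $\E_\pi[\phi_x\phi_y] = \E_\pi[\phi_{[x+a]}\phi_{[y+a]}]$. The integrability needed for these substitutions follows from strong log-concavity of $\pi_R$, which gives finite moments.

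The final step sets $a = -y$. This gives $\Sigma_{xy} = \E_\pi[\phi_{[x-y]}\phi_0] - c^2 =: C([x-y])$, which depends only on the increment $[x-y]$, as the lemma claims. The block-circulant structure then follows from the ordering of sites described just before the lemma.

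I do not expect any serious obstacle. The only delicate point is stating the shift-invariance of $\beta_{xy}$ as a hypothesis. It is not automatic for arbitrary symmetric couplings, and without it the lemma fails.
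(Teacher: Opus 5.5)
Your proposal is correct and follows essentially the same route as the paper. Both arguments show that $H_R$ is invariant under the lattice shift and then change variables with the unit-Jacobian permutation to get constant means and $\E_\pi[\phi_x\phi_y] = \E_\pi[\phi_{[x-y]}\phi_0]$. Your explicit hypothesis $\beta_{[x+a][y+a]} = \beta_{xy}$ is exactly what the paper assumes implicitly when its entrywise check of $P_y \Delta_\beta P_y^\dagger = \Delta_\beta$ treats the coupling as a single constant $\beta$, so stating it is a gain in rigor rather than a deviation.
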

\begin{proof}
Fix some $x, y \in \Lambda$. Define the linear shift operator $P_y : \R^{\Lambda} \to \R^{\Lambda}$ by $(P_y \phi)_x = \phi_{[x + y]}$. We observe that $P_y$, which simply permutes the entries of the periodic lattice $\Lambda$ by constant shift of $y$, is unitary, with $(P_y^\dagger \phi)_x = \phi_{[x - y]}$. Now, let $\psi := P_y \phi$.

As a result, if we consider the original lattice $\phi^4$ potential in Equation~\ref{eq:lattice-phi4-potential}, we have that
\begin{align*}
    \sum_{x \in \Lambda} w(\psi_x) = \sum_{x \in \Lambda} w(\phi_{[x + y]}) = \sum_{z \in \Lambda} w(\phi_z),
\end{align*}
with the same holding true upon replacing $w$ with the truncated potential $w_R$ in Equation~\ref{eq:truncated-potential}. We also compute
\begin{align*}
    \langle \phi, \Delta_\beta \phi \rangle = \langle P_y^\dagger \psi, \Delta_\beta P_y^\dagger \psi \rangle = \langle \psi, P_y \Delta_\beta P_y^\dagger \psi \rangle.
\end{align*}
Now, by definition of the Laplacian, we observe that for any $v, w \in \Lambda$,
\begin{align*}
    \langle e_v, P_y \Delta_\beta P_y^\dagger e_w \rangle = \langle P_y^\dagger e_v, \Delta_\beta P_y^\dagger e_w \rangle &= \langle e_{[v + y]}, \Delta_\beta e_{[w + y]} \rangle \\
    &= \begin{cases}
    \beta & [v + y] \sim [w + y], \\
    0 & \text{otherwise},
    \end{cases} \\
    &= (\Delta_\beta)_{vw}.
\end{align*}
Indeed, in the $d$-periodic two-dimensional lattice, $[v + y] \sim [w + y]$ if and only if $v \sim w$. Hence, $\langle \phi, \Delta_\beta \phi \rangle = \langle \psi, \Delta_\beta \psi \rangle$. With both of these facts together, we see that 
\begin{align*}
    H(\phi) = \frac{1}{2}\langle \phi, \Delta_\beta \phi \rangle + \sum_{x \in \Lambda} w(\phi_x) = \frac{1}{2}\langle \psi, \Delta_\beta \psi \rangle + \sum_{x \in \Lambda} w(\psi_x) = H(\psi)
\end{align*}
is translation-invariant, with the same holding true for the truncated potential.

We can now directly compute each entry of the covariance to establish the claimed translation invariance. As noted above, this same computation goes through with the original potential and the truncated potential, so we only consider the former. Letting $Z$ denote the constant normalization of $\pi(\phi) \propto \frac{1}{Z} \exp(-H(\phi))$ and $\mu^\Lambda$ denote the Lebesgue measure on $\R^{\Lambda}$, we first compute
\begin{align*}
    \E_{\pi}[\phi_x \phi_y] &= \frac{1}{Z} \int_{\R^{\Lambda}} \phi_x \phi_y \exp(-H(\phi)) d\mu^\Lambda(\phi) \\
    &= \frac{1}{Z} \int_{\R^\Lambda} (P_y^\dagger \psi)_x (P_y^\dagger \psi)_y \exp(-H(P_y^\dagger \psi)) d\mu^\Lambda(\psi) \\
    &= \frac{1}{Z} \int_{\R^\Lambda} \psi_{[x - y]} \psi_0 \exp(-H(\psi)) d\mu^\Lambda(\psi) \\
    &= \E_\pi[\phi_{[x - y]} \phi_0].
\end{align*}
Above, we change variables to $\psi = P_y \phi$, use the fact that $H(\phi) = H(P_y^\dagger \psi) = H(\psi)$, and note that $P_y$ is a unitary and hence the Jacobian of the map $\phi \mapsto P_y \phi$ is one. An identical computation shows that $\E_{\pi}[\phi_x] = \E_{\pi}[\phi_{[x - y]}]$ and that $\E_{\pi}[\phi_y] = \E_{\pi}[\phi_0]$, applying the same change of variables in both cases. It follows that
\begin{align*}
    \Sigma_{xy} = \E_{\pi}[\phi_x \phi_y] - \E_{\pi}[\phi_x] \E_{\pi}[\phi_y] &= \E_{\pi}[\phi_{[x - y]} \phi_0] - \E_{\pi}[\phi_{[x - y]}] \E_{\pi}[\phi_0] \\
    &= \Sigma_{[x - y] 0},
\end{align*}
which depends only on the increment $[x - y]$.
\end{proof}

We can now verify Equation~\ref{eq:lattice-phi4-params} directly. Observe first that
\begin{align*}
    \kappa(\nabla^2 H_R(\phi)) \leq \frac{\lambda_n + 3 \lambda R^2 + m}{\lambda_1 + m}.
\end{align*}
One may choose the entire quadratic part of the potential as the preconditioner, i.e., taking $M_\beta := \Delta_\beta + m I$. In this case, we obtain
\begin{align*}
    I \preceq M_\beta^{-1/2} \nabla^2 H(M_\beta^{-1/2} \phi) M_\beta^{-1/2} \preceq I + 3 \lambda R^2 M_{\beta}^{-1}.
\end{align*}
In particular, the resulting condition number is
\begin{align*}
    \kappa(M_\beta^{-1/2} \nabla^2 H(M_\beta^{-1/2} \phi) M_\beta^{-1/2}) = 1 + \frac{3 \lambda R^2}{\lambda_1 + m},
\end{align*}
which retains some undesirable but unavoidable dependence on $R$, the radius of truncation.
To reconcile the example of lattice $\phi^4$ theory with the condition in Equation~\ref{eq:true-conditioning}, observe that because $M_\beta \succeq (\lambda_1 + m)I$, we have that
\begin{align*}
    \frac{3 \lambda R^2}{\lambda_1 + m} M_\beta \succeq 3 \lambda R^2 I.
\end{align*}
Hence, we have that
\begin{align*}
    M_\beta \preceq \nabla^2 H(\phi) \preceq M_\beta + 3 \lambda R^2 I \preceq \left(\frac{3 \lambda R^2}{\lambda_1 + m} + 1\right) M_\beta.
\end{align*}
If we define
\begin{align*}
    P_\beta := \left(\frac{3 \lambda R^2}{\lambda_1 + m} + 1\right) M_\beta,
\end{align*}
then we further have that 
\begin{align*}
    \left(\frac{3 \lambda R^2}{\lambda_1 + m} + 1\right)^{-1} P_\beta \preceq \nabla^2 H(\phi) \preceq P_\beta.
\end{align*}

\subsection{Additional Experiments}

The heatmap in Figure~\ref{fig:phi4-scaling} displays the sensitivity of the preconditioner estimate that Algorithm~\ref{alg:gaussian-cooling} outputs to both the number of chains per stage $n$ and the number of ULMC steps per chain $N$ for a lattice with sidelength $d = 100$. For all experiments, the number of stages of Gaussian cooling is limited to $K = 12$, with $\lambda = 0.5$, and $\beta = 2$. For this particular experiment, a truncation radius of $R = 4$ was used. The number of ULMC steps per chain appears to have a larger impact on the relative conditioning than the number of chains used.

\begin{figure}[h]
    \centering
    \includegraphics[width=0.6\linewidth]{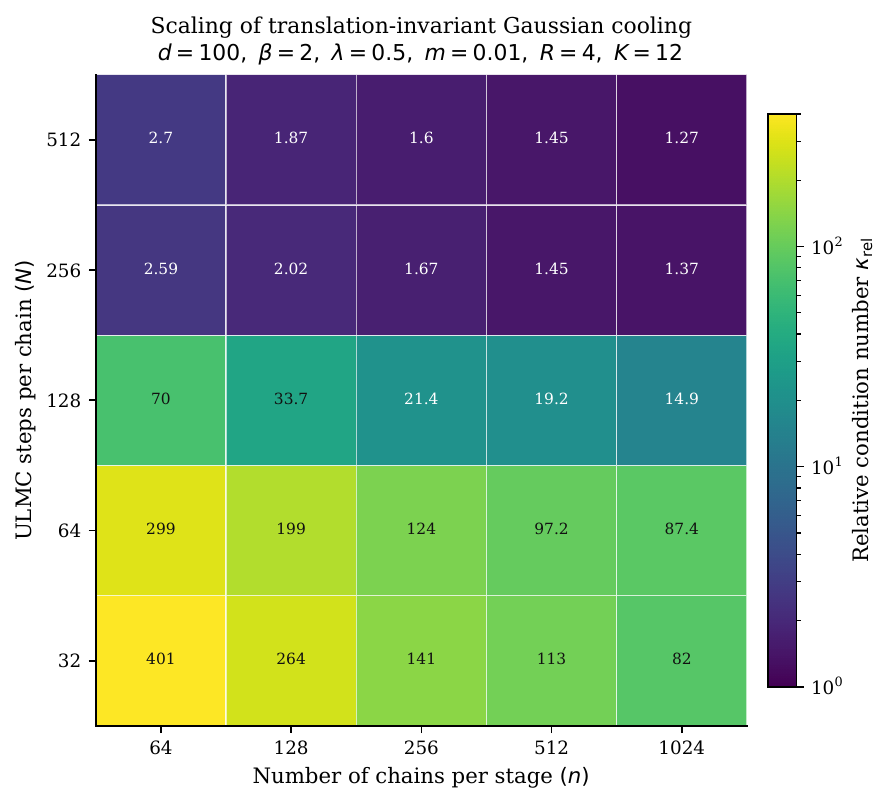}
    \caption{A comparison of translation-invariant Gaussian cooling performance across a range of $N$, the number of steps per chain, and $n$, the number of ULMC chains. For the lattice with sidelength $d= 100$ and mass $m = 0.01$, it is possible to reach nearly optimal preconditioning with relatively few chains and steps per chain.}
    \label{fig:phi4-scaling}
\end{figure}

\section{Experimental Details}
\label{app:experiment-details}
All experiments were implemented fully in JAX and run on a single NVIDIA H100 SXM GPU with 80 GB VRAM. As stated in our AI disclosure, these experiments were implemented with the aid of generative AI tools, and users that wish to modify or build upon our code are encouraged to interface with the codebase through an LLM. We provide a \href{https://github.com/jakehofgard/gaussian-cooling}{GitHub repository} that contains code for reproducing all experiments in Section~\ref{sec:num-exps}.
\begin{itemize}
    \item \texttt{README.md} contains a high-level overview of the code, including descriptions of the files and directories listed below.
    \item \texttt{gaussian\_cooling\_algs.py} contains implementations of Algorithm~\ref{alg:gaussian-cooling} for both general and translation-invariant targets. The sampler for all experiments is ULMC, as described in Appendix~\ref{app:ulmc}.
    \item \texttt{experiment\_transformed\_gaussian.py} contains the code for the experiments in Section~\ref{subsec:well-cond-transf}, where we sample from a Gaussian that has been transformed by an ill-conditioned affine map. This file contains the code for reproducing Figures~\ref{fig:transformed-gaussian-across-kappa-b} and ~\ref{fig:transformed-gaussian-large-kappa-b}.
    \item \texttt{phi4} and the files therein contain the code for the lattice $\phi^4$ experiments in Section~\ref{subsec:lattice-phi4}. In particular, \texttt{PHI4\_EXPERIMENTS.md} contains an in-depth description of all experiments used to produce the four figures in Section~\ref{subsec:lattice-phi4} and Appendix~\ref{app:additional-phi4-facts}.
\end{itemize}

To generate the ill-conditioned transformations in Section~\ref{subsec:well-cond-transf}, we begin with a sequence of geometrically spaced eigenvalues in the interval $[1 / \kappa(B), 1]$, where $\kappa(B)$ is the target condition number. Then, the eigenvectors of $B$ are produced by a sampling a Haar random orthogonal matrix, at which point $B$ can be recovered from its eigenvalues and eigenvectors. Table~\ref{tab:transformed-gaussian-parameters} contains the default parameters used in Section~\ref{subsec:well-cond-transf}, all of which are also presented in the relevant figure subtitles.

\begin{table}[h]
\centering\small
\renewcommand{\arraystretch}{1.3}
\begin{tabular}{@{}lll@{}}
\toprule
\textbf{Parameter} & \textbf{Symbol} & \textbf{Default value} \\
\midrule
Target dimension
  & $d$ & $20$ \\
Base covariance condition number
  & $\kappa_0$ & $10$ \\
Transformation condition numbers
  & $\kappa(B)$ & $\{1,10,10^2,10^3,10^4,10^5,10^6\}$ \\
Chains per stage
  & $n$ & $256$ \\
ULMC steps per stage
  & $N$ & $96$ \\
Cooling stages
  & $K$ & $12$ \\
Cooling rate
  & $\gamma$ & $0.05$ \\
ULMC friction
  & $\widetilde{\gamma}$ & $1$ \\
ULMC step size
  & $h$ & $0.06$ \\
Repeats
  & --- & $3$ \\
\bottomrule
\end{tabular}
\caption{Default parameters for the experiments in Section~\ref{subsec:well-cond-transf}.}
\label{tab:transformed-gaussian-parameters}
\end{table}

Similarly, Table~\ref{tab:phi4-experimental-parameters} contains all of default parameters for the experiments in Section~\ref{subsec:lattice-phi4} and Appendix~\ref{app:additional-phi4-facts}. For different experiments (e.g., for small mass $m = 0.001$), note that we needed to change the default parameters to obtain adequate performance. We also varied the cooling rate across experiments. Taking $\gamma$, the cooling rate, smaller pushes the performance of Gaussian cooling towards that of translation-invariant empirical preconditioning, as expected. For Figure~\ref{fig:phi4-scaling}, however, we took $\gamma = 0.35$ to verify that Gaussian cooling is still able to reduce the condition number of the target distribution to a constant near one, even if cooling is carried out slowly.

\begin{table}[h]
\centering\small
\renewcommand{\arraystretch}{1.3}
\setlength{\tabcolsep}{5pt}
\begin{tabular}{@{}lccccc@{}}
\toprule
\textbf{Parameter}
 & \textbf{Figure~\ref{fig:phi4-large-experiments}}
 & \textbf{Figure~\ref{fig:phi4-small-mass}}
 & \textbf{Figure~\ref{fig:phi4-infinite-radius}}
 & \textbf{Figure~\ref{fig:phi4-iat}}
 & \textbf{Figure~\ref{fig:phi4-scaling}} \\
\midrule
Lattice side $d$
 & $512$ & $100$ & $100$ & $100$ & $100$ \\
Mass $m$
 & $0.01$ & $0.001$ & $0.01$ & $0.01$ & $0.01$ \\
Truncation radius $R$
 & $4$ & $4$ & $\infty$ & $2$ & $4$ \\
Chains per stage $n$
 & $512$ & $2048$ & $512$ & $512$
 & $2^6$ through $2^{10}$ \\
Steps per stage $N$
 & $256$ & $2048$ & $256$ & $256$
 & $2^5$ through $2^9$ \\
Cooling stages $K$
 & $12$ & $12$ & $12$ & $12$ & $12$ \\
Cooling rate $\gamma$
 & $0.05$ & $0.05$ & $0.05$ & $0.03$ & $0.35$ \\
ULMC step size $h$
 & $0.1$ & $0.1$ & $0.1$ & $0.03$ & $0.01$ \\
Repeats
 & $1$ & $3$ & $3$ & $3$ & $5$ \\
\midrule
Reference chains $n_{\mathrm{ref}}$
 & $512$ & $2048$ & $2048$ & $512$ & $4096$ \\
Reference steps $N_{\mathrm{ref}}$
 & $256$ & $2048$ & $2048$ & $256$ & $1634$ \\
Reference step size $h_{\mathrm{ref}}$
 & $0.001$ & $0.001$ & $0.001$ & $0.03$ & $0.003$ \\
\bottomrule
\end{tabular}
\caption{Parameters for the lattice $\phi^4$ experiments in Section~\ref{subsec:lattice-phi4}. All configurations use coupling $\beta=2$, quartic coefficient $\lambda=0.5$, and ULMC friction $\widetilde{\gamma}=1$. Learning repeats denote independent preconditioner estimates. The IAT experiment selects one of its three learned preconditioners and runs one diagnostic trajectory with $1,000$ burn-in steps and $10,000$ retained samples. Reference covariances are estimated from endpoints of independent ULMC chains using the known quadratic preconditioner, mapped back to lattice coordinates and averaged over translations. To obtain the relative condition numbers presented in Figure~\ref{fig:phi4-scaling}, the reference number of steps was determined automatically as $N_{\mathrm{ref}}=\lceil T_{\mathrm{ref}}/h_{\mathrm{ref}}\rceil$, where $T_{\mathrm{ref}} = 5$ is an empirically selected target integration time. Step sizes were empirically tuned for each of the experiments.}
\label{tab:phi4-experimental-parameters}
\end{table}

\section{Additional Algorithms}
\label{app:additional-algs}
In this appendix, we elaborate on two variants of the algorithms presented in Section~\ref{sec:algorithm} above: (i) a precise description of the algorithm used for translation-invariant distributions in Algorithm~\ref{alg:gaussian-cooling-translation-invariant} and (ii) a variant of Algorithm~\ref{alg:gaussian-cooling} that does not require a symmetric factorization of the covariance from the previous stage. In both cases, $\widehat{\Sigma}_n^{(-1)} = L^{-1} I$ so that $(\widehat{\Sigma}_n^{(-1)})^{1/2} = L^{-1/2} I$.
\begin{algorithm}
\caption{Preconditioning via Gaussian Cooling, Translation-Invariant Case}
\label{alg:gaussian-cooling-translation-invariant}
\begin{algorithmic}[1]
\REQUIRE Smoothness parameter $L > 0$, global minimizer $x^* \in \R^d$, potential $U : \R^d \to \R$ with evaluation and gradient oracle, number of Gaussian cooling stages $K$, number of samples per stage $n$, number of ULMC steps per sample $N$
\ENSURE A preconditioner $\widetilde{\Sigma}_n^{(K)}$ satisfying Equation~\ref{eq:final-precond-guarantee} with high probability
\FOR{$k = 0, \ldots, K$}
    \STATE Form $\widetilde{U}_k(x) = U_k((\widetilde{\Sigma}_n^{(k-1)})^{1/2} x)$
    \STATE $x_1^{(k)}, \ldots, x_n^{(k)} \gets \mathrm{ULMC}\left(\widetilde{U}_k, (\widetilde{\Sigma}_n^{(k-1)})^{-1/2} x^*, \tilde{\gamma}, L, h, N, n\right)$ 
    \STATE $\widehat{\sigma}^{(k)}_n \gets \frac{1}{n} \sum_{i=1}^n \sF(x_i^{(k)} - \widehat{m}_n(x_1^{(k)}, \ldots, x_n^{(k)})) \odot \overline{\sF(x_i^{(k)} - \widehat{m}_n(x_1^{(k)}, \ldots, x_n^{(k)}))}$
    \STATE $\widehat{\Sigma}^{(k)}_n \gets \sF \diag(\widehat{\sigma}^{(k)}_n) \sF^*$
    \STATE $\widetilde{\Sigma}_n^{(k)} \gets (\widetilde{\Sigma}_n^{(k-1)})^{1/2} \widehat{\Sigma}_n^{(k)}(\widetilde{\Sigma}_n^{(k-1)})^{1/2}$
\ENDFOR
\RETURN $\widetilde{\Sigma}_n^{(K)}$
\end{algorithmic}
\end{algorithm}

In Algorithm~\ref{alg:gaussian-cooling-translation-invariant}, note that both $(\widehat{\Sigma}_n^{(k)})^{1/2}$ and $(\widehat{\Sigma}_n^{(k)})^{-1/2}$ can be computed in time $\tilde{O}(d)$, as opposed to the $O(d^3)$ scaling of a full Cholesky factorization of the empirical covariance formed at each stage. Indeed, we may write
\begin{align*}
    (\widehat{\Sigma}_n^{(k)})^{1/2} = \sF \diag(\widehat{\sigma}^{(k)}_n)^{1/2} \sF^*,
\end{align*}
and
\begin{align*}
    (\widehat{\Sigma}_n^{(k)})^{-1/2} = \sF \diag(\widehat{\sigma}^{(k)}_n)^{-1/2} \sF^*
\end{align*}
with square roots and inverses taken elementwise.

\subsection{Factorization-Free Cooling}
If dense Cholesky factorizations are undesirable, one may hope to obtain a version of Algorithm~\ref{alg:gaussian-cooling} that does not require symmetric factorizations of the empirical covariance at each stage, even when the target distribution is not translation-invariant. This is indeed possible, with the following algorithm motivated by ALDI \cite{GarbunoInigo2020}. In ALDI, it suffices to take the standard non-symmetric factorization of the empirical covariance:
\begin{align*}
    C_n^{(k)} = \frac{1}{\sqrt{n}} \begin{bmatrix}
    x_1^{(k)} - \widehat{m}_n(x_1^{(k)}, \ldots, x_n^{(k)}) & \ldots & x_n^{(k)} -\widehat{m}_n(x_1^{(k)}, \ldots, x_n^{(k)})
    \end{bmatrix} \in \R^{d \times n}.
\end{align*}
In turn, we have that
\begin{align*}
    C_n^{(k)} (C_n^{(k)})^\top = \frac{1}{n} \sum_{i=1}^n (x_i^{(k)} - \widehat{m}_n(x_1^{(k)}, \ldots, x_n^{(k)}))(x_i^{(k)} - \widehat{m}_n(x_1^{(k)}, \ldots, x_n^{(k)}))^\top
\end{align*}
As is the case for the empirical covariance and mean, we can also view the non-symmetric factor above as a function of samples:
\begin{align}
\label{eq:non-symm-factor}
\widehat{C}_n(x_1, \ldots, x_n) := \frac{1}{\sqrt{n}} \begin{bmatrix}
    x_1 - \widehat{m}_n(x_1, \ldots, x_n) & \ldots & x_n -\widehat{m}_n(x_1, \ldots, x_n)
    \end{bmatrix}.
\end{align}
As it turns out, with some modifications to the underdamped Langevin chain that Algorithm~\ref{alg:gaussian-cooling} utilizes, we can replace the symmetric square roots in the Gaussian cooling algorithm with their non-symmetric counterparts, greatly reducing the computational overhead of the algorithm by removing the need for full Cholesky factorizations.

Taking a step back, suppose that $A \in \R^{d \times d}$ is a matrix that admits two factorizations: $A = BB^\top$ and $A = CC^\top$, where $B \in \R^{d \times d}$ and $C\in \R^{d \times n}$ for some $n \geq d$. The underdamped Langevin diffusion in Equation~\ref{eq:uld}, with potential $\tilde{V}(x) = V(B x)$ and initial point $B^{-1} x^*$ then takes the form
\begin{align*}
    \begin{cases}
    d X_t = P_t dt, \\
    dP_t = - B^\top \nabla V(B X_t) dt - \tilde{\gamma} P_t dt + \sqrt{2 \gamma} d B_t.
    \end{cases}
\end{align*}
Now, suppose that we define processes $Y_t = B X_t$ and $Q_t = B P_t$, which satisfy
\begin{align*}
    \begin{cases}
    d Y_t = Q_t dt, \\
    d Q_t = - A \nabla V(Y_t) dt - \tilde{\gamma} Q_t dt + \sqrt{2 \gamma} B dB_t.
    \end{cases}
\end{align*}
If $W_t$ is an $n$-dimensional Brownian motion, it follows that the increments of $B B_t$ and $C W_t$ have the same law because $BB^\top = CC^\top$. Hence, we may instead consider the diffusion process
\begin{align}
\label{eq:transformed-process}
    \begin{cases}
    d Y_t = Q_t dt, \\
    d Q_t = - A \nabla V(Y_t) dt - \tilde{\gamma} Q_t dt + \sqrt{2 \gamma} C d W_t,
    \end{cases}
\end{align}
initialized at $x^* = BB^{-1} x^*$. To simulate this via the discretization in Equation~\ref{eq:ulmc-disc}, we perform the updates
\begin{align*}
    &\widehat{Y}_{(k+1) h} = \widehat{Y}_{k h} + \frac{1 - e^{-\tilde{\gamma} h }}{\tilde{\gamma}} \widehat{Q}_{kh} - \frac{\tilde{\gamma} h - 1 + e^{-\tilde{\gamma} h}}{\tilde{\gamma}^2} A \nabla V(\widehat{Y}_{kh}) + \eta_k \\
    &\widehat{Q}_{(k+1)h} = e^{-\tilde{\gamma} h}\, \widehat{Q}_{kh} - \frac{1 - e^{-\tilde{\gamma} h}}{\tilde{\gamma}} A \nabla V(\widehat{Y}_{kh}) + \xi_k,
\end{align*}
where $(\eta_k, \xi_k) \sim \sN(0, \widehat{\Sigma}_{\widetilde{\gamma}, h})$, with
\begin{align*}
    \widehat{\Sigma}_{\widetilde{\gamma}, h} := \begin{bmatrix}
    \frac{2}{\tilde{\gamma}}[h - \frac{2}{\tilde{\gamma}}(1 - \exp(-\tilde{\gamma} h)) + \frac{1}{2 \tilde{\gamma}}(1 - \exp(-2 \tilde{\gamma} h))] & \frac{1}{\tilde{\gamma}}(1 - 2 \exp(-\tilde{\gamma} h) + \exp(-2\tilde{\gamma} h)) \\
    \frac{1}{\tilde{\gamma}}(1 - 2 \exp(-\tilde{\gamma} h) + \exp(-2\tilde{\gamma} h)) & 1 - \exp(-2\tilde{\gamma} h)
    \end{bmatrix} \otimes A.
\end{align*}
To sample the noise $(\eta_k, \xi_k)$ in this formulation, factorize $\widehat{\Sigma}_{\tilde{\gamma}, h} = \widehat{R}_{\tilde{\gamma}, h} \widehat{R}_{\tilde{\gamma}, h}^\top \otimes A$. Then, if $g \sim \sN(0, I_{2n})$, we observe that
\begin{align*}
    \covar((\widehat{R}_{\tilde{\gamma}, h} \otimes C) g) = \widehat{R}_{\tilde{\gamma}, h} \widehat{R}_{\tilde{\gamma}, h}^\top \otimes CC^\top = \widehat{R}_{\tilde{\gamma}, h} \widehat{R}_{\tilde{\gamma}, h}^\top \otimes A.
\end{align*}
Hence, the noise can be sampled without ever forming a symmetric factorization of $A$. In summary, we can sample from the ULMC discretization of the process in Equation~\ref{eq:transformed-process} using the following algorithm:

\begin{algorithm}
\caption{Factorization Free Underdamped Langevin Monte Carlo}
\label{alg:ulmc-fact-free}
\begin{algorithmic}[1]
\REQUIRE Potential $V : \mathbb{R}^d \to \mathbb{R}$ with gradient oracle $\nabla V$, non-symmetric factorization $C \in \R^{d \times n}$ with $A = CC^\top$ global minimizer $x^*$, friction parameter $\tilde{\gamma} > 0$, smoothness parameter $L > 0$, step size $h > 0$, Cholesky factor $\widehat{R}_{\tilde{\gamma}, h} \in \R^{2 \times 2}$, number of iterations $N$ 
\ENSURE Sample $\widehat{Y}_{Nh}$
\STATE Draw $g_Y, g_Q \overset{i.i.d.}{\sim} \sN(0, I_n)$
\STATE $(\widehat{Y}_0, \widehat{Q}_0) \gets (x^* + L^{-1/2} C g_Y, C g_Q)$
\FOR{$m = 0, 1, \ldots, N-1$}
    \STATE $\widehat{V}_{mh} \gets \nabla V(\widehat{Y}_{mh})$
    \STATE Sample $g \sim \mathcal{N}(0, I_{2n})$
    \STATE  $(\eta_m, \xi_m) \gets (\widehat{R}_{\tilde{\gamma}, h} \otimes C) g$
    \STATE $\widehat{Y}_{(m+1)h} \gets \widehat{Y}_{mh} + \dfrac{1 - e^{-\tilde{\gamma} h}}{\tilde{\gamma}}\, \widehat{Q}_{mh} - \dfrac{\tilde{\gamma} h - 1 + e^{-\tilde{\gamma} h}}{\tilde{\gamma}^2} C C^\top \widehat{V}_{mh} + \eta_m$
    \STATE $\widehat{Q}_{(m+1)h} \gets e^{-\tilde{\gamma} h}\, \widehat{Q}_{mh} - \dfrac{1 - e^{-\tilde{\gamma} h}}{\tilde{\gamma}} C C^\top \widehat{V}_{mh} + \xi_m$
\ENDFOR
\RETURN $\widehat{Y}_{N h}$
\end{algorithmic}
\end{algorithm}
By construction, $\widehat{Y}_{Nh} \overset{d}{=} B \widehat{X}_{N h}$ so that if $y_1^{(k)}, \ldots, y_n^{(k)}$ are sampled from Algorithm~\ref{alg:ulmc-fact-free}, then $x_{i}^{(k)} := B^{-1} y_i^{(k)}$ are samples from Algorithm~\ref{alg:ulmc}. In turn,
\begin{align}
\label{eq:fact-free-emp-covariance}
    \widehat{S}_n(y_1^{(k)}, \ldots, y_n^{(k)}) = \widehat{S}_n(B x_1^{(k)}, \ldots, B x_n^{(k)}) = B \widehat{S}_n(x_1^{(k)}, \ldots, x_n^{(k)}) B^\top,
\end{align}
which is the desired output from Algorithm~\ref{alg:gaussian-cooling} at step $K$.
Importantly, we can entirely sidestep any symmetric factorizations using this approach. Below, we introduce the shorthand
\begin{align*}
    y_1, \ldots, y_n = \mathrm{ULMC_{FF}}(V, C, x^*, \tilde{\gamma}, L, h, N, n)
\end{align*}
to call Algorithm~\ref{alg:ulmc-fact-free} with the specified parameters. With this notation, we introduce Algorithm~\ref{alg:gaussian-cooling-fact-free}, which carries out Gaussian cooling without any symmetric factorizations of the empirical covariance.

\begin{algorithm}
\caption{Factorization-Free Preconditioning via Gaussian Cooling}
\label{alg:gaussian-cooling-fact-free}
\begin{algorithmic}[1]
\REQUIRE Preconditioning tolerance $\delta > 0$, smoothness parameter $L > 0$, global minimizer $x^* \in \R^d$, potential $U : \R^d \to \R$ with evaluation and gradient oracle, number of Gaussian cooling stages $K$, number of samples per stage $n$, number of ULMC steps per sample $N$
\ENSURE A preconditioner $\widehat{\Sigma}_n^{(K)}$ satisfying Equation~\ref{eq:final-precond-guarantee} with high probability
\FOR{$k = 0, \ldots, K$}
    \STATE $y_1^{(k)}, \ldots, y_n^{(k)} \gets \mathrm{ULMC_{FF}}\left(U_k, C_n^{(k-1)}, x^*, \tilde{\gamma}, L, h, N, n\right)$ 
    \STATE $C_n^{(k)} \gets \widehat{C}_n(y_1^{(k)}, \ldots, y_n^{(k)})$
\ENDFOR
\STATE $\widetilde{\Sigma}_n^{(K)} \gets C_n^{(K)} (C_n^{(K)})^\top$
\RETURN $\widetilde{\Sigma}_n^{(K)}$
\end{algorithmic}
\end{algorithm}

In the above algorithm, we take
\begin{align*}
    C_n^{(-1)} := L^{-1/2} \begin{bmatrix}
    e_1 & e_2 & \cdots & e_d & \mathbf{0}_d & \cdots & \mathbf{0}_d
    \end{bmatrix} \in \R^{d \times n},
\end{align*}
where $e_i \in \R^d$ is the $i$th standard basis vector and $\mathbf{0}_d \in \R^d$ is the vector of all zeros. Indeed, it then follows that $C_n^{(-1)} (C_n^{(-1)})^\top = L^{-1} I$, giving the desired non-symmetric factorization of $\widetilde{\Sigma}_n^{(-1)}$ from Algorithm~\ref{alg:gaussian-cooling}. 

\section{A Cautionary (Counter)example}
\label{app:cautionary-tale}
Although many distributions of interest will satisfy Equation~\ref{eq:true-conditioning}, it is well-known in the HMC community that \textit{nonlinear} correlations can stymie even the most effective mass matrix estimation procedures \cite{Carpenter2017}. This should come as no surprise, as Equation~\ref{eq:true-conditioning} imposes structure on the global conditioning of the potential, but it cannot capture local behavior of the Hessian. In turn, for any choice of $L > \mu > 0$, it is possible to construct potentials that are $\mu$-strongly log-concave and $L$-smooth but, in order for Equation~\ref{eq:true-conditioning} to hold for some $B \succ 0$, one must choose $\alpha = O(1 / \kappa)$. Unfortunately, for such potentials, Gaussian cooling appears to provide little practical benefit due to its query complexity's polynomial scaling in $\alpha$, as presented in Theorem~\ref{th:gc-query-complexity-general}.

For the sake of concreteness, we construct a simple example of such a potential here, for an arbitrary dimension $d \in \N$. For fixed $L > \mu > 0$, define
\begin{align}
\label{eq:bad-potential}
U(x) := \sum_{i=1}^d \left(\frac{L + \mu}{4} x_i^2 + \frac{L - \mu}{8} \cos(2 x_i)\right).
\end{align}
Because this potential is separable, its Hessian is diagonal, with
\begin{align*}
    \nabla^2 U(x)_{ii} = \frac{L + \mu}{2} - \frac{L - \mu}{2} \cos(2 x_i).
\end{align*}
In particular, we have that
\begin{align*}
    \mu \leq \nabla^2 U(x)_{ii} \leq L, \quad i \in [d]
\end{align*}
so that $U$ does indeed induce a $\mu$-strongly log-concave and $L$-smooth measure with condition number $\kappa = \frac{L}{\mu}$.

Define $x^{(1)} = \frac{\pi}{2} e_1$ and $x^{(2)} = \frac{\pi}{2} e_2$, where $e_i \in \R^d$ denotes the $i$th standard basis vector. By construction, $\nabla^2 U(x^{(1)})_{11} = L$ and $\nabla^2 U(x^{(1)})_{22} = \mu$, while $\nabla^2 U(x^{(2)})_{11} = \mu$ and $\nabla^2 U(x^{(2)})_{22} = L$. If
\begin{align*}
    \alpha B \preceq \nabla^2 U(x) \preceq B
\end{align*}
is to hold for some $B \succ 0$ and $\alpha \in (0, 1)$ simultaneously for all $x \in \R^d$, then we would require that $\alpha B_{11} \leq \nabla^2 U(x)_{11} \leq B_{11}$ for all $x \in \R^d$. This condition implies that $L \leq B_{11}$ so that $\alpha L \leq \alpha B_{11} \leq \mu$. In other words, $\alpha \leq \frac{\mu}{L} = \frac{1}{\kappa}$.

As the $\phi^4$ example in Section~\ref{sec:num-exps} indicates, the presence of a nonlinear term in the Hessian of the underlying potential does \textit{not} necessarily mean that Equation~\ref{eq:true-conditioning} cannot hold for some $\alpha = \Theta(1)$.
\end{document}